\newcommand{\BB}{\vspace*{-\medskipamount}}
\newcommand{\cA}{\mathcal{A}}
\newcommand{\cB}{\mathcal{B}}
\newcommand{\cE}{\mathcal{E}}
\newcommand{\cH}{\mathcal{H}}
\newcommand{\cO}{O}
\newcommand{\cP}{\mathcal{P}}
\newcommand{\cR}{\mathcal{R}}
\newcommand{\logO}{\tilde{\cO}}
\newcommand{\remove}[1]{}
\newcommand{\dk}[1]{#1}
\newcommand{\jo}[1]{{\color{blue}{#1}}}
\newcommand{\Paragraph}[1]{\BB\paragraph{#1}}
\newcommand{\polylog}{\text{polylog }}
\newlength{\pagewidth}
\newlength{\figurewidth}
\newtheorem*{rep@theorem}{\rep@title}
\newcommand{\newreptheorem}[2]{%
\newenvironment{rep#1}[1]{%
 \def\rep@title{#2 \ref{##1}}%
 \begin{rep@theorem}}%
 {\end{rep@theorem}}}
\newtheorem{theorem}{Theorem}
\newtheorem*{theorem*}{Theorem}
\newtheorem{lemma}{Lemma}
\newtheorem*{lemma*}{Lemma}
\newtheorem{corollary}{Corollary}
\newtheorem{definition}{Definition}
\newcommand{\ceil}[1]{\left \lceil #1 \right \rceil }
\newcommand{\floor}[1]{\left \lfloor #1 \right \rfloor }
\newcommand{\E}{\mathbb{E}}
\def\Prob{\mathbb{P}\mathrm{r}}
\begin{document}
\title{%
Nearly-Optimal 
Consensus Tolerating Adaptive Omissions: \\
Why is a Lot of Randomness Needed? 
}
\author{
Mohammad T. Hajiaghayi\\
University of Maryland, Maryland, USA.
\and
Dariusz R. Kowalski \\
School of Computer and Cyber Sciences,
Augusta University, Georgia, USA. 
\and
Jan Olkowski\\
\vspace*{-2ex}
University of Maryland, Maryland, USA. \\
}

\date{}





\maketitle

\begin{abstract}
We study the complexity of the problem of reaching agreement in a synchronous distributed system, also called consensus, by $n$ autonomous parties, when the communication links from/to faulty 
parties can omit messages. 
The faulty parties are selected and controlled by an adaptive, full-information, computationally unbounded adversary. 
We design a randomized algorithm that works in 
$O\left(\sqrt{n} \log^{2} n\right)$ rounds and sends 
$O\left(n^2 \log^{3}n\right)$
total number of communication bits, where the number of faulty parties can be $\Theta(n)$. When the number of faulty parties is linear in $n$, our result is simultaneously tight for both these measures within polylogarithmic factors: due to the $\Omega(n^2)$ lower bound on the number of messages send by any Monte Carlo solution, by Abraham et al. (PODC'19), and due to the $\Omega(\sqrt{n / \log{n}})$ lower bound on the number of rounds of any Las Vegas solution by Bar-Joseph and Ben-Or (PODC'98). Thereby, this work settles the landscape of the consensus problem in the omission failures model, which stood as an open question since the work of Dolev and Strong (SICOMP'83).

Additionally, we strictly quantify how much randomness is necessary and sufficient to reduce time complexity to a certain value, while keeping the communication complexity optimal wrt to polylogarithmic factors. We prove that no Monte Carlo algorithm can work in less than $\Omega\left(\frac{n^{2}}{ \max\{R, n\} \log{n}}\right)$ rounds if it uses less than $O(R)$ calls to a random source, assuming a constant fraction of all parties is faulty. This result should be contrasted with a long line of work on consensus algorithms against an {\em adversary limited to polynomial computation time}, thus unable to break cryptographic primitives, culminating in a work by Ghinea et al. (EUROCRYPT'22), where an optimal 
$O(r)$-round solution reaching
consensus with probability $1 - (c r)^{-r}$ is given. Our lower bound strictly separates these two regimes, by excluding such results if the adversary is computationally~unbounded.

On the upper bound side, we show that for $R \in \tilde{\mathcal{O}}\left(n^{3/2}\right)$ there exists a randomized algorithm solving consensus in $\tilde{\mathcal{O}}\left(\frac{n^2}{R}\right)$ rounds, with probability polynomially close to $1$ (whp), 
where tilde notation hides a poly-logarithmic factor. The communication complexity of the algorithm does not depend on the amount of randomness $R$ and stays (universally) optimal within polylogarithmic factors. As a consequence, we give a spectrum of solutions that interpolates between optimal results in the deterministic regime ($R \in O(n)$; $O(1)$ entropy per party) and the randomized regime ($R \in O(n^{3/2})$ random bits).
\end{abstract}

\section{Introduction}


In any distributed system, reaching agreement (consensus) is essential for coordinating actions of
the participating $n$ parties (also called processes), out of which up to $t$ could be faulty. The hardness of the task primarily depends on the type of fault present in the system. The classical hierarchies~\cite{Attiya-Welch-book2004, Lynch-book96} for synchronous message-passing models distinguish the following types of faults, in the order of increasing hardness: crash failures, omission failures, authenticated Byzantine failures, and Byzantine failures. In this work, we focus on the second type of failures in this list, the \textit{omission failures}. 
Precisely, we assume that a computationally unbounded and malicious adversary can observe the system during the computation and \textit{omit} an arbitrary subset of messages send to / received from selected faulty processes in an online, adaptive fashion. The adversary can also, based on the history of the 
computation, corrupt new processes if the number of corrupted stays within a fixed limit $t$.
The adversary, however, cannot see the future random bits.

Despite of huge volume of research on the performance of consensus algorithms, the proper assessment of the hardness of consensus under this type of failure has been elusive. There is no theoretical evidence that omissions failures are weaker than the next model in the hierarchy, authenticated Byzantine, even if such a hypothesis seems compelling. Compared to the weaker model of crash failures, there is a strict hardness barrier following from the existence of an algorithm using $O(n^{3/2} \log^{13/2}{n})$ messages by Hajiaghayi 
et al.~\cite{DBLP:conf/stoc/HajiaghayiKO22} (STOC'22) in the case of crash failures and the $\Omega(n^2)$ lower bound on the number of messages in the model with omission failures, proved by Dolev and Reishuk~\cite{DolevR85} (JACM'85) for deterministic solutions and by Abraham et al.~\cite{AbrahamCDNPRS19} (PODC'19) for the randomized ones, all results assuming $\Theta(n)$ faulty parties.
However, if the space of solutions is categorized based on the round complexity of a solution, even for these two models (i.e., crashes and omissions) the picture is not clear. Both models admit an $\Omega(\sqrt{n/ \log{n}} )$ lower bound for Las~Vegas solutions, due to the work of Bar-Joseph and Ben-Or~\cite{Bar-JosephB98} (PODC'98) (when the number of faulty parties is linear in the system size). In the case of crashes, the lower bound is matched by an algorithm of the same authors~\cite{Bar-JosephB98}, while in the case of omissions -- the best previous solution is 40 years old result of Dolev and Strong~\cite{DolevS83} (SICOMP'83) that works in $O(n)$ rounds! Hence, our first 
goal is to fully understand the hardness of omission failures in 
reaching~consensus:\\
\noindent
{\bf Question 1:} Is there a consensus algorithm, possibly randomized, that at the same time matches both lower bounds with respect to poly-logarithmic factor? That is, is there an algorithm that solves consensus in $\tilde{O}(\sqrt{n})$ rounds with $\tilde{O}(n^2)$ total number of sent communication bits even if a linear number of omission failures occur in the network?

We answer this question affirmatively and give a new algorithm 
that is almost-optimal\footnote{%
In this work, almost-optimal means optimal within a poly-logarithmic factor.} with respect to, simultaneously, the number of rounds {\bf\em and} the total number of sent communication bits when the number of faults is $\Theta(n)$. 

Then, we focus 
on the aspect of how much randomness is necessary to break the $\Omega(n)$ lower bound on the number of rounds for deterministic solutions~\cite{FischerL82} (again, assuming~$\Theta(n)$~faults): \\
\noindent
{\bf Question 2:} What is the impact of the amount of randomness available at processes on the efficiency of consensus algorithms? In particular, could pseudo-random generators be used~efficiently?

We quantify the number of random bits that is sufficient and necessary to achieve a given time $\in \tilde{\Omega}(\sqrt{n})$\footnote{We use tilde notation to hide a polylogarithmic~factor.}, while keeping almost-optimal communication complexity. In particular, we prove that using pseudo-random generators with small seeds may delay reaching consensus nearly quadratically, which may have severe consequences in distributed ledger implementations and distributed database applications based on consensus. Even more interestingly, our lower bounds apply to Monte Carlo algorithms. This makes a surprising distinction between a parallel line of work on consensus protocols in the case of authenticated Byzantine failures governed by a \textit{computationally-bounded} adversary~\cite{abraham2019synchronous, fitzi2003efficient, ghinea2022round, katz2006expected}, done by the cryptography community. In this model, the adversary is still adaptive, but the algorithm has a trusted setup for unique threshold signatures and an unforgeable public-key infrastructure; thus, the view of the adversary is limited to the history of the faulty processes and all messages send to them. 
Nevertheless, the adversary can adaptively corrupt new parties based on its view and has the priority over messages being delivered in the round it corrupts. In this setting, the result of Ghinea et al.~\cite{ghinea2022round} (EUROCRYPT'22) shows an algorithm that terminates in $r$ rounds with probability at least $1 - cr^{-r}$, for any $r$ and some absolute constant $c > 0$. Our lower bound excludes such results in the model with a {\em computationally unbounded adversary}, showing that not only many random bits but also strong cryptography could be necessary for some applications of consensus protocols.

\vspace*{-2ex}
\paragraph{Summary of results and the paper structure.}
As mentioned earlier, we focus on the classical message-passing model in which processes are autonomous, synchronous and can exchange messages via point-to-point communication channels. The adversary corrupts and controls at~most~$t$ processes, causes omission failures at them, and is (full-information) strongly adaptive.
In Sections~\ref{sec:overview-alg},~\ref{sec:overview-lower} and~\ref{sec:overview-randomness}, we present an extended overview of the three main results and novel techniques obtained in this work; for the sake of overview, we typically assume $t=\Theta(n)$. A summary of them can be found in Table~\ref{tab:our_results}. In Section~\ref{sec:related} we discuss the related work in more details. Section~\ref{sec:model} presents detail formal description of the model. 
Sections~\ref{sec:main-algo},~\ref{sec:proof-lower} and~\ref{sec:tradeoff-upper} contain the full and formal analysis of our main results. Section~\ref{sec:future} states major research directions opened by our work.

\begin{table}
\centering
\begin{tabular}{|c|c|c|c|c|c|}
     \hline
     &
     result & time & comm. bits & random bits & comments \\
     \hline
     \hline
     algo- 
     & 
     \underline{Thm~\ref{thm:omissions-opt-res}}
     & $O\left(\sqrt{n}\log^2{n} \right)$ & $O(n^{2} \log^{3}n)$ 
     & 
     $O\left(n^{3/2}\log^2{n} \right)$ & 
     \\
%

     rithms 
     &
    \underline{Thm~\ref{thm:trade-off-res}}
    & $O\left(\frac{n^{2}}{R}\log^2{n}\right)$ & $O\left(n^2\log^5{n}\right)$ & $O\left(R\log^2{n}\right)$ & $\forall R \in O\left(n^{3/2}\right)$
    \\
     \hline
     lower & \cite{Bar-JosephB98} & 
     $\Omega\left(\frac{t}{\sqrt{n\log n}} \right)$ 
     & - & - & correct prob. $= 1$\\
     bounds & \cite{AbrahamCDNPRS19} & - & $\Omega\left(\epsilon t^2\right)$ & - & correct prob. $\ge \frac{3}{4} + \epsilon$\\
      & 
     \underline{Thm~\ref{thm:lower-randomness-res}}
     & $T$ & - & $R$ & $T\times (R+T) = \Omega\left(\frac{t^2}{\log{n}}\right)$ \\
     & 
     
     &  &  &  & correct prob. $\ge 1 - \frac{1}{n^{3/2}}$ \\
     \hline
\end{tabular}
\caption{Our main results presented for three metrics: time complexity, total number of used communication bits and total number of used random bits. All our algorithms are subject to their complexity bounds whp, i.e., with probability polynomially close to $1$, see Section~\ref{sec:model}.
$T$ and $R$ are random variables denoting, respectively, the number of rounds and the cumulative number of random bits used by all processes in a run of an algorithm. The ``correct prob.'' denotes the probability of correctness of the class of algorithms to which the lower bound applies. The new results are underlined in column ``result''. Extended versions of Theorems~\ref{thm:omissions-opt-res} and \ref{thm:trade-off-res}, with explicit dependency on the parameter $t$, are given in Sections~\ref{sec:main-algo} and \ref{sec:tradeoff-upper}, resp.}
\label{tab:our_results}
\end{table}

\section{Model Details and Definitions}
\label{sec:model}

We consider the classical synchronous message-passing distributed system with omission faults, 
cf.,~\cite{Attiya-Welch-book2004,Lynch-book96, raipin2010strongly}.
The system contains $n$ processes, also called parties. Each process has a unique ID in $\cP=[n]=\{1,\ldots,n\}$. For simplicity, we will use $p$ to refer to a process with ID $p \in \cP$. Both $\cP$ and $n$ are known to all process. 
Processes operate in synchronized rounds. Without loss of generality,\footnote{See~\cite{Attiya-Welch-book2004, Bar-JosephB98} for the discussion on generality of our assumptions and related settings -- crash and Byzantine faults.}
we assume that each round consists of the following two phases: 
\begin{enumerate}
    \item \textbf{Local computation phase}: Each process performs a local computation (i.e., autonomously from other processes) in order to change its state. The computation can be any function of the current state, of all messages received prior to this phase, and of a sequence of uniformly distributed and independent random bits of an arbitrary, but finite, length that can be reached by a process at the beginning of the phase. More precisely, for the last 
    parameter of the function, we assume that there exists a random source that, when called, can provide a process and its state-changing function with 
    a 0-1 sequence, of requested length, containing uniform and independent distributed random bits.
    \item \textbf{Communication phase}: During this phase, each process can send messages to any other processes in the system. The content of a message is a function of the current state computed in the preceding local computation phase, and is not limited by the model. In particular, the length and content of messages is not restricted by the model, although our algorithms are designed in a way to use short messages. Each message sent is delivered to its destination at the end of the same phase, and is ready to be processed in the next round, unless an omission failure occurs.
\end{enumerate}
\paragraph{Processes' omission failures and adversaries.}
Not all processes are reliable. Up to some $t = O(n)$ processes may become \textit{(omission)} faulty during the execution -- once a process becomes faulty, it stays faulty through the end of computation and some of its incoming/outgoing messages could be lost. We assume that $t$ is a part of the problem input and thus 
known up-front to
all processes. The decision which process becomes faulty and when, as well as the control over the faulty processes, is governed by an adversary. We consider an adaptive full-power full-information~adversary~that 
\begin{itemize}[leftmargin = 1.5em]
    \item  has unlimited computational power, knows the algorithm and input parameters and can see the states (and thus also the current random bits used) of all processes, as well as 
    the content of all arriving messages, 
    at any time, and
    \item can select online which (non-faulty) processes to fail and when, and with respect to faulty  processes -- it  
    can \textit{omit}
    any subset of messages incoming/outgoing to/from the faulty processes (i.e., such messages are not delivered to their destinations, having the same effect as no~message~sent).
\end{itemize}
Consequently, an {\em adversarial strategy} is a deterministic function, which assigns to each possible history that may occur in any execution some adversarial action for the subsequent \textit{phase} of the execution, i.e., which processes to fail in that phase and in which moment and which messages sent by/to them would reach their destinations. Note, that in the above definition, the adversary has flexibility to adapt its actions between any two phases of the algorithm (not only between consecutive rounds).
In the remainder, we will be referring to this adversary  simply by {\em adaptive~adversary}.

We remark that crash failures of processes can be viewed as omission failures -- the adversary simply bans {\em all} their incoming and outgoing messages after the failure round
(while in the round of a crash, the adversary could allow any subset of outgoing messages to reach their destinations).\footnote{%
In case of crashes, incoming messages are not relevant, because the faulty process could not influence correct processes any more and it is not required from them to satisfy any of the consensus properties.}

\paragraph{Consensus problem.}
A randomized algorithm for processes $\cP = \{1, \ldots, n \}$, 
where each process~$p \in \cP$ holds initial input $b_p \in \{0, 1\}$, is a consensus protocol tolerating $t$ faulty processes if all the three following conditions hold with probability $1$
in the presence of an adaptive adversary failing/controlling at most $t$ processes:
\begin{description}
\item
\textbf{Agreement.} All non-faulty 
processes output the same value.
\item
\textbf{Validity.} If all 
non-faulty processes begin with the same input value $b$, then all of them
output $b$.
\item
\textbf{Termination.} Non-faulty processes have to decide and terminate.
\end{description}
Consider a randomized consensus algorithm against a fixed adversarial strategy. The following metrics determine the quality of the execution against that strategy:
\vspace{-0.05in}
\begin{itemize}
\item \noindent\emph{Time of an execution} of the algorithm is defined as the smallest number $\tau_{1}$ such that the number of rounds that occur by termination of {\em the last non-faulty process} is at most $\tau_{1}$; 
\item \noindent \emph{The number of communication bits in an execution} of the algorithm is the smallest number $\tau_{2}$ such that the total number of bits sent by {\em all processes} in point-to-point messages by termination of the last non-faulty~process is at most $\tau_{2}$;
\item  \noindent\emph{Randomness of an execution} of the algorithm is defined as the smallest number $\tau_{3}$ such that the number of (independent and uniform) random bits used by {\em all processes} by termination of the last non-faulty~process is at most $\tau_{3}$; when describing the lower bound result, we abuse the notation slightly, and define randomness of an execution as the total number of times when processes access their random sources during the local computation phase. Complying to the definition of the local computation phase, in each such access a process can use a sequence of random bits of finite length. Observe that such definition makes a lower bound result even stronger.
\end{itemize}

We define {\em time/communication/randomness complexity of a (distributed) algorithm} 
as a supremum of time, the number of communication bits, and the number of random bits, respectively, taken over all adversarial strategies.\footnote{%
Note that the supremum for different measures could be achieved by sequences of different strategies -- nevertheless, each of the complexities of our solutions is still close to optimal.}
Finally, {\em time/communication/randomness complexity of a distributed problem} is an infimum of all algorithms' time/communication/randomness complexities,~respectively.
In our paper, we are interested in studying almost-optimal solutions, i.e., optimal within a polylogarithmic factor, wrt the abovementioned complexities. The bounds on the complexities should hold with high probability.
We say that a random event occurs \emph{with high probability}
({\em whp} for short), if its probability could be made $1-\cO(n^{-c})$ for any
positive constant $c$ by linear scaling of parameters of the considered random process.\footnote{We would like to note that, after ignoring polylogarithimic factors, all our upper bounds hold with the same asymptotic complexities in even stronger regime where the probabilities are of form $1 - O(n^{\omega(1)})$.}


\section{Main consensus algorithm}
\label{sec:overview-alg}



Our first and main result is a new consensus algorithm which is almost-optimal with respect to time complexity, bit complexity and randomness complexity, if the number of faulty parties is $\Theta(n)$.

\begin{theorem}
\label{thm:omissions-opt-res}
There is a randomized algorithm solving consensus with probability~$1$ against the adaptive omission adversary that can control $t < \frac{n}{30}$ processes, 
which terminates in $O\left(\sqrt{n}\log^{2}{n}\right)$ rounds and uses $O\big(n^2\log^{3}n\big)$ bits of communication and $O\left(n \cdot \sqrt{n}\log^{2}{n}\right)$ random bits, whp.
\end{theorem}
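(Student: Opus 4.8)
\emph{Overall plan.} The plan is to run a Bar-Joseph--Ben-Or-style randomized consensus loop in which each \emph{phase} takes $O(\mathrm{polylog}\, n)$ rounds, and to keep the total communication at $O(n^2\,\mathrm{polylog}\, n)$ by replacing the usual all-to-all exchanges inside a phase with a sparse, omission-resilient dissemination primitive. Every non-faulty process $p$ keeps an estimate $v_p\in\{0,1\}$, and a phase does: (i) $p$ runs an aggregation subroutine returning approximate counts of the $0$'s and $1$'s currently held by non-faulty processes, together with a proposal $w_p\in\{0,1,\bot\}$ in the style of graded broadcast; (ii) if $w_p=v\in\{0,1\}$ and the count of $v$ is overwhelming (a constant fraction above $n/2$, which is where the slack $t<n/30$ is first used), $p$ decides $v$, relays ``decided $v$'' for two further phases and halts; (iii) else if $w_p=v\neq\bot$, $p$ keeps $v_p:=v$; (iv) else $p$ sets $v_p$ to a common coin generated for that phase. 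I would first prove Agreement and Validity \emph{deterministically} (they must hold with probability $1$): by threshold bookkeeping in the spirit of the classical omission-model arguments — using that an omission-faulty process only ever sends protocol-correct, if incomplete, messages, so every value a non-faulty process observes was genuinely held — one shows that if some non-faulty process decides $v$ in phase $j$ then every non-faulty process has $v_p=v$ from phase $j$ on and decides $v$ by phase $j+2$, and that any decided value is some process's input. It is essential that a failure of the randomized dissemination only slows termination and can never violate safety.

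\emph{The dissemination primitive and the resource budgets.} Inside a phase the aggregate counts, the graded-broadcast data and the coin shares are spread by an epidemic protocol on a random graph in which each process contacts $\tilde O(\sqrt n)$ partners with $\tilde O(1)$-bit messages, the graph being derived by all processes from a short shared seed (itself produced by the coin mechanism). This gives $\tilde O(n^{3/2})$ communication bits and, since the graph costs essentially no private randomness, $\tilde O(n)$ random bits per phase; over $\tilde O(\sqrt n)$ phases these become $\tilde O(n^2)$ bits and $\tilde O(n^{3/2})$ random bits, matching the theorem. The structural lemma I would prove is that against an adaptive omission adversary controlling at most $t<n/30$ processes, with high probability in \emph{every} phase the subgraph induced by the non-faulty processes, after deleting every edge on which the adversary omits the message, still expands well enough for a value held by an $\Omega(1)$-fraction of non-faulty processes to reach almost all of them within $O(\log n)$ rounds; the constant $1/30$ is chosen so that the $\ge 29n/30$ non-faulty processes, each retaining $\Omega(\sqrt n)$ non-faulty neighbours, still form an expander, whence \emph{fragmenting} the non-faulty processes into mutually disagreeing views forces the adversary to fail $\tilde\Omega(\sqrt n)$ processes. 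I would also exploit that omission faults are \emph{detectable}: a non-faulty process that expected but did not receive an awaited message learns that the sender or link is faulty, and by gossiping these ``blacklists'' the processes exclude presumed-faulty processes from the coin, so that suppressing a coin contribution forces the adversary to fail a \emph{fresh} process in the current phase rather than reusing an old one.

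\emph{Bounding the number of phases, and the main obstacle.} It remains to show that whp the loop halts within $\tilde O(\sqrt n)$ phases. A phase in which all non-faulty $v_p$ become equal is followed by a phase in which everyone decides; so in every phase it wishes to survive the adversary must prevent the non-faulty processes from reaching a common view — which is what would let them jointly adopt the coin and converge. By the structural and detectability lemmas, doing this costs the adversary $\tilde\Omega(\sqrt n)$ units of its failure budget \emph{in that phase}; since the total budget is $t<n/30=\Theta(n)$, only $\tilde O(\sqrt n)$ phases can be attacked — quantitatively mirroring the $\Omega(\sqrt{n/\log n})$ lower bound of Bar-Joseph and Ben-Or. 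For a high-probability (not merely expected) bound I would use that the adversary's failure budget is physically exhausted after $\tilde O(\sqrt n)$ phases and that it cannot see future random bits, so from then on the coins are fair and independent of its choices and the loop converges within $O(\log n)$ further phases whp; multiplying $\tilde O(\sqrt n)$ phases by $O(\mathrm{polylog}\, n)$ rounds each gives the $O(\sqrt n\log^2 n)$ round bound, and combining with the per-phase costs gives the $O(n^2\log^3 n)$ bit and $O(n\sqrt n\log^2 n)$ randomness bounds. The hard part, I expect, is exactly the omission-specific structural lemma: quantifying how lossy-but-honest, only approximately detectable faulty processes degrade both the expansion of the dissemination graph and the adversary's per-phase leverage, and checking that a single constant $t<n/30$ makes the graph expansion, the graded-broadcast threshold gaps and the $\Omega(n)$-honest-vote count for the coin all hold simultaneously throughout the execution.
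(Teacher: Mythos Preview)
Your overall architecture—a Bar-Joseph--Ben-Or loop of $\tilde O(\sqrt n)$ phases, each costing $\tilde O(n^{3/2})$ bits via sparse dissemination, with a budget argument that the adversary must ``spend'' $\tilde\Omega(\sqrt n)$ per phase it disrupts—matches the paper. But two load-bearing pieces of your plan do not survive the omission model as stated.

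First, the blacklisting mechanism. You write that omission faults are detectable because a process that misses an expected message ``learns that the sender or link is faulty'', and that gossiping blacklists forces the adversary to burn a \emph{fresh} process each phase. This fails: a process $p$ cannot tell whether a missing message is a send-omission at $q$ or a receive-omission at $p$ itself ($p$ need not know it is faulty). Worse, a faulty $p$ can gossip a blacklist naming a non-faulty $q$; if others act on gossiped blacklists you get cascading exclusions of correct processes, and if they do not, the blacklists buy nothing globally. The paper's substitute is its central new idea: each process keeps only a \emph{local, monotone} status bit (\emph{operative}/\emph{inoperative}), flipping itself inoperative when it receives too few messages from a \emph{predetermined} $O(\log n)$-degree graph neighborhood. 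One then proves that (i) at least $n-3t$ processes remain operative regardless of the adversary, and (ii) in any epoch in which at most $\sqrt n$ processes \emph{newly} become inoperative, the operative processes' counts of $0$'s and $1$'s agree to within $\sqrt n$. The budget argument is over operative$\to$inoperative transitions, of which there are at most $3t$ total, so by pigeonhole $\Omega(\log n)$ of the $O(\frac{t}{\sqrt n}\log n)$ epochs are ``good''.

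Second, the common coin. Against a full-information, computationally unbounded, adaptive adversary, building a common coin is essentially the whole problem; you also cannot ``derive the graph from a short shared seed produced by the coin mechanism'' (this is circular, and a PRG seed is useless against an unbounded adversary). The paper, like Bar-Joseph--Ben-Or, uses only \emph{private} one-bit coins: an operative process whose observed fraction of $1$'s lies in the middle band sets $b_p$ to a fresh private bit, and convergence follows from anti-concentration (the sum of $\Theta(n)$ independent bits deviates by $\Theta(\sqrt n)$ with constant probability), which is precisely why the $\sqrt n$-accuracy of the counts in a good epoch suffices. The per-epoch counting itself uses a second structural idea you do not have: a fixed partition into $\sqrt n$ groups of size $\sqrt n$, a binary-tree aggregation inside each group (costing $O(n\,\mathrm{polylog}\,n)$ bits per group), and then inter-group spreading of the $\sqrt n$ group-counts over the predetermined $O(\log n)$-degree graph. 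This is what makes each epoch $O(\log n)$ rounds and $O(n^{3/2}\,\mathrm{polylog}\,n)$ bits without any shared randomness in the communication structure.
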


For the ease of presentation, and for the sake of space constraints, in this section, we assume that $t = \frac{n}{30} - 1$ and we provide a more high-level overview of techniques used to obtain the result. A self-contained and fully formal derivation of this theorem, incorporating the upper bound $t$ on the number of faulty processes into the time and communication complexities and containing all the omitted proofs, is deferred to
Appendix~\ref{sec:main-algo}, 
where the above Theorem is restated as Theorem~\ref{thm:main-algo}.

In the case when $t = \Theta(n)$, the almost-optimality of the running time follows from the $\Omega\left(\sqrt{n / \log{n}} \right)$ lower bound showed in~\cite{Bar-JosephB98}. The almost-optimality of the communication bit complexity is due to the result of Abraham et al.~\cite{AbrahamCDNPRS19} who showed that any randomized algorithm solving consensus with at least a constant positive probability against the adaptive omission-causing adversary requires $\Omega(n^2)$ messages (each message carries at least one bit). The almost-optimality of the randomness complexity follows from Theorem~\ref{thm:lower-randomness-res} presented in the later part of the paper. We next give an overview of the algorithm. The pseudocode can be found in Algorithm~\ref{alg:opt-omissions}.

\begin{algorithm}[t!]
\SetAlFnt{\tiny}
\SetAlgoLined
\SetKwInput{Input}{input}
\Input{$\cP$, $p$, $b_p$, t}
$\texttt{operative}_{p} \leftarrow true$, $\texttt{decided}_{p} \leftarrow false$\;\label{line:operative-init}
$V_{p} \leftarrow$ a set of neighbors of $p$ in a predetermined graph $G$ guaranteed by Theorem~\ref{thm:random-graph-properties}\;\label{line:graph_sampling}
$W_{1}, \ldots, W_{\ceil{\sqrt{n}}} \leftarrow $ a pre-defined partition of $\cP$ into 
$\ceil{\sqrt{n}}$ disjoint sets of size $\le \ceil{\sqrt{n}}$ each\;\label{line:sqrt-part}
let $\ell$ be such that $p\in W_\ell$\;\label{line:group}
\For{$\frac{t}{\sqrt{n}}\log n$ epochs\label{line:main-for-begin}}
{
$\texttt{g\_ones}_{p}, \texttt{g\_zeros}_{p}, \texttt{operative}_{p} \leftarrow \textsc{GroupBitsAggregation}(W_{\ell}, p,  \texttt{operative}_{p}; b_{p})$\label{line:group-bits-aggr}\;
\lIf{$\texttt{operative}_{p} = false$}{stay idle until the end of the epoch}

\BlankLine
$\texttt{ones}_{p}, \texttt{zeros}_{p},\texttt{operative}_{p} \leftarrow \textsc{GroupBitsSpreading}(V_{p}, p, \ell, \texttt{operative}_{p}; \texttt{g\_ones}_{p}, \texttt{g\_zeros}_{p})$\label{line:sum_ones_zeros}\;

\BlankLine
\lIf{$\texttt{ones}_{p} > \frac{18}{30}(\texttt{ones}_{p} + \texttt{zeros}_{p})$}{$b_p \leftarrow 1$}\label{line:if-1} 
\lElseIf{$\texttt{ones}_{p} < \frac{15}{30} (\texttt{ones}_{p} + \texttt{zeros}_{p})$}{$b_p \leftarrow 0$}\label{line:if-0}
\lElse{set $b_p$ to $0$ or $1$ uniformly at random}\label{line:if-random}

\BlankLine
\lIf{$\texttt{ones}_{p} > \frac{27}{30}(\texttt{ones}_{p} + \texttt{zeros}_{p})$ or $\texttt{ones}_{p} < \frac{3}{30} (\texttt{ones}_{p} + \texttt{zeros}_{p})$}{$\texttt{decided}_{p} \leftarrow true$}\label{line:if-1-r}

}\label{line:main-for-end}

\lIf{$\texttt{operative}_{p} = true$ and $\texttt{decided}_{p} = true$}{\textbf{send} $b_{p}$ to all processes in $\cP$\label{line:good-spread}}
\lElseIf{any message $b_{q}$ received from some process $q$}{
$b_{p} \leftarrow b_{q}$\label{line:final-receiv-1}}
\tcc{in the above, $q$ can be chosen arbitrarily from the received messages}
\BlankLine
\lIf{$\texttt{decided}_{p} = true$ or $(\texttt{operative}_{p} = false$ and $p$ received a message in the previous round$)$}{\textbf{decide} $b_{p}$\label{line:if-decided}}
\Else{
\lIf{$\texttt{operative}_{p} = true$}{$p$ participates in the deterministic synchronous Consensus algorithm given in Theorem~4 in~\cite{DolevS83} with the input bit $b_{p}$; if $p$ reaches agreement in that protocol, it broadcasts the decision to all processes in $\cP$ and it decides on the algorithm's decision\label{line:spread-2}}
\lElse{$p$ remains idle until a decision is sent to it; upon receiving a decision, it decides on this value\label{line:final-receiv-2}}
}\label{line:fixing-und}
\caption{\textsc{OptimalOmissionsConsensus}}\label{alg:opt-omissions}
\end{algorithm}

\paragraph{Universal idea: Local and dynamic partitioning of processes into operative / inoperative and implementing time- and communication-efficient biased-majority-voting 
only by the operative ones.}
We introduce a new partitioning of processes into operative and inoperative, 
based on the communication received by each process from a certain pre-defined set of other processes which maintain their operative status (this set may vary, depending on what procedure is executed -- it will be emphasized later). 
This partition is not equivalent to the standard classification into faulty / non-faulty ones. 
With our partition,
we can guarantee that faulty processes either communicate well enough to contribute to the progress towards a unified decision (i.e., stay operative) or become excluded from the set of operative processes, having no impact on the final decision. Our partition also avoid a major performance problem in omission-tolerant or Byzantine-tolerant computation -- identifying a single faulty process may require at least quadratic number of messages, cf.,~\cite{AbrahamCDNPRS19}, which makes it fast, local and incorporated in efficient communication~schedules.

Then, we employ the idea of reaching consensus by applying
the biased-majority-voting rule, as proposed in~\cite{Bar-JosephB98}, but with a novel twist -- only the operative processes implement the vote protocol to agree on a consensus decision. It uses $O(\sqrt{n}\log{n})$ repetitions of the single vote subroutine, called {\em epochs} in the pseudo-code, see lines~\ref{line:group-bits-aggr}-\ref{line:if-1-r} in Algorithm~\ref{alg:opt-omissions}. Each repetition/epoch consists of $O\left(\log n\right)$ rounds of communication-efficient counting, see the description below), however it succeeds in unifying the votes only if the number of newly failed processes is $O(\sqrt{n})$ and only with a constant probability (this is why we need $O(\sqrt{n}\log{n})$ epochs).
Only after the part implementing the biased-majority-voting rule ends, the operative processes communicate the decision to the remaining parties. 
We first describe how we implement a single epoch,
based on two technical advancements, and conclude with more details on how the overall consensus protocol (based on the biased-majority-voting rule) is designed.

\Paragraph{An implementation of a single biased-majority-vote subroutine (epoch).}
In our case, a single epoch (i.e., a single repetition of the biased-majority-vote subroutine, lines~\ref{line:group-bits-aggr}-\ref{line:if-1-r}) heavily relies on 
{\bf counting}, collaboratively by every operative process,
the number of operative processes that have candidate decision value $0$ and, separately, value $1$. These numbers must be approximate, up to an additive factor linearly dependent on the number of processes that become inoperative, as the operative status may change dynamically -- some processes can lose it before the calculation finishes and, in consequence, their candidate values might not be properly counted by others. Moreover, this calculation has to consider the fact that some operative processes can be 
controlled by adversary, thus it must, regardless, exploit the property that the operative processes communicate with enough other processes. A protocol performing
this calculation in $O(\log{n})$ rounds and using $O(n^{3/2} \log^2{n})$ communication bits, in total, is the main technical advancement of this algorithm and we present it next in a form of {\bf\em two technical advancements.}

\Paragraph{{\bf\em Technical advancement 1: $\sqrt{n}$-decomposition into groups and binary-tree-like intra-group calculations of operative processes for communication saving.}}
As mentioned above, there are some inherit difficulties in the omission failure model that complicate the time- and communication-efficient counting of the number of candidate values (i.e., votes) $0$ and $1$ among the operative processes. Here, we present the techniques we use to mitigate them. We pre-define fixed partition of the set of processes into $\ceil{\sqrt{n}}$ {\em groups} of size $\floor{\sqrt{n}}$ or $\ceil{\sqrt{n}}$ each (see line~\ref{line:sqrt-part} of Algorithm~\ref{alg:opt-omissions} and example in Figure~\ref{fig:group-partition}), and first require the operative processes to count the number of operative $0$'s and $1$'s only within the groups (executed procedure \textsc{GroupBitsAggregation} in line~\ref{line:group-bits-aggr}). In this part, we use a virtual sparse data structure on some subsets of processes, structured into a balanced binary tree, which is used to aggregate the counts.

\begin{figure}
\centering
\begin{tikzpicture}[scale=0.7, transform shape]
    \foreach \x in {0,...,5} {
        \foreach \y in {0,...,4} {
            \node[circle, draw, minimum size=0.6cm] (\x\y) at (\x, \y) {};
        }
    }
    \node[draw=red, rounded corners, line width=0.5mm, fit=(00) (04), label={[xshift=0.0cm, yshift=0.0cm, font=\small]$W_{1}$}] {};

    \node[draw=blue, rounded corners, line width=0.5mm, fit=(10) (14), label={[xshift=0.0cm, yshift=0.0cm, font=\small]$W_{2}$}] {};
    \node[] at (14) {a};
    \node[] at (13) {b};
    \node[] at (12) {c};
    \node[] at (11) {d};
    \node[] at (10) {e};
    
    \node[draw=purple, rounded corners, line width=0.5mm, fit=(20) (24), label={[xshift=0.0cm, yshift=0.0cm, font=\small]$W_{3}$}] {};
    
    \node[draw=orange, rounded corners, line width=0.5mm, fit=(30) (34), label={[xshift=0.0cm, yshift=0.0cm, font=\small]$W_{4}$}] {};
    
    \node[draw=cyan, rounded corners, line width=0.5mm, fit=(40) (44), label={[xshift=0.0cm, yshift=0.0cm, font=\small]$W_{5}$}] {};

    \node[draw=black, rounded corners, line width=0.5mm, fit=(50) (54), label={[xshift=0.0cm, yshift=0.0cm, font=\small]$W_{6}$}] {};

    \draw (00) -- (01);
    \draw (00) -- (10);
    \draw (01) -- (11);
    \draw (01) -- (02);
    \draw (02) -- (12);
    \draw (02) -- (03);
    \draw (03) -- (13);
    \draw (03) -- (04);
    \draw (04) -- (14);
    \draw (10) -- (11);
    \draw (10) -- (20);
    \draw (11) -- (12);
    \draw (11) -- (21);
    \draw (12) -- (13);
    \draw (12) -- (22);
    \draw (13) -- (14);
    \draw (13) -- (23);
    \draw (14) -- (24);
    \draw (20) -- (21);
    \draw (20) -- (30);
    \draw (21) -- (22);
    \draw (21) -- (31);
    \draw (22) -- (23);
    \draw (22) -- (32);
    \draw (23) -- (24);
    \draw (23) -- (33);
    \draw (24) -- (34);
    \draw (30) -- (31);
    \draw (30) -- (40);
    \draw (31) -- (32);
    \draw (31) -- (41);
    \draw (32) -- (33);
    \draw (32) -- (42);
    \draw (33) -- (34);
    \draw (33) -- (43);
    \draw (34) -- (44);
    \draw (40) -- (41);
    \draw (41) -- (42);
    \draw (42) -- (43);
    \draw (43) -- (44);
    \draw (00) -- (11);
    \draw (02) -- (13);
    \draw (04) -- (11);
    \draw (14) -- (21);
    \draw (23) -- (30);
    \draw (41) -- (53);
    \draw (43) -- (52);
    \draw (44) -- (54);
    \draw (00) -- (22);
    \draw (11) -- (33);
    \draw (22) -- (44);
    \draw (10) -- (32);
    \draw (03) -- (41);
    \draw (12) -- (24);

    \draw (50) -- (51);
    \draw (51) -- (52);
    \draw (52) -- (53);
    \draw (53) -- (54);
    \draw (51) -- (43);
    \draw (53) -- (42);
    \draw (54) -- (44);

\end{tikzpicture}
\caption{A schematic picture of two different techniques used for communication between processes. Different colors represent different groups in the $\sqrt{n}$-decomposition of the processes. The links represent the overlaying communication resembling a sparse random graph used for exchanging operative counts of different groups. The choice of links is independent of the $\sqrt{n}$-decomposition.\label{fig:group-partition}}
\end{figure}

Vertices of the binary tree correspond to specific subsets of processes in the group: leaves of the binary tree are singletons in the group, and
each vertex in a higher layer corresponds to the union of the subsets that are already identified with the children of that vertex in the tree.
The root of the binary tree corresponds to the set of all processes in the group. Processes calculate the number of operative $0$'s and $1$'s, called {\em operative counts}, starting from 
the leaves of the tree (i.e., singletons) and then keep moving up the tree. 
At each vertex in a higher layer, the processes in
the subset corresponding to that vertex work together to relay and sum up the operative counts from the lower layer. 
In this relay-and-aggregation procedure, all operative processes in the group exchange messages, not only to relay and aggregate values from children to parents in the virtual tree, but also to keep track who remains operative
in the whole group. More precisely,
if a process receives information from less than half of the other processes from the group, during the procedure of 
relaying and aggregating the operative counts of $0$'s and $1$'s from the lower layer, it becomes inoperative.
See example in Figure~\ref{fig:single-group-tree}.
After $O(\log{n})$ rounds, corresponding to the height of the binary tree,  the operative counts for the entire group -- corresponding to the root of the tree -- can be calculated. Complying with the rule of receiving enough number of messages in order to maintain an operative status, only the processes of the group that remain operative use the operative counts further in the protocol. In the analysis, we will be able to prove that, regardless of the omission failures' pattern, there is a group of $\Theta(n)$ operative processes whose counts of operative $0$'s and $1$'s differ by, at most, the number of processes who have become inoperative. Taking the advantage of the binary-tree-structured communication, we can guarantee that processes of each group exchange at most $\logO(n)$ bits in total and that the procedure of operative counting $0$'s and $1$'s takes only $O\left(\log{n}\right)$ rounds. Summarizing, we prove the following result about a single execution of the procedure $\textsc{GroupBitsAggregation}$. Detailed description can be found in Appendix~\ref{subsec:comm-patterns} and in Algorithm~\ref{alg:bits-agreg}; the formal analysis is in Appendix~\ref{subsec:analysis-main}.

\begin{lemma*}[Lemma~\ref{lem:bits-agg-contr} and~\ref{lem:msg-aggr} in Appendix~\ref{subsec:analysis-main}]
A single execution of the procedure \textsc{GroupBitsSpreading} works in $O(\log{n})$ rounds and guarantees that every operative process in a single group knows an approximate number of other operative processes in the group having candidate value $0$ and $1$. 
The numbers in different operative processes differ by at most the number of processes of the group that became inoperative during the execution of the procedure.
Processes in a single group use at most $O(n\log^2{n})$ bits of communication, in total, during the execution.
\end{lemma*}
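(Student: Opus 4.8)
The plan is to establish the three assertions — the $O(\log n)$ round bound, the $O(n\log^2 n)$ communication bound, and the approximation-plus-consistency property of the operative counts — by analyzing the ascent, layer by layer, of the virtual balanced binary tree $\mathcal{T}$ that the procedure builds on the group $W_\ell$. Since $|W_\ell|\le\lceil\sqrt n\rceil$, the tree has depth $O(\log n)$, and each tree layer is processed in a constant number of all-to-all rounds inside $W_\ell$, which immediately yields the $O(\log n)$ round bound. For the bit bound, in each of the $O(\log n)$ layers every still-operative process of $W_\ell$ sends to each of the at most $\sqrt n$ members of its group a single $O(\log n)$-bit message (its two current operative counts, tagged with the identifier of the tree node it is currently reporting on), so each layer costs $O(\sqrt n\cdot\sqrt n\cdot\log n)=O(n\log n)$ bits and the whole procedure $O(n\log^2 n)$ bits.

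The substance is an invariant proved by induction on the layer index $i$. For a node $v$ of $\mathcal{T}$ at layer $i$ write $S_v\subseteq W_\ell$ for the associated block, let $N^b_v$ (for $b\in\{0,1\}$) be the true number of processes of $S_v$ with candidate value $b$ that are still operative when layer $i$ ends, and let $D_v^{(i)}$ be the number of processes of $S_v$ that turned inoperative during layers $1,\dots,i$. I would show that, once layer $i$ finishes, every process $p\in S_v$ that is still operative holds estimates $\widehat N^0_v(p),\widehat N^1_v(p)$ with (i) $\widehat N^b_v(p)\in[\,N^b_v,\;N^b_v+D_v^{(i)}\,]$ and (ii) $|\widehat N^b_v(p)-\widehat N^b_v(p')|\le D_v^{(i)}$ for every other still-operative $p'\in S_v$. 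The base case $i=0$ is immediate: a leaf is a singleton, its owner knows its own status and value exactly, and $D_v^{(0)}=0$. For the step, a layer-$i$ node $v$ has children $v_L,v_R$ with $S_v=S_{v_L}\sqcup S_{v_R}$; a still-operative $p\in S_v$, say with $p\in S_{v_L}$, forms $\widehat N^b_v(p)$ as its own child estimate $\widehat N^b_{v_L}(p)$ plus the estimate for $v_R$ that it recovers from the messages of this layer. Two such processes can disagree on $v_R$'s contribution only because of processes of $S_{v_R}$ that became inoperative during this very layer — such a process is still reported by some senders but already discarded by others — and symmetrically for $v_L$; adding the discrepancies inherited from (ii) at layer $i-1$ for the two children, the newly created discrepancy, and likewise the new gap from the true count, are each bounded by the number of processes of $S_v$ inoperative since the start, i.e.\ by $D_v^{(i)}$. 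Instantiating the invariant at the root of $\mathcal{T}$, whose block is all of $W_\ell$, is exactly the assertion of the lemma (the case in which the entire group has gone inoperative being vacuous).

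The hard part is clause (i) — showing that a process which stays operative through a layer never ``gets stuck'', i.e.\ always recovers a usable estimate for its sibling block, even though that block can be far smaller than $\sqrt n$ while the only guarantee on $p$'s receive set is that it has size at least $|W_\ell|/2$ within the whole group. This is precisely what the design is set up to handle: a process that hears from fewer than half of $W_\ell$ in a layer is declared inoperative, so one only ever argues about processes that heard from strictly more than $|W_\ell|/2$ of the group, and the already-aggregated block values are re-broadcast within $W_\ell$, so that propagation arguments resting on the fact that two subsets of $W_\ell$ of size more than $|W_\ell|/2$ must intersect let one track, for each tree node, the set of operative processes that currently hold a valid estimate for it and show that this set can only lose the newly-inoperative processes. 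Making this propagation bookkeeping precise, and reconciling it with the monotone shrinking of the operative set, is the technically heaviest step; once it is in place, the interval-and-consistency update in the inductive step and the two complexity counts are routine.
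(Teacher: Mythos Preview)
Your approach is correct and is essentially the one taken in the paper. The only cosmetic difference is the formulation of the invariant: the paper (Lemma~\ref{lem:bits-agg-contr}) phrases it as a \emph{contribution} property---every process that is still operative at the end has its bit counted in the final estimate of every other still-operative process in the group---whereas you phrase it as a layer-by-layer interval bound $\widehat N^b_v(p)\in[N^b_v,\,N^b_v+D_v^{(i)}]$. Your formulation is slightly more refined (it pins down both endpoints, not just the pairwise gap), but the two are equivalent for the purpose of the lemma, and both are driven by exactly the pigeonhole step you identify: a source that survives round~2 of \textsc{GroupRelay} reached more than $|W_\ell|/2$ transmitters, a source that survives round~3 heard from more than $|W_\ell|/2$ transmitters, and these two majorities must intersect. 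Your communication accounting (all-to-all within the $\sqrt n$-sized group, $O(\log n)$ bits per message, $O(\log n)$ layers) also matches the paper's count in Lemma~\ref{lem:msg-aggr}; the paper just splits each layer into the three explicit rounds of \textsc{GroupRelay}. One small wording slip: when you say two processes disagree on $v_R$'s contribution ``only because of processes of $S_{v_R}$ that became inoperative during this very layer,'' that is not quite right---they can also disagree because they received sibling estimates originating from \emph{different} still-operative processes in $S_{v_R}$, which already differed by up to $D_{v_R}^{(i-1)}$. You do add this inherited discrepancy back in the next clause, so the arithmetic is fine; just tighten the sentence.
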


\begin{figure}
\centering
\begin{tikzpicture}[
    scale=0.7, transform shape,
    level 1/.style={sibling distance=60mm},
    level 2/.style={sibling distance=30mm},
    level 3/.style={sibling distance=15mm},
    ball/.style={draw, ellipse, minimum width=40pt},
    main_ball/.style={draw, circle, minimum width=60pt, line width=0.4mm},
    edge from parent/.style={draw},
    ball_e/.style={white, ellipse, minimum width=40pt},
]
\node[ball] (root) {a,b,c,d,e}
    child {
        node[ball] (abcd) {a,b,c,d}
        child {
            node[ball] {a,b}
            child { node[ball] (a) {a} }
            child { node[ball] (b) {b} }
        }
        child {
            node[ball] {c,d}
            child { node[ball] {c} }
            child { node[ball] {d} }
        }
    }
    child {
        node (x) [ball] {e}
    };
\draw[dotted] (x) -- ++(-1.5,-1.5);
\draw[dotted] (x) -- ++(1.5,-1.5);

\node (main) at (8, -2.0) [main_ball, label={[xshift=-0.8cm, yshift=-2.7cm, font=\large]$\textcolor{pink}{1}$}] {$a,b,c,d,e$}; 

\node[draw=pink, rounded corners, line width=0.3mm, fit=(abcd)(x), label={[xshift=0.5cm, yshift=-2.0cm, font=\large]$\textcolor{pink}{2}$}] (lower) {};

\node[draw=pink, rounded corners, line width=0.3mm, fit=(root), label={[xshift=1.4cm, yshift=-1.2cm, font=\large]$\textcolor{pink}{3}$}] (upper) {};

\draw[draw=pink, fill=pink, {Stealth[length=15pt,width=10pt]}-{Stealth[length=15pt,width=10pt]}] (lower) edge [line width=0.3mm, bend right=45] node[midway,below] {$\{a,b,d,e\}$}  (main);

\draw[draw=pink, fill=pink, -{Stealth[length=15pt,width=10pt]}] (main) edge [line width=0.3mm, bend right=15] node[midway,below] {$\{a,b,d,e\}$}  (root);

\node[inner sep=6pt, draw=blue, rounded corners, line width=0.5mm, fit=(x) (a)(b)(root)(main), ] {};
\end{tikzpicture}
\caption{Visualization of the $\sqrt{n}$-decomposition of the blue group from Figure~\ref{fig:group-partition}. The processes $a,b,c,d,e$ in the group are logically decomposed into a binary tree. The pink arrows visualize the three-round process of relaying operative counts of the two children of the root to the root itself. First, the counts are relayed to all processes in the group (arrow \#1), then the processes send a confirmation if they received the counts (arrow \#2), finally, all in the group transmit the received counts to the higher layer -- the root in this case (arrow \#3). 
Some processes can be faulty (process $c$ does not communicate, only $\{a,b,d,e\}$)
and their 
values are not guaranteed to be accumulated~accurately.
\label{fig:single-group-tree}}
\end{figure}


\Paragraph{{\bf\em Technical advancement 2: Fast inter-group communication and status maintenance between operative processes.}} After the tree-based communication, the operative processes in each group have a shared knowledge about the count of operative $0$'s and $1$'s within their group. Since there are $\ceil{\sqrt{n}}$ different groups, the number of logically different counts is $O(\sqrt{n})$. 
    To exchange these $O(\sqrt{n})$ counts between the groups, the operative processes communicate along the links in a sparse, but well-connected, graph -- neighborhoods of which are pre-selected locally in line~\ref{line:graph_sampling} -- that underlays the entire network; see the executed procedure \textsc{GroupBitsSpreading} in line~\ref{line:sum_ones_zeros} of Algorithm~\ref{alg:opt-omissions} and the illustration of the graph on the top of the group partition in Figure~\ref{fig:group-partition}. The graph used by the operative processes is selected as follows. 
    \dk{We consider a random graph, where each edge is selected independently at random with probability $\Theta(\log{n} / n)$. Next, we use the probabilistic analysis to show that the following event holds whp: a random graph with such edge density has the property that every subgraph of a constant-fraction size is dense and shallow -- see Theorem~\ref{thm:random-graph-properties} in Appendix~\ref{subsec:analysis-main}. The ``dense'' property refers to the fact that removing an arbitrary but at most $\alpha$-fraction of edges incident to any vertex from a linear number of vertices, allows to find a connected subgraph of this linear number of vertices such that every vertex has degree at least $\beta \log{n}$ within this subgraph. The ``shallow'' property refers to the fact that the latter subgraph has logarithmic diameter (asymptotically).} These two properties justify our partition into operative and inoperative classes, from perspective of the inter-group communication: as long as the process has more than $\beta \log{n}$ active links, intuitively it belongs to the connected shallow subgraph and thus it is capable of exchanging information with any other process with this property in $O(\log{n})$ rounds. This 
holds regardless of the factual faulty / non-faulty state of the processes. Therefore, the operative processes can spread among themselves the operative counts of the $\ceil{\sqrt{n}}$ different groups, yielding the property that as long as a process remains operative it knows {\em some operative counts of any group} 
with at least one operative process. More specifically, every operative process stores a data structure memorizing the operative counts of each of the $\ceil{\sqrt{n}}$ groups present in the system. Initially, it knows only the counts of the group it belongs to. 
In $\Theta(\log{n})$ rounds of communication it keeps sending these counts along every edge determined by the underlying graph, maintaining the fact that operative counts of a particular group are sent only once via each edge; at the same time it receives counts of other groups and updates the data structure based on this information. In case a process receives two or more different count values of some group, it can choose arbitrarily any of them -- as argued earlier, all of them could differ by at most the number of processes that have become inoperative in that group. To summarize, the procedure ~\textsc{GroupBitsSpreading} has the following outcome (the formal description is given in Algorithm~\ref{alg:bits-spread} and in Appendix~\ref{subsec:comm-patterns}; the formal analysis is provided in Appendix~\ref{subsec:analysis-main}).

\begin{lemma*}[Lemmas~\ref{lem:spreading-reaching},~\ref{lem:operative-contribution} and Theorem~\ref{thm:main-algo} in Appendix~\ref{subsec:analysis-main}]
Assume that processes run the procedure~\textsc{GroupBitSpreading} with $O(\sqrt{n})$ different logical input values (which are the operative counts of candidate values $0$ and $1$ of each group). At the end of the procedure, each operative process knows at least one copy of the logical value, provided that at least one process starting with this logical value remains operative. The procedure uses $O(\log{n})$ communication rounds and $O(n\sqrt{n}\log^{2}{n})$ communication bits in total.
\end{lemma*}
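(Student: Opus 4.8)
The plan is to obtain the reaching guarantee and the $O(\log n)$ round bound from the ``dense and shallow'' property of the underlying graph $G$ (Theorem~\ref{thm:random-graph-properties}) by a flooding argument among the operative processes, and to obtain the communication bound by a direct count.

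\textbf{Reaching and round complexity.} Fix an adaptive adversarial strategy and let $S$ be the set of processes that are operative at the end of \textsc{GroupBitsSpreading}; since a process never regains operative status, $S$ is also the set of processes operative throughout the procedure, and every $p\in S$ has, in every round, at least $\beta\log n$ incident $G$-edges along which it communicates with processes that are themselves operative. The first step is to show that the graph $H$ on vertex set $S$ whose edges are the links never omitted during the procedure is connected and has diameter $O(\log n)$. This is where Theorem~\ref{thm:random-graph-properties} enters: failing processes and omitting their messages deletes from $G$ only edges incident to faulty processes, so restricting $G$ to the $n-t>\frac{29}{30}n$ non-faulty processes removes at most a $\frac{1}{30}$-fraction of the edges at each of them; by the ``dense'' part a connected subgraph in which every vertex has degree $\ge\beta\log n$ survives, and by the ``shallow'' part it has diameter $O(\log n)$. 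Choosing the graph constants so that $\beta\log n$ exceeds the operative threshold, every non-faulty process keeps $\ge\beta\log n$ reliable links and hence stays operative, so this subgraph is contained in $H$; the same dense/shallow property, now applied to the linear set $S$ with the non-active links removed, shows that $H$ itself is connected and of diameter $O(\log n)$. The second step is the flooding: in each of the $\Theta(\log n)$ rounds, every operative process forwards along every incident $G$-edge each group count that it holds and has not yet sent along that edge, so by induction on distance in $H$ any count initially held by a process that stays operative --- in particular, by an operative process of the group that this count describes --- reaches every process of $S$ within $O(\log n)$ rounds. Finally, receiving several copies of one group's count is harmless: by the analysis of \textsc{GroupBitsAggregation} together with Lemma~\ref{lem:operative-contribution}, any two of them differ by at most the number of processes of that group that became inoperative, so an arbitrary choice among them is sound. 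The round bound is then immediate.

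\textbf{Communication complexity.} Whp every vertex of $G$ has degree $O(\log n)$, so $G$ has $O(n\log n)$ edges. There are $\ceil{\sqrt n}=O(\sqrt n)$ groups, and each operative count, tagged with the identifier of its group, is an integer bounded by $n$ and fits in $O(\log n)$ bits. Since the count of a fixed group is transmitted across a fixed directed edge at most once over the entire procedure, the total communication is at most $O(n\log n)\cdot O(\sqrt n)\cdot O(\log n)=O(n\sqrt n\log^{2}n)$ bits.

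I expect the main obstacle to be the first step above. Because the set of operative processes shrinks adaptively and adversarially as the procedure runs, one must argue that the low-diameter ``core'' of reliably communicating processes persists through all $\Theta(\log n)$ rounds and that every process still operative at the end both injects its counts into this core and receives from it; getting this right --- in particular, charging a process as inoperative only when it is genuinely cut off, and matching the graph parameters $\alpha$ and $\beta$ to the operative threshold and to the bound $t<n/30$ --- is the delicate part, and is the content of Lemmas~\ref{lem:spreading-reaching} and~\ref{lem:operative-contribution}.
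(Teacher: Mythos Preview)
Your communication-complexity paragraph is essentially the paper's argument and is fine.

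The reaching argument has a real gap. You define $H$ on the final operative set $S$ with edges the links that were never omitted, and you assert that every $p\in S$ has $\ge\beta\log n$ such edges to other vertices of $S$. This does not follow from the operativeness condition. If $p\in S$, then in every round $p$ received from at least $\Delta/3$ neighbors, but those senders need only have been operative at the \emph{start} of that round; any of them may become inoperative at the end of the last round (after sending) and hence lie outside $S$. So the degree of $p$ in your static graph $H$ can be much smaller than $\Delta/3$, and your appeal to the dense/shallow property of $G$ restricted to $S$ is not justified. The side claim that ``restricting $G$ to the non-faulty processes removes at most a $1/30$-fraction of the edges at each of them'' is also incorrect as stated: a non-faulty vertex can have most of its $\Theta(\log n)$ neighbors faulty; what Lemma~\ref{lem:compact-comp} actually gives is a large \emph{subset} of the non-faulty vertices with the degree property, not all of them.

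The paper avoids this by \emph{not} working with a static subgraph on $S$. Instead it runs a backward-in-time construction (Lemma~\ref{lem:operative-dense-neigh}): starting from the $\ge\Delta/3$ processes that sent to $p$ in the last round (these live in $\texttt{OP}_{8\log n-1}$, not necessarily in $S$), then from the $\ge\Delta/3$ processes that sent to each of those in the previous round, and so on for $2\log n$ steps. This builds a $(2\log n,\Delta/3)$-dense-neighborhood of $p$ inside the earlier-time graph $G^{\texttt{OP}}_{6\log n}$; by Lemma~\ref{lemma:sparse-to-dense} it has size $\ge n/10$. Doing the same for $q\in S$ gives a second such set; each contains $\ge n/10-t$ non-faulty vertices, and the $(n/10)$-expansion of $G$ yields a non-faulty--non-faulty edge $p'q'$ between them. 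Now the path $p\to\cdots\to p'\to q'\to\cdots\to q$ is realized in the first $\le 2\log n+1+2\log n$ rounds because links present in $G^{\texttt{OP}}_{6\log n}$ transmit bidirectionally in all earlier rounds. The point is that the intermediate relays may well drop out of $S$ later; what matters is that they were operative long enough to forward $p$'s value toward $q$. Your static-graph plan cannot capture this, and that is exactly the obstacle you flagged.
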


The combination of the two above technical advancements lead to calculating the number of candidate values 0 and 1 among the operative processes. Although these numbers are approximate, as they can differ by the number of processes that have become inoperative, this difference is acceptable to still employ a variant of the biased-majority-voting consensus,
as we discuss in the next part.
\Paragraph{Putting them all together: consensus protocol based on biased-majority-voting adjusted to the new efficient voting implementation in an epoch.} 
Assuming that each operative process has the approximate numbers of other operative processes having a candidate value $1$ and $0$, we explain the modification to the consensus framework based on the biased-majority-voting by Bar-Joseph and Ben-Or~\cite{Bar-JosephB98} in order to adapt to the dynamic characteristic of the operative set of processes and to the properties that our new efficient single-epoch implementation of a voting has. Our modification takes into account that, in our case, the counts of operative $0$'s and $1$'s are not the same in every operative process. The communication protocols guarantee only that a candidate value of an operative process is accounted by any other operative process, however, it gives no guarantee regarding the candidate values of the processes that become inoperative during the calculation and communication. Also, in our implementation of a single biased-majority-voting subroutine (epoch), 
described earlier, the operative processes do not assign any default values to the candidate values of the inoperative processes. 
Thus, any operative process estimates the number of all operative processes simply by adding the operative counts of $0$'s and $1$'s. 
Based on the above estimates, the algorithm employs the 
procedure of converging the candidate values of the operative processes to the final decision value according to the following: if an operative process has an operative count of $1$'s (meaning the candidate values of operative processes assigned to $1$) at least $\frac{18}{30}$ of the estimated total of all operative processes, it sets the candidate value to 1. If the operative count is less than half of the estimated total, it sets the candidate value to 0. In all other cases, the candidate value for the next step is a uniformly chosen random bit. See lines~\ref{line:if-1}-\ref{line:if-random} and Figure~\ref{fig:random-coin} for an illustration.
\begin{figure}[!h]
\centering
\begin{tikzpicture}[scale=0.7, transform shape]
  \draw (0,0) -- (15,0);

  \foreach \x/\text in {0/0, 2/$0 + \frac{t}{n}$, 7.5/$\frac{1}{2}$, 9.5/$\frac{1}{2} + \frac{t}{n}$, 13/$1 - \frac{t}{n}$, 15/$1$}
    \draw[-](\x, -2pt)--(\x, 2pt);
    
  \foreach \x/\text in {0/0, 2/$0 + \frac{3}{30}$, 7.5/$\frac{1}{2}$, 9.5/$\frac{1}{2} + \frac{3}{30}$, 13/$1 - \frac{3}{30}$, 15/$1$}
    \fill (\x,0) circle (0pt) node[below, yshift=-5pt] {\text};

  \draw[decorate,decoration=brace](0 cm + 7pt, 15pt) -- node[above,font=\small]{$b = 0$} (7.5 cm - 7pt, 15pt);

  \draw[decorate,decoration=brace](7.5 cm + 7pt, 25pt) -- node[above,font=\small]{$b ~\sim \mathcal{B}\left(1, \frac{1}{2}\right)$} (9.5 cm - 7pt, 25pt);
  
  \draw[decorate,decoration=brace](9.5cm + 7pt, 15pt) -- node[above,font=\small]{$b = 1$} (15cm - 7pt, 15pt);

  \draw[blue, fill=blue] (8.75,5pt) circle (2pt);
  \draw[blue, fill=blue]  (9,5pt) circle (2pt);
  \draw[blue, fill=blue] (10,5pt) circle (2pt);
  \draw[blue, fill=blue] (11,5pt) circle (2pt);
  
  \draw[green, fill=green] (1.75, 5pt) circle (2pt);
  \draw[green, fill=green] (2.5, 5pt) circle (2pt);
  \draw[green, fill=green] (3, 5pt) circle (2pt);
  \draw[green, fill=green] (3.23, 5pt) circle (2pt);

  \draw[gray, fill=gray] (1.35, 5pt) circle (2pt);
  \draw[gray, fill=gray] (13.65, 5pt) circle (2pt);
  \draw[gray, fill=gray] (13.25, 5pt) circle (2pt);

  \node[below, align=center] at (7.5, -1.0) {
    $\{$ \tikz{\draw[green, fill=green] circle (2pt);} ,  \tikz{\draw[blue, fill=blue] circle (2pt);} $\}$ - valid distributions of candidate values $b$ across different processes \\
    
    $\{$ \tikz{\draw[gray, fill=gray] circle (2pt);} $\}$ - invalid distribution of the same values; ratios at different processes differ by more than $\frac{3t}{n}$
  };
  
\end{tikzpicture}
\caption{A picture explaining the thresholds in a single execution of the biased-majority-voting subroutine, see lines~\ref{line:if-1}-\ref{line:if-random} in Algorithm~\ref{alg:opt-omissions}. Different colors represent different outcomes \dk{(each obtained in a different epoch)} of the counting of candidate values in preceding lines~\ref{line:group-bits-aggr} and~\ref{line:sum_ones_zeros}. \label{fig:random-coin}}
\end{figure}
Following the main line of the biased-majority-voting idea, in the analysis, we show that if a perturbation in the number of operative processes before the counting starts and after it ends is small, i.e., $O(\sqrt{n})$, the estimations are good enough to reach a consensus decision by operative processes, with constant probability i.e. the candidates values among operative processes are unified.
A high level intuition is that if there is a fraction of $1$'s as the candidate values among the operative processes of at least $\frac{18}{30}$ then every operative processes assigns $1$ as its candidate value. Otherwise every operative process either sets a random bit or $0$ as the candidate value, however, the standard bounds on the deviation from the mean of the sum of many i.i.d. random variables guarantee that, with constant probability, the number of assigned $0$'s is below the mean by $\Theta(\sqrt{n})$. If the perturbation is indeed $O(\sqrt{n})$, which occurs with a constant probability, this leads to assigning $0$ as the candidate value in the next repetition of the biased-majority-voting subroutine by every operative process. Since the convergence analysis of a similar procedure has been done earlier~in~\cite{Bar-JosephB98}, but with different constants, we omit the details here and refer the reader to Lemmas~\ref{lem:three-epochs},~\ref{lem:deciding-1} in~Appendix~\ref{subsec:analysis-main}.

On the other hand, the communication graphs in the communication protocols are designed to guarantee at least $n - \Theta(t)$ operative processes in the system, regardless of the adversary's actions.
\begin{lemma*}[Lemma~\ref{lem:good-proc-are-large} in Appendix~\ref{subsec:analysis-main}]
The number of operative process is always at least $n - 3t$.
\end{lemma*}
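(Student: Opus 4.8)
The plan is to show that the ``operative'' status, as defined through the intra-group tree communication (\textsc{GroupBitsAggregation}) and the inter-group graph communication (\textsc{GroupBitsSpreading}), cannot be removed from more than $3t$ processes in any single epoch, and then argue that once a process is inoperative it stays that way, so the bound $n-3t$ holds for all time. First I would isolate the two mechanisms by which a process loses operative status: (i) during the tree phase, a process becomes inoperative if it hears from fewer than half of the processes in its own group of size $\Theta(\sqrt n)$; (ii) during the spreading phase, a process becomes inoperative if it has at most $\beta\log n$ active links in the underlying random graph $G$ from Theorem~\ref{thm:random-graph-properties}. Since a message from $p$ to $q$ is lost only if $p$ or $q$ is faulty, and there are at most $t$ faulty processes, I would bound the number of non-faulty processes that can be ``knocked out'' by each mechanism in terms of $t$, and combine.

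The key counting step for mechanism (ii) uses the ``dense'' property of $G$ directly. All faulty processes contribute at most $t$ ``bad'' endpoints; the $\alpha$-fraction edge-deletion guarantee of Theorem~\ref{thm:random-graph-properties} says that after removing up to an $\alpha$-fraction of edges at each of a linear number of vertices, there remains a connected subgraph on that linear number of vertices with min-degree $\beta\log n$. Dualizing this, the set of vertices whose active degree drops to $\le \beta\log n$ must be small --- at most some $c\cdot t$ for a constant $c$ depending on the edge density $\Theta(\log n/n)$ and on $\alpha$ --- because otherwise the complement would violate the dense-subgraph property. For mechanism (i), I would argue per group: within a group $W_\ell$ of size $\le \ceil{\sqrt n}$, at most (number of faulty processes currently in $W_\ell$) non-faulty processes can fail to hear from half the group, so summing over all $\ceil{\sqrt n}$ groups gives at most (total faulty) $=t$ additional losses, plus the $t$ faulty processes themselves. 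Adding the contributions of (i), (ii), and the faulty set, and choosing the constants $\alpha,\beta$ in Theorem~\ref{thm:random-graph-properties} appropriately (which is exactly why those constants are left as parameters there), yields a total of at most $3t$ inoperative processes, hence at least $n-3t$ operative ones; with $t<n/30$ this is $\Theta(n)$, as needed downstream.

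The last ingredient is monotonicity across epochs: the operative flag is only ever set to \texttt{false} (lines~\ref{line:group-bits-aggr} and~\ref{line:sum_ones_zeros} can downgrade it but never restore it), and the set of faulty processes is also monotone, so an inductive argument over epochs shows the $n-3t$ bound persists through all $\frac{t}{\sqrt n}\log n$ epochs and into the final spreading phase. I expect the main obstacle to be making mechanism (ii) precise: the dense/shallow property of Theorem~\ref{thm:random-graph-properties} is stated for a \emph{fixed} subgraph, whereas here the adversary is adaptive and the set of active links evolves over the $\Theta(\log n)$ rounds of \textsc{GroupBitsSpreading}; I would handle this by quantifying over \emph{all} subsets of faulty processes at once (the random-graph property is taken whp over $G$ before the execution, so it holds simultaneously against every adversarial choice), and by noting that a process that is operative at the end of the spreading phase had $\ge\beta\log n$ active links throughout, so it lies in the dense connected subgraph guaranteed by the theorem for the final configuration of failures.
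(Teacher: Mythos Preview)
Your decomposition into the two knock-out mechanisms --- (i) the intra-group half-threshold in \textsc{GroupBitsAggregation} and (ii) the $\Delta/3$ degree threshold in \textsc{GroupBitsSpreading} --- matches the paper exactly, and your per-group counting for (i) (at most $f_\ell$ non-faulty processes lost in a group with $f_\ell$ faults, summing to $\le t$) is correct and equivalent to the paper's pigeonhole. The monotonicity remark is also right. The gap is in how you \emph{combine} (i) and (ii).

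You treat the two mechanisms additively: $t$ faulty $+$ $t$ lost via (i) $+$ $c\cdot t$ lost via (ii), where the last term comes from applying the graph property to the $t$ faulty vertices only (``all faulty processes contribute at most $t$ bad endpoints''). But a process that becomes inoperative via (i) also stops sending in \textsc{GroupBitsSpreading}, so its neighbors lose an active link; the ``silent'' set that must be fed into the graph argument for (ii) is therefore not the faulty set of size $t$ but the union of the faulty set and the (i)-casualties, of size up to $2t$. Without accounting for this interaction you cannot conclude that a process surviving your (ii)-count actually keeps $\ge \Delta/3$ active links, because many of its neighbors may have been knocked out by (i) rather than by being faulty.

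The paper resolves this by applying the two steps \emph{sequentially} rather than in parallel: first define $X$ to be the set of non-faulty processes lying in groups with a non-faulty majority (your (i)-analysis gives $|X|\ge n-2t$), and then apply Lemma~\ref{lem:compact-comp} --- which is the precise peeling statement you are reaching for, not a direct dualization of Theorem~\ref{thm:random-graph-properties}, and whose constants are fixed rather than free parameters you can tune --- to $T=\cP\setminus X$ of size $\le 2t\le n/15$. This yields $X'\subseteq X$ of size $\ge n-\tfrac{4}{3}\cdot 2t=n-\tfrac{8}{3}t\ge n-3t$ in which every vertex has $\ge\Delta/3$ neighbors \emph{inside $X'$}. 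Since $X'$ is entirely non-faulty, links within $X'$ are never omitted, and since $X'\subseteq X$, every member survives all calls to \textsc{GroupBitsAggregation}; hence $X'$ stays operative in every epoch.
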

Therefore, after $O\left(\frac{t}{\sqrt{n}}\log{n}\right) \le O(\sqrt{n}\log n)$ epochs, each executing our biased-majority-voting subroutine, we can assert, by a counting argument, that there were at least $\Omega(\log{n})$ epochs in which the perturbation to the number of operative processes was $O(\sqrt{n})$. In consequence, we can argue that operative processes have reached a consensus decision with high probability -- the voting in different epochs use independent random bits, thus their outcome are independent and standard bounds on probability of success of $\Theta(\log{n})$ independent trials of a Bernoulli variable can be applied.
As a final step, the operative processes disseminate the decision to all processes by all-to-all-communication, see lines~\ref{line:good-spread}-\ref{line:if-decided}.\footnote{%
To increase the probability of success to $1$, after each epoch, the operative processes employ a safety rule: if the estimation of operative processes holding candidate value $1$'s constitutes at least $\frac{27}{30}$ fraction of the overall estimation of all operative processes (or, conversely, the estimation of $1$'s constitutes less than $\frac{3}{30}$ fraction), a process sets an auxiliary variable $\texttt{decided}$ to $true$, see line~\ref{line:if-1-r}. The undecided processes may eventually switch to a deterministic protocol (line~\ref{line:spread-2}), working in $O(n)$ rounds and sending $O(n^3)$ communication bits cf.~\cite{DolevS83}, but the underlying idea is that, based on the previous arguments, it is only with a probability of less than $O\left(\frac{1}{n}\right)$ that an operative process remains undecided throughout the $O\left(\frac{t}{\sqrt{n}}\log{n}\right) \le O(\sqrt{n}\log n)$ repetitions of our biased-majority-voting subroutine. As this part is highly technical, we postpone details to Section~\ref{sec:main-algo}.} 

The detailed analysis of Algorithm~\ref{alg:opt-omissions} is presented in Appendix~\ref{subsec:analysis-main} and the precise formal result is stated in Theorem~\ref{thm:main-algo}, however, we note here the complexity milestones. The communication bit complexity is always $O\left(\frac{t}{\sqrt{n}}\log{n} \cdot n^{3/2}\log^2{n} + n^2 + n t \right) = O\left(n\left(t \log^{3}n + n\right)\right)$. The first additive term corresponds to $O\left( \frac{t}{\sqrt{n}}\log{n}\right)$ repetitions of our biased-majority-voting subroutine
(i.e., epochs), each requiring $O\left(n^{3/2}\log^2{n} \right)$ bits of communication. The additive term $O\left(n^2\right)$ corresponds to the procedure of informing other processes of the decision made by the 
operative processes. The last term $O(nt)$ corresponds to the execution of the deterministic protocol. The number of random bits used is at most $1$ per process per the repetition of our biased-majority-voting subroutine (i.e., once per epoch), which gives $O\left(t \sqrt{n} \log{n}\right) \le O\left(n \sqrt{n} \log{n}\right)$ bits of randomness in total. Similarly, each repetition of an 
the counting scheme takes 
$O(\log {n})$ rounds, 
which implies $O\left(\frac{t}{\sqrt{n}} \log^2{n}\right)\le O\left(\sqrt{n} \log^2{n}\right)$ round complexity, conditioned on the fact that the deterministic consensus algorithm is not evoked -- this happens whp. In the case where the deterministic algorithm is executed, it takes additional $O(t)$ rounds after which all non-faulty processes decide with probability~$1$, but this happens with polynomially small probability.

\section{Lower bound}
\label{sec:overview-lower}

Our next result shows a new connection between randomness, which we informally define as the total number of accesses to a random source by the algorithm,\footnote{Observe that this definition is, in principle, more general than counting only the total number of random bits. 
See Section~\ref{sec:model} for formal specification of randomness.} and time complexity in consensus solutions against an adaptive adversary. It provides a lower bound for a broader class of consensus algorithms that are correct with high probability,\footnote{%
Algorithms correct with probability $1$ are also in this class.}
against an adaptive adversary who can crash processes permanently -- it automatically extends to omissions and more severe faults. Compared to previous lower bounds, we introduce a new, amortized analysis of the valency framework~\cite{Bar-JosephB98, FischerLP85, moses1998unified} -- a common tool to deriving lower bounds for consensus algorithms. By crafting a new parameterized approach to the coin-flipping game -- an abstraction that models processes' random choices -- we can fully adapt the power of adversary to the amount of randomness the algorithm is using.

\begin{theorem}\label{thm:lower-randomness-res}
For a synchronous algorithm solving consensus with probability $\ge 1 - \frac{1}{n^{3/2}}$, let $T$ denote the number of rounds in an execution of the algorithm and $R$ be the total number of times the processes have accessed a random source. 
There exists an adaptive adversarial strategy that~guarantees, with probability at least $1 - \frac{1}{\log{n}}$,\footnote{In fact, for the cost of 
poly-logarithmic times more faults, the bound could be polynomialy~close~to~one.} 
\vspace*{-2ex}
\[
T \times \left(R+T\right) = \Omega\left(\frac{t^{2}}{\log{n}}\right)
\ .
\]
\end{theorem}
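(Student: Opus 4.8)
The plan is to run the quantitative valency argument of Bar-Joseph and Ben-Or~\cite{Bar-JosephB98} (building on~\cite{FischerLP85,moses1998unified}) and graft onto it an amortized accounting in which each access to a random source is charged to a small number of crashes the adversary is forced to spend. Fix an input with $\lceil n/2\rceil$ ones and $\lfloor n/2\rfloor$ zeros and, after truncating the algorithm at a horizon past which it terminates with all but negligible probability, define by backward induction the adaptive strategy $\cA$ described below. Relative to $\cA$ every execution prefix has a well-defined \emph{valence} (the probability over future coin flips of deciding $1$) and a notion of being \emph{critical} in the sense of~\cite{Bar-JosephB98}: its valence is bounded away from $0$ and $1$ and a designated correct process still has a one-round behaviour that swings the outcome, so that a process deciding from a critical prefix forces two correct processes to disagree with probability $\Omega(1)$, contradicting the $(1-n^{-3/2})$ correctness guarantee. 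The standard connectivity argument yields a critical initial prefix (possible because, in the round of a crash, the adversary may deliver an arbitrary subset of the crashing process's messages), and $\cA$ is the strategy that keeps the prefix critical for as long as its crash budget $t$ permits. Since the algorithm never decides while the prefix is critical, it suffices to show that maintaining criticality through round $T$ costs more than $t$ crashes unless $T\,(R+T)=\Omega(t^{2}/\log n)$, and that this conclusion holds with probability at least $1-1/\log n$.

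The heart of the argument is a single-round \emph{coin-flipping-game} lemma, parameterized by the number $k_r$ of processes that touch a random source in round $r$. In round $r$ the processes first perform their local computation, flipping $k_r$ coins in total; only then does $\cA$ choose which processes to crash, and in this model $\cA$ observes those $k_r$ coin outcomes before acting. The lemma asserts that from a critical prefix $\cA$ can preserve criticality into round $r+1$ by crashing at most $c_r:=1+O\!\left(\sqrt{k_r\log n}\right)$ further processes, outside an event of probability at most $1/(2t\log n)$. The intuition: the $k_r$ fresh coins shift the valence by an amount of typical magnitude $\Theta(\sqrt{k_r})$ measured in units of the influence a single process has on the decision; since $\cA$ sees the coins it cancels this shift by crashing the processes responsible for the ``wrong'' part of the push --- at most $k_r$ of them and, after a margin chosen so the bound holds simultaneously over all relevant rounds, typically $O(\sqrt{k_r\log n})$ --- while the residual deterministic part of the round is handled by the single extra crash of the Fischer--Lynch--Paterson step~\cite{FischerL82}. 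When $k_r=0$ the round is purely deterministic and only that one crash is used, which is what keeps the scheme nonvacuous: an algorithm that never uses randomness is driven, through the $c_r=1$ terms alone, to the deterministic $\Omega(t)$-round bound.

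Given the lemma, the accounting is short. For $T>t$ the bound $T\,(R+T)=\Omega(t^2/\log n)$ is trivial, so assume $T\le t$; summing the lemma over these $\le t$ rounds, a union bound shows that with probability $\ge 1-1/(2\log n)$ a total of $\sum_{r\le T}c_r$ crashes suffices to keep the prefix critical through round $T$. If the algorithm nonetheless terminates correctly by round $T$ (which, being $(1-n^{-3/2})$-correct, it does except with probability $n^{-3/2}$), the prefix was not critical at round $T$, so this sufficient number of crashes must exceed the budget:
\[
t \;<\; \sum_{r\le T}c_r \;=\; T \;+\; O\!\left(\sqrt{\log n}\,\sum_{r\le T}\sqrt{k_r}\right) \;=\; T \;+\; O\!\left(\sqrt{T\,R\,\log n}\right),
\]
using $\sum_{r\le T}k_r\le R$ and Cauchy--Schwarz in the last step. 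Hence $\max\!\left(T,\sqrt{TR\log n}\right)=\Omega(t)$, so $T^{2}=\Omega(t^{2})$ or $TR\log n=\Omega(t^{2})$, and in either case $T\,(R+T)\ge\max(T^{2},TR)=\Omega(t^{2}/\log n)$; this holds with probability at least $1-1/(2\log n)-n^{-3/2}\ge 1-1/\log n$.

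I expect the single-round coin-flipping-game lemma to be the main obstacle: one must show, against a full-information adaptive adversary and for an \emph{arbitrary} consensus algorithm, that the combined effect on the valence of a round's $k_r$ fresh coins is captured --- up to the $\sqrt{\log n}$ margin absorbed into the fault budget --- by a perturbation of magnitude $O(\sqrt{k_r\log n}/n)$ realizable by crashing that many processes, and that genuine criticality (a single-process pivot, not merely a numeric valence near $1/2$) is what survives. Two further points require care: $\cA$ and the valence refer to $\cA$'s own future behaviour, so $\cA$ must be pinned down non-circularly by the backward induction from the truncation horizon; and the implication ``decides from a critical prefix $\Rightarrow$ disagreement with probability $\Omega(1)$'' must be quantitatively strong enough to beat the $n^{-3/2}$ allowance, which is where the valence window and the indistinguishability bookkeeping of~\cite{Bar-JosephB98} are used. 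The whole argument relies only on crashes, so it transfers verbatim to omission faults and stronger models, and --- as the statement's footnote indicates --- a polylogarithmic increase of the fault budget makes each per-round step succeed with probability $n^{-c}$, pushing the overall success probability polynomially close to~$1$.
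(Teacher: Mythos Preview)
Your plan is essentially the paper's approach: a parameterized single-round coin-flipping lemma (the paper proves it via Talagrand's concentration inequality, Lemma~\ref{cor:coin-game}, which is exactly the ``main obstacle'' you flag), combined with a valency/criticality argument that extends an undecided state round-by-round at a cost of $O(\sqrt{r_i\log n})+1$ crashes per round, and then Cauchy--Schwarz to convert $\sum_i\sqrt{(r_i+1)\log n}\ge \Omega(t)$ into $T(R+T)=\Omega(t^2/\log n)$. The only notable difference is bookkeeping: the paper separates \emph{null-valent} and \emph{bivalent} states (with round-dependent probability thresholds $\frac{1}{n\log n}\pm\frac{i}{n^2}$) and handles them by two distinct lemmas (Lemmas~\ref{lem:null-valent} and~\ref{lem:bi-valent}), rather than a single ``critical'' notion, and it finds the initial undecided configuration via the standard chain argument rather than fixing a balanced input.
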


The above result is proved in
Appendix~\ref{sec:proof-lower}, as Theorem~\ref{thm:lower-randomness}.
It shows a \textit{trade-off} between fast algorithms and algorithms that are frugal in their calls to a random source. 
Lower bounds of similar flavor have been already known for the asynchronous setting. Aspnes (JACM'98)~\cite{Aspnes98} was the first to show that $\Omega\left(\frac{t^{2}}{\log^2{t}}\right)$ coin flips are needed to solve asynchronous consensus. That result has been later improved by Attiya and Censor (STOC'07)~\cite{DBLP:conf/stoc/AttiyaC07}, who obtained a tight bound showing that asynchronous consensus requires $\Theta\left(n^{2}\right)$ (asynchronous) step complexity. 
Note that the adversary in an asynchronous setting is much more powerful than in the synchronous one, considered in this paper, since he can delay {\em all} operations arbitrarily -- we obtain similar lower bound (right-hand side of our formula) without being able to use an advantage of asynchrony in adversarial strategies.

In case of synchronous algorithms, the only known  lower bound was obtained by Bar-Joseph and Ben-Or~\cite{Bar-JosephB98}, who showed that no algorithm can solve consensus with probability $1$ against the adaptive adversary in fewer than $\Omega(\frac{t}{\sqrt{n\log{n}}})$ rounds, even with unlimited randomness. 
Since a single process can have at most one call to a random source in a round, substituting $R := n \times T$ in our lower bound also implies that any algorithm solving consensus with probability at least $1 - \frac{1}{n^{3 / 2}}$ has to work for $T$ rounds such that $T \times T \times (n + 1) = \Omega(\frac{t^{2}}{\log{n}}) \implies T = \Omega(\frac{t}{\sqrt{n\log{n}}})$  with probability at least $1 - \frac{1}{\log{n}}$. Thus, we extend the bound in~\cite{Bar-JosephB98} to a wider class of algorithms -- Monte Carlo solutions.
On the other hand, in the case of having a small (e.g., constant or polylogarithmic) number of processes who make random calls in a round, or when process uses pseudo-random generators of small (e.g., polylogarithmic) seed and the number of failures is big (e.g., $t=\Omega(\frac{n}{\polylog n})$) -- the lower bound implies $T = \Omega(\frac{t}{\polylog n})$, which is (almost) equivalent to the renowned lower-bound $T = \Omega(t)$ (cf.~\cite{Attiya-Ellen-book-2014}) for 
deterministic executions. Nevertheless, our lower bound describes precisely the relationship between the time and the number of random calls in the entire spectrum between the two extremes of unlimited randomness and determinism.

\paragraph{Lower bound's technical novelty and overview of its analysis.}
We propose a new improved analysis of the one-round coin-flipping game -- an abstraction proposed by Bar-Joseph and Ben-Or~\cite{Bar-JosephB98}. They showed that, from a high level perspective, if $n$ processes use randomness in a round, the adversary (knowing the random outcomes) can hide $\Theta(\sqrt{n\log{n}})$ values (which corresponds to failing $\Theta(\sqrt{n\log{n}})$ processes) such that with probability at least $1 - \frac{1}{n}$ the execution cannot be close to deciding. 
We improve this analysis and make it {\em parameterized} with respect to the number of calls to a random source, see Lemma~\ref{cor:coin-game} in Appendix~\ref{sec:proof-lower}, in order to be able to {\em amortize} this number in the final analysis in the~proof~of~Theorem~\ref{thm:lower-randomness-res} (equivalently, Theorem~\ref{thm:lower-randomness} in Appendix~\ref{sec:proof-lower}).

More specifically, using Talagrand's concentration inequality, we show that even if only $k < n$ processes decide to make a call to a random source, the adversary can fail at most $\Theta(\sqrt{k\log{\alpha}})$ of them in such a way that the probability of preventing decision is at least $1 - 2^{\alpha}$, for $\alpha < \frac{1}{2}$. Introducing the artificial parameter $\alpha$ allows the adversary to control the game with almost any desired precision with the cost of only $\log{\alpha}$ times more failures in an execution.  

Having this tool in 
hand, 
one could follow the generic framework of analyzing valency of the executions, as used by Bar-Joseph and Ben-Or in~\cite{Bar-JosephB98} and also in other related contexts 
in distributed computing, 
c.f.,~\cite{Aspnes98},~\cite{DBLP:conf/stoc/AttiyaC07}. In short, the framework relies on partitioning the executions into a finite and small number of exclusive types (also referred to as valency types) of executions that capture the probabilities of the algorithm deciding $0$ or $1$, given the history of the execution up to some point in time. 
More precisely, let $\Prob(\cH,\cA)$ be the probability of reaching consensus on value $1$ when continuing the run of the algorithm with history $\cH$ under adversarial strategy 
    $\cA$.
We say that a 
state of the algorithm in round $i$, defined uniquely by its history $\cH$, 
is: 

\begin{itemize}
    \item 
{\em null-valent} if for all adversarial strategies 
$\cA$ 
extending this state, 
we have $\frac{1}{n\log n} - \frac{i}{n^{2}} \le \Prob(\cH,\cA)\le 1-\frac{1}{n\log n} + \frac{i}{n^{2}}$,
\item
{\em $1$-valent} if there is an adversarial strategy 
$\cA$ 
extending this state
such that $\Prob(\cH,\cA)> 1-\frac{1}{n\log n} + \frac{i}{n^{2}}$ and for every other adversarial strategy 
$\cA'$: 
$ \Prob(\cH,\cA') \ge \frac{1}{n\log n} - \frac{i}{n^{2}}$,
\item
{\em $0$-valent} if there is an adversarial strategy $\cA$ extending this state, such that $\Prob(\cH,\cA) < \frac{1}{n\log n} - \frac{i}{n^2}$ and for every other adversarial strategy $\cA'$: $\Prob(\cH,\cA') \le 1- \frac{1}{n\log n} + \frac{i}{n^{2}}$,
\item
{\em bivalent} if there are adversarial strategies $\cA,\cA'$ extending this state, such that $\Prob(\cH,\cA)> 1-\frac{1}{n\log n} + \frac{i}{n^2}$ and $\Prob(\cH,\cA') < \frac{1}{n\log n} - \frac{i}{n^{2}}$.
\end{itemize}

\noindent An execution that is $1$-valent or $0$-valent is also called {\em uni-valent}. Note that the types are disjoint and cover the whole space of an algorithm's states.

The classical approach is to show that (i) there exists an ambiguous assignment of input bits to processes 
with
the probabilities of deciding $0$ and deciding $1$ 
being
far from $1$, and
(ii) the adversary can keep the algorithm in the ambiguous state regardless of the algorithm's actions for a certain number of rounds.
Step (i) is guaranteed by the following lemma:

\begin{lemma*}[Lemma~\ref{lem:initial-exe} in Appendix~\ref{sec:proof-lower}]
For any synchronous consensus algorithm there exists an initial state, which, if the adversary can control one process, is null-valent or bivalent.
\end{lemma*}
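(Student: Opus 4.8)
The plan is to adapt the classical Fischer--Lynch--Paterson initial-bivalence argument to the probabilistic, adversary-quantified valency framework above, using the single-crash power of the adversary to create indistinguishability between two neighbouring input configurations.

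First I would lay out the standard Hamming chain of initial configurations. Let $C_0$ be the configuration in which every process starts with input $0$, let $C_n$ be the all-$1$ configuration, and for $0\le j<n$ let $C_{j+1}$ be obtained from $C_j$ by flipping the input of process $j+1$ from $0$ to $1$, so that $C_j$ and $C_{j+1}$ differ only in the input of the single process $p:=j+1$. By validity and the correctness assumption of Theorem~\ref{thm:lower-randomness-res}, the algorithm started from $C_0$ outputs $0$ with probability $\ge 1-n^{-3/2}$ under \emph{every} adversarial strategy, hence $\Prob(\cH_{C_0},\cA)\le n^{-3/2}<\frac{1}{n\log n}$ for all $\cA$, so $C_0$ is $0$-valent (at round $0$); symmetrically $C_n$ is $1$-valent.

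Now suppose towards a contradiction that no initial state is null-valent or bivalent; then every $C_j$ is uni-valent, and since the chain runs from a $0$-valent to a $1$-valent configuration there is an index $j$ with $C:=C_j$ $0$-valent and $C':=C_{j+1}$ $1$-valent. The pivotal step is that $C$ and $C'$ differ only in the input of $p$, which the adversary may control: let $\cA^{\star}$ be the strategy that crashes $p$ at the very beginning of round $1$, delivering none of $p$'s messages (and, being a single-fault crash adversary, it then has no further action). Since $p$ performs no local computation, never communicates, and is exempt from the consensus conditions, the executions of $\cA^{\star}$ from $C$ and from $C'$ are identically distributed over the coins of the remaining $n-1$ processes; identifying states that differ only in the internal memory of the already-crashed $p$, they pass through the same state $S$ and $\Prob(\cH_C,\cA^{\star})=\Prob(\cH_{C'},\cA^{\star})=:\rho$. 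Every continuation from $S$ is simultaneously a continuation of $C$ and of $C'$; because $C$ is $0$-valent at round $0$ no strategy from $C$ drives the success probability above $1-\frac{1}{n\log n}$, so $\rho\le 1-\frac{1}{n\log n}\le 1-\frac{1}{n\log n}+\frac{i}{n^2}$; because $C'$ is $1$-valent at round $0$ no strategy from $C'$ drives it below $\frac{1}{n\log n}$, so $\rho\ge \frac{1}{n\log n}\ge \frac{1}{n\log n}-\frac{i}{n^2}$. Hence $S$ is null-valent, and $S$ --- the state the adversary reaches by exercising its right to crash the pivotal process at the outset --- is the claimed initial (post-opening-crash) state from which the remainder of the construction of Theorem~\ref{thm:lower-randomness-res} proceeds.

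The combinatorial skeleton is routine; the main obstacle is the bookkeeping around the probabilistic valency definitions. One must be careful that the squeeze at $S$ invokes only strategies that genuinely extend $S$ (i.e. those that begin with the silent round-$1$ crash of $p$ --- in the single-fault crash model there is essentially nothing else they can do), check that the $\pm i/n^2$ slack terms all point in the harmless direction, and verify that crashing $p$ \emph{before any local computation and delivering none of its messages} is precisely what makes the two executions coincide: delaying the crash, or delivering even one of $p$'s round-$1$ messages, would reinstate a dependence of the transcript on $p$'s input and break indistinguishability. A minor point is to record that $n^{-3/2}<\frac{1}{n\log n}$ for all sufficiently large $n$ (the only regime of interest), so that the endpoints of the chain are indeed uni-valent.
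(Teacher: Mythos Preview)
Your proposal is correct and follows the same skeleton as the paper's proof: the Hamming chain of input configurations, the identification of adjacent $0$-valent/$1$-valent configurations, and crashing the pivotal process $p$ to create indistinguishability. The difference is in the endgame. The paper, after crashing $p$ from $s_j$, performs a case analysis on the valency of the resulting state: if it is null-valent or bivalent the lemma is proven; if it is (say) $0$-valent, the paper argues that the neighbouring input configuration $s_{j+1}$ must be bivalent, since $s_{j+1}$ is $1$-valent by assumption but crashing $p$ from it yields the same $0$-valent state. You instead observe directly that the post-crash state $S$ is \emph{always} null-valent: every strategy from $S$ lifts to a strategy from both $C$ and $C'$, so the $0$-valency of $C$ gives the upper bound and the $1$-valency of $C'$ gives the lower bound. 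This squeeze shows that the paper's ``remains $0$-valent'' and ``remains $1$-valent'' sub-cases never actually arise, so your argument is a genuine streamlining. One small writing issue: you define $\rho$ as the single probability $\Prob(\cH_C,\cA^\star)$ for the do-nothing-else strategy $\cA^\star$, but the null-valency of $S$ requires the bound for \emph{every} continuation $\cB$ from $S$; your sentence ``every continuation from $S$ is simultaneously a continuation of $C$ and of $C'$'' is the right idea, but the displayed bounds should be on $\Prob(\cH_S,\cB)$ for arbitrary $\cB$, not just on $\rho$.
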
 

We now focus on Step (ii).
In order to obtain amortized analysis of the number of calls to random sources and simultaneously to enforce the sought time lower bound, we give much tighter analysis of this type. In particular, we have to take into account (a) the number of accesses to random sources when analyzing transitions between different types of states, c.f., Lemmas~\ref{lem:null-valent} and~\ref{lem:bi-valent}, and (b) the amortized number of accesses when classifying states in round $i$ according to their valency, c.f., the conditions defining the types of states based~on~valency (see above).

\begin{lemma*}[Lemma~\ref{lem:null-valent} in Appendix~\ref{sec:proof-lower}]
A state $\cH_{i}$ that is null-valent at the beginning of round $i < n$ can be extended to a null-valent state at the end of the round with probability greater than $1 - \frac{2}{n^{2}}$ by failing at most $16\sqrt{r_{i}\log n}$ processes. 
\end{lemma*}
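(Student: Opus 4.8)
The plan is to realize the single-round transition out of $\cH_i$ as an instance of the parameterized one-round coin-flipping game of Lemma~\ref{cor:coin-game}, and then to translate a ``balanced'' outcome of that game back into null-valency of the end-of-round state. Let $r_i$ denote the number of processes that access a random source during round $i$, and index those processes $1,\dots,r_i$; every other process behaves deterministically in round $i$, so the only unpredictability the adversary must neutralize is the realization $\rho$ of the random bits of these $r_i$ processes. For a set $F$ failed by the adversary during round $i$, write $\cH_{i+1}^{F,\rho}$ for the resulting history and set $g^{+}(F,\rho)=\sup_{\cA}\Prob(\cH_{i+1}^{F,\rho},\cA)$ and $g^{-}(F,\rho)=\inf_{\cA}\Prob(\cH_{i+1}^{F,\rho},\cA)$. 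The target is: with probability at least $1-\frac{2}{n^{2}}$ over $\rho$ there is an $F$ with $|F|\le 16\sqrt{r_i\log n}$ such that $\frac{1}{n\log n}-\frac{i+1}{n^{2}}\le g^{-}(F,\rho)$ and $g^{+}(F,\rho)\le 1-\frac{1}{n\log n}+\frac{i+1}{n^{2}}$, which is precisely null-valency of $\cH_{i+1}^{F,\rho}$ at the start of round $i+1$.

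First I would set up the reduction to the coin-flipping game. Since $\cH_i$ is null-valent, for every strategy $\cA$ we have $\Prob(\cH_i,\cA)\in[\frac{1}{n\log n}-\frac{i}{n^{2}},\,1-\frac{1}{n\log n}+\frac{i}{n^{2}}]$, so neither decision value is forced from $\cH_i$. I would associate to each random-calling process $p$ the bounded ``influence'' it exerts on the continuation probability according to whether the adversary keeps or fails $p$, and argue that, because the remaining processes are deterministic in round $i$, keeping the end-of-round state balanced is exactly the objective of the coin-flipping game on $k=r_i$ players. Invoking Lemma~\ref{cor:coin-game} with accuracy parameter chosen so that its failure probability drops below $\frac{2}{n^{2}}$ (i.e.\ at level $\Theta(\log n)$) yields, with probability $\ge 1-\frac{2}{n^{2}}$ over $\rho$, a hidden set $F$ of size $O(\sqrt{r_i\log n})$ that keeps the game's bias inside its neutral band; carrying the explicit constant from Talagrand's concentration inequality through this computation gives $|F|\le 16\sqrt{r_i\log n}$.

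The main obstacle, which I expect to be the heart of the proof, is the translation step: a neutral game outcome must imply that $\cH_{i+1}^{F,\rho}$ is null-valent with the \emph{widened} window $\frac{i+1}{n^{2}}$. For the bound on $g^{+}$ I would argue by contradiction: if some continuation strategy $\cA$ from $\cH_{i+1}^{F,\rho}$ achieved $\Prob(\cH_{i+1}^{F,\rho},\cA)>1-\frac{1}{n\log n}+\frac{i+1}{n^{2}}$, then prepending the round-$i$ action ``fail $F$'' and using that one extra round together with at most $16\sqrt{r_i\log n}$ crashes shifts the decision probability seen from $\cH_i$ by at most $\frac{1}{n^{2}}$ (this per-round slack is exactly what the valency-class definition budgets), one obtains a strategy from $\cH_i$ with continuation probability above $1-\frac{1}{n\log n}+\frac{i}{n^{2}}$, contradicting null-valency of $\cH_i$; the symmetric argument lower-bounds $g^{-}$. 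The two delicate points are (a) making the ``one round perturbs $\Prob$ by $\le\frac{1}{n^{2}}$'' claim rigorous — this is an indistinguishability statement, namely that from $\cH_{i+1}^{F,\rho}$ a suitable crash schedule can make the two worlds ``$\cH_i$ steered towards $0$'' and ``$\cH_i$ steered towards $1$'' look identical to enough correct processes, so the decision probability is pinned between the extremes reachable from $\cH_i$ up to the coin-game error — and (b) checking that the neutral band delivered by Lemma~\ref{cor:coin-game} is calibrated to width $\frac{1}{n^{2}}$ so that the round-$i$ and round-$(i+1)$ windows line up. Since $i<n$, the windows $[\frac{1}{n\log n}-\frac{i+1}{n^{2}},\,1-\frac{1}{n\log n}+\frac{i+1}{n^{2}}]$ remain strictly inside $[0,1]$, so the argument does not degenerate; and the $\sqrt{r_i}$ form of the failure count is exactly what lets the global proof of Theorem~\ref{thm:lower-randomness-res} amortize $\sum_i|F_i|$ against $R$ and $T$ by Cauchy–Schwarz.
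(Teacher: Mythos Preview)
Your proposal has the right overall shape --- reduce the round-$i$ transition to the coin-flipping game and derive a contradiction from null-valency of $\cH_i$ --- but it misreads what Lemma~\ref{cor:coin-game} actually delivers. That lemma does not produce a ``neutral band'' for a real-valued quantity; it says that for any \emph{binary}-valued function $f$ of the $r_i$ random choices, there exists \emph{some} value $v\in\{0,1\}$ toward which the adversary can bias $f$ by hiding $8\sqrt{r_i\log(\alpha^{-1})}$ inputs. You do not get to choose $v$, and the outcome is binary, so a single invocation cannot simultaneously cap $g^{+}$ from above and $g^{-}$ from below. Your ``influence'' and ``calibrated to width $\tfrac{1}{n^{2}}$'' language does not correspond to anything the lemma provides.

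The paper closes this gap by applying the coin-flipping lemma \emph{twice}, each time on a different binary partition of the four valency types of the next state --- and this is exactly where the constant $16=8+8$ comes from. First, partition the possible next states into class (a) $=\{\text{bivalent},\,1\text{-valent}\}$ versus class (b) $=\{\text{null-valent},\,0\text{-valent}\}$; the lemma guarantees biasing toward one of them with $8\sqrt{r_i\log n}$ failures. If biasing toward (a) were possible, then prepending that round-$i$ action to the continuation strategy witnessing $1$-valency from the resulting state yields a strategy from $\cH_i$ with $\Prob(\cH_i,\cA)>1-\tfrac{1}{n\log n}+\tfrac{i}{n^{2}}$; the arithmetic $(1-\tfrac{1}{n^{3}})(1-\tfrac{1}{n\log n}+\tfrac{i+1}{n^{2}})\ge 1-\tfrac{1}{n\log n}+\tfrac{i}{n^{2}}$ is where the per-round slack $\tfrac{1}{n^{2}}$ is absorbed --- no separate indistinguishability argument is needed. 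This contradicts null-valency of $\cH_i$, so biasing must go toward (b). Second, conditioned on landing in (b), partition again into (a$'$) $=\{0\text{-valent}\}$ versus (b$'$) $=\{\text{null-valent}\}$ and repeat: biasing toward (a$'$) would make $\cH_i$ $0$-valent, again a contradiction, so the adversary can force (b$'$). The two steps together cost $16\sqrt{r_i\log n}$ failures and two error events of probability at most $\tfrac{1}{n^{2}}$ each, giving the claimed $1-\tfrac{2}{n^{2}}$. Your contradiction-by-prepending idea is exactly what rules out the ``wrong'' direction in each step; what is missing is the two-step binary decomposition that turns a one-sided biasing primitive into a three-way elimination.
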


\begin{lemma*}[Lemma~\ref{lem:bi-valent} in Appendix~\ref{sec:proof-lower}]
Let $\cH_{i}$ be a bivalent state. By failing at most $16\sqrt{r_{i}\log{n}}+1$
processes per round, the adversary can extend the state for the next $i' > 1$ rounds, with probability at least $1 - \frac{i'}{n\log{n}}$, reaching a state $\cH_{i + i'}$  that is either bivalent or terminating. The latter case can happen only because failing the necessary processes in round $i + i' - 1$ would exceed the limit $t$ on the total number of failures.
\end{lemma*}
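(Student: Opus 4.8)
The plan is to reduce the statement to a single‑round extension claim and then iterate it $i'$ times with a union bound. For one round, I would split the adversary's move into a \emph{randomness‑neutralizing} part, handled by the parameterized coin‑flipping game (Lemma~\ref{cor:coin-game}), followed by a \emph{deterministic bivalency‑preserving} part in the style of the classical valency/indistinguishability argument of Fischer--Lynch--Paterson and Bar‑Joseph--Ben‑Or~\cite{FischerLP85,Bar-JosephB98}.

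\emph{Single‑round extension.} Fix a bivalent $\cH_i$, so there are strategies $\cA,\cA'$ extending it with $\Prob(\cH_i,\cA) > 1-\tfrac{1}{n\log n}+\tfrac{i}{n^2}$ and $\Prob(\cH_i,\cA') < \tfrac{1}{n\log n}-\tfrac{i}{n^2}$. In round $i$ the local‑computation phase runs and the $r_i$ accesses to random sources produce outcomes, which the full‑information adversary now sees. I would invoke Lemma~\ref{cor:coin-game} with its parameter tuned so that the failure probability is at most $\tfrac{1}{n\log n}$ and the number of sacrificed coin‑holders is at most $16\sqrt{r_i\log n}$: this yields a set $F$ of at most $16\sqrt{r_i\log n}$ processes to fail in round $i$ so that, with probability at least $1-\tfrac{1}{n\log n}$ over the round's coins, for \emph{every} way of completing the round's message deliveries the revealed randomness shifts the deciding probabilities $\Prob(\cdot,\cdot)$ by at most the per‑round drift $\tfrac{1}{n^2}$ built into the valency definitions, so the valency type of $\cH_i$ is not ``used up'' by the coins. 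Condition on this event; what remains is a deterministic choice of which messages sent by/to the processes in $F$ and the already‑faulty ones are delivered.

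\emph{Deterministic hook.} Suppose for contradiction that no resulting one‑round extension is bivalent; then all of them are uni‑valent. The extension produced by $\cA$'s round‑$i$ action is not $0$‑valent, and the one produced by $\cA'$'s action is not $1$‑valent, so along the ``delivery graph'' of one‑round extensions — two extensions adjacent when they differ only in whether a single message from one already‑faulty process $p$ reaches one recipient $q$ — there is an adjacent pair $\cH^{1}$ ($1$‑valent) and $\cH^{0}$ ($0$‑valent). I would then additionally fail $q$ in round $i$ (this is the ``$+1$''), obtaining a state $\cH'$ that agrees with $\cH^{0}$ except that $q$ is now faulty. Since extra failures only help the adversary, porting $\cH^{0}$'s $0$‑reaching strategy to $\cH'$ gives a strategy with deciding probability below $\tfrac{1}{n\log n}-\tfrac{i+1}{n^2}$; and silencing $q$ from round $i+1$ on makes the non‑$q$ processes unable to distinguish $\cH'$ from ``$\cH^{1}$ with $q$ silenced'', which together with $1$‑valency of $\cH^{1}$ yields a strategy from $\cH'$ with deciding probability above $1-\tfrac{1}{n\log n}+\tfrac{i+1}{n^2}$; hence $\cH'$ is bivalent, a contradiction. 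If instead failing $q$ would push the total number of failures past $t$, the hook cannot be applied — this is exactly the ``terminating'' escape clause, and it can only arise in the last round $i+i'-1$ of the extension.

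\emph{Iteration and the hard part.} Applying the single‑round step $i'$ times and union‑bounding the at‑most‑$\tfrac{1}{n\log n}$ per‑round failure probability of the coin game over the $i'$ rounds, with probability at least $1-\tfrac{i'}{n\log n}$ all coin games succeed, and on that event $\cH_{i+i'}$ is bivalent or the escape clause has triggered, while the number of processes failed per round never exceeds $16\sqrt{r_i\log n}+1$. The main obstacle I expect is the quantitative deterministic hook: unlike in the deterministic FLP setting where univalency is absolute, one must show that a single extra failure plus the indistinguishability argument preserves \emph{both} a sufficiently‑$1$‑reaching and a sufficiently‑$0$‑reaching strategy within the $\tfrac{i}{n^2}$‑slack of the valency definitions, and that this bookkeeping composes cleanly across the $i'$ rounds with the errors coming from the coin game — lining up the constants and the drift terms is the delicate part, whereas the coin‑game step is precisely where the new parameterized analysis of Lemma~\ref{cor:coin-game} carries the load.
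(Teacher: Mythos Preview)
Your approach diverges from the paper's in a way that introduces a genuine gap. The paper's proof of this lemma does \emph{not} invoke the coin-flipping game at all; that tool is used only for the null-valent extension (Lemma~\ref{lem:null-valent}). Instead, the paper exploits bivalency directly: since $\cH_i$ is bivalent, there exists a strategy $\cA_i$ (already in the restricted class failing at most $16\sqrt{r_i\log n}+1$ per round) with $\Prob(\cH_i,\cA_i)<\tfrac{1}{n\log n}$. The adversary simply \emph{implements} $\cA_i$. After the round-$i$ coins resolve, either the state is already bivalent/null-valent (done), or it is, say, $1$-valent. In the latter case the adversary traces $\cA_i$'s actions one failure/one message at a time, watching for the first moment the valency flips to $0$-valent; at that moment, withholding or not the last message (and spending the ``$+1$'' on the recipient in the next round) yields a bivalent state, exactly as in your deterministic hook. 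If no flip occurs in round $i$, the state $\cH_{i+1}$ is still $1$-valent, so the algorithm cannot correctly decide $0$; and since $\cA_i$ is being followed, the probability of deciding $1$ is at most $\tfrac{1}{n\log n}$ over the entire multi-round run of $\cA_i$. The adversary keeps following $\cA_i$ in rounds $i{+}1,i{+}2,\ldots$ until a flip appears or the failure budget is exhausted; the $1-\tfrac{i'}{n\log n}$ bound is just a union bound on early deciding under $\cA_i$.

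The specific gap in your plan is the sentence claiming that Lemma~\ref{cor:coin-game} ``yields a set $F$ \ldots\ so that \ldots\ the revealed randomness shifts the deciding probabilities $\Prob(\cdot,\cdot)$ by at most the per-round drift $\tfrac{1}{n^2}$.'' The coin game does not do this: it lets the adversary bias a single \emph{binary} classification $f$ of coin outcomes towards one of its two values, and it does not let the adversary choose which value. In the null-valent lemma this suffices because one can argue, by contradiction with null-valency of $\cH_i$, that the biasable value must be the ``good'' class (twice). For a bivalent $\cH_i$ there is no analogous contradiction ruling out bias towards uni-valent outcomes, and certainly no mechanism in Lemma~\ref{cor:coin-game} that bounds a continuous drift in $\Prob(\cdot,\cdot)$. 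Your deterministic hook is essentially the right idea, but it must be run \emph{along the actual strategy $\cA_i$} across possibly many rounds, not after a coin-game step that the lemma cannot deliver.
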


Using the above lemmas, we can prove Theorem~\ref{thm:lower-randomness-res} (equivalent to Theorem~\ref{thm:lower-randomness} in Appendix~\ref{sec:proof-lower}).

\begin{proof}[Proof of Theorem~\ref{thm:lower-randomness-res}]
By Lemma~\ref{lem:initial-exe} in Appendix~\ref{sec:proof-lower} (see also its statement above), the adversary can assign input values such that the initial state is in either a bivalent or a null-valent state. Then, the adversary follows the strategy described in Lemmas~\ref{lem:null-valent} and~\ref{lem:bi-valent} in Appendix~\ref{sec:proof-lower} (see also the statements above), depending whether the current state is null-valent or bivalent.
Specifically, if the state is null-valent, the adversary can extend the execution by one more round with probability at least $1 - \frac{1}{n^{2}}$, again by Lemma~\ref{lem:null-valent} in Appendix~\ref{sec:proof-lower}. If the state is bivalent, it can extend the state for some $i' > 1$ rounds 
with probability at least $1 - \frac{i'}{n\log{n}}$, such that the new state is again either bivalent or null-valent, or the execution terminates but then the number of failed processes in the previous round would exceed the adversary's limit $t$. If the algorithm decides to terminate, it must be in either a $0$-valent or $1$-valent state, since the algorithm is $\left(1 - \frac{1}{n^{3/2}}\right)$-strongly-correct. Therefore, the adversary can prolong the execution either for $n$ rounds or until it runs out of the processes to fail. 

Let $T$ be the round in which the execution terminated. If $T = n$, then the theorem
follows. 
Assume then that the adversary stopped implementing its strategy in round $T$ due the fact that in the preceding round it could not fail the desired number of processes. Since the adversary fails at most 
$16\sqrt{r_{i}\log{n}}+1$ 
processes in a round $i$ (c.f., Lemmas~\ref{lem:null-valent} and~\ref{lem:bi-valent} in Appendix~\ref{sec:proof-lower}), we obtain
\[t \le \sum_{i = 1}^{T - 1}\left(16\sqrt{r_{i}\log{n}}+1\right) 
\le 32 \sum_{i = 1}^{T - 1} \sqrt{(r_{i}+1)\log{n}} \ ,\]
which is equivalent to
\[ t^{2} \le 1024\left(\sum_{i = 1}^{T - 1}\sqrt{r_{i} \log{n}}\right)^{2} \ . \]
Applying the Cauchy-Schwarz inequality to the right hand side of the above, we get
\[ t^2 \le 
1024\left(\sum_{i = 1}^{T - 1}\sqrt{(r_{i} + 1)\log{n}}\right)^{2} 
\le 1024(T - 1)\left(\sum_{i = 1}^{T - 1} (r_{i} + 1)\log{n} \right) \ , 
\]
which, after proper rearranging, yields
\[ \frac{t^{2}}{1024\log{n}} \le (T - 1) \times (R + T) \]
and proves the theorem.
\end{proof}

\section{Interpolation between random and deterministic solutions}
\label{sec:overview-randomness}

To complement the lower bound from Theorem~\ref{thm:lower-randomness-res},
we give a trade-off algorithm that matches it (with respect to poly-logarithmic factor) 
and each access to random source gets only one bit.


\begin{theorem}
\label{thm:trade-off-res}
For any $R \in O(n^{3/2})$, there exists an algorithm solving consensus (with probability~$1$) against the most powerful, adaptive omission-causing adversary failing at most $t < \frac{n}{60}$ processes, which terminates in $\logO\left(\frac{n^{2}}{R}\right)$ rounds and uses, in total, $\logO\left(n^2\right)$ bits of communication and $\logO\left(R\right)$ bits of randomness,~whp.
\end{theorem}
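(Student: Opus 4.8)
The plan is to turn a single epoch of Algorithm~\ref{alg:opt-omissions} into a subroutine whose randomness is a tunable parameter $m\in[1,n]$ while its per-epoch communication drops to $\logO(n)$, and then to run the appropriate number of such epochs; we will set $m=\Theta\!\big((R/(n\,\mathrm{polylog}\, n))^{2}\big)$. In every epoch let the \emph{committee} be the $m$ operative processes of smallest identifier. An epoch does the following: (i) using the dense-and-shallow sparse graph of Theorem~\ref{thm:random-graph-properties}, exactly as \textsc{GroupBitsSpreading} already does, all operative processes robustly aggregate and learn a constant number of $O(\log n)$-bit public values: the number of committee members currently holding candidate value $0$, the number holding $1$, and (only in this epoch) $X:=$ the sum of $m$ fresh independent bits, one flipped by each committee member; (ii) every process — committee or not — updates its candidate value from these public values using the thresholds of lines~\ref{line:if-1}--\ref{line:if-random}, reading the ``fraction of $1$'s'' off the committee counts and replacing the per-process coin of line~\ref{line:if-random} by the \emph{weak shared coin} $\beta$ that equals $1$ if $X\ge m/2$ and $0$ otherwise. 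Thus an epoch uses $m$ random bits, $O(\log n)$ rounds, and, since only $O(1)$ numbers traverse the $\Theta(n\log n)$ graph edges, $\logO(n)$ communication bits.

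The correctness analysis will be the valency/convergence argument of~\cite{Bar-JosephB98} (cf. Lemmas~\ref{lem:three-epochs},~\ref{lem:deciding-1}) re-done for this committee coin. Since $X$ is Binomial$(m,1/2)$, with constant probability $|X-m/2|>c'\sqrt m$, while the robust aggregation of item~(i) is perturbed by at most the number of committee members that stop being operative during the epoch; hence, in an epoch in which fewer than $c\sqrt m<c'\sqrt m$ committee members fail, every operative process reads the same sign of $X-m/2$, so all operative processes adopt the same candidate value (either the one-sided committee count forces it, or they all take $\beta$), and full agreement follows within one further such epoch. An epoch that does not yield this progress must therefore contain $\Omega(\sqrt m)$ committee failures (and failing $\Omega(n/\sqrt m)$ ordinary processes instead is never cheaper since $m\le n$); with a budget of $t=\Theta(n)$ this bounds the number of such ``bad'' epochs by $O(n/\sqrt m)$, so $\Theta(n/\sqrt m)+\Theta(\log n)$ epochs leave $\Omega(\log n)$ good ones and give agreement whp. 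The decision is then disseminated all-to-all (lines~\ref{line:good-spread}--\ref{line:if-decided}), and the $\texttt{decided}$ safety rule (line~\ref{line:if-1-r}) together with the deterministic fallback of~\cite{DolevS83} (line~\ref{line:spread-2}) upgrade correctness to probability $1$ on an event of polynomially small probability, which changes none of the claimed bounds. Validity is immediate: if all non-faulty processes start with the same $b$, so do all committee members, the first aggregation is one-sided, and no coin is ever tossed. Tallying: $\logO(n/\sqrt m)=\logO(n^{2}/R)$ rounds, $m\cdot\logO(n/\sqrt m)=\logO(n\sqrt m)=\logO(R)$ random bits, and $\logO(n)\cdot\logO(n/\sqrt m)+O(n^{2})=\logO(n^{2})$ communication bits for every $R$.

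The step I expect to be the genuine obstacle is item~(i): the committee aggregation has to be simultaneously \emph{cheap} ($\logO(n)$ bits, $O(\log n)$ rounds) and \emph{robust} in the precise sense that any two operative processes' aggregated values differ only by the in-epoch committee losses, even though the committee is defined through the (slightly process-dependent) operative status and can be attacked adaptively. Reusing the graph of Theorem~\ref{thm:random-graph-properties} supplies the routing and, through Lemma~\ref{lem:good-proc-are-large}, keeps $\ge n-3t$ processes operative — this is why we strengthen the hypothesis to $t<n/60$, leaving slack for the committee — but one still has to verify that the canonical ``$m$ smallest-ID operative processes'' rule is stable enough that all operative processes agree on the committee's aggregated contribution up to the in-epoch failures, and that an adversary cannot with $o(\sqrt m)$ failures both bias $X$ and desynchronize the committee counts. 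Granting this, the endpoints are automatic: $m=\Theta(n)$ reproduces Algorithm~\ref{alg:opt-omissions} and Theorem~\ref{thm:omissions-opt-res} ($\logO(\sqrt n)$ rounds, $\logO(n^{3/2})$ random bits), while $m=\Theta(\mathrm{polylog}\,n)$ — equivalently $R=\logO(n)$ — gives a communication-optimal, $\logO(n)$-round, near-deterministic consensus, so the single parametrized protocol spans the whole range $R\in O(n^{3/2})$.
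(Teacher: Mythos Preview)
Your approach is genuinely different from the paper's, and the validity step contains a real gap.

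The paper (Section~\ref{sec:tradeoff-upper}, Algorithm~\ref{alg:tradeoff}) does not modify the epoch of Algorithm~\ref{alg:opt-omissions} at all. It partitions $\cP$ into $x$ groups $SP_1,\ldots,SP_x$ of size $\lceil n/x\rceil$ and, in a round-robin over $i=1,\ldots,x$, runs the \emph{entire} truncated \textsc{OptimalOmissionsConsensus} inside $SP_i$; the output is then flooded on the sparse graph $G$ and becomes every operative process's input bit for the remaining phases. One phase costs $\logO\big((n/x)^{3/2}\big)$ random bits and $\logO\big(\sqrt{n/x}\big)$ rounds, so $x$ phases give $R=\logO\big(n^{3/2}/\sqrt{x}\big)$ and $T=\logO\big(\sqrt{nx}\big)$, which is the announced curve; the safety rule and deterministic fallback are applied once at the outer level. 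So the trade-off comes from shrinking the \emph{consensus instance}, not from replacing the per-process coin by a committee coin.

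Your gap is validity. You write ``if all non-faulty processes start with the same $b$, so do all committee members,'' but committee membership is defined by \emph{operative} status, and in the omission model a faulty process stays operative whenever the adversary chooses not to drop its incoming $G$-messages. Take the input assignment in which processes $1,\ldots,m$ have $b_p=1$ and all others have $b_p=0$; the adaptive adversary corrupts exactly $\{1,\ldots,m\}$ (permitted for every $m\le t$, i.e., throughout the small-$R$ regime your construction targets) and omits nothing. Every non-faulty process has input $0$, yet the committee is $\{1,\ldots,m\}$ with all ones; your first aggregation is one-sided, every process deterministically sets $b\gets 1$ in line~\ref{line:if-1}, the coin is never consulted, and the output is $1$. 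The natural repair---keep the counts over \emph{all} operative processes as in Algorithm~\ref{alg:opt-omissions} and delegate only the coin to the committee---does restore validity, but then the per-epoch aggregation is back to the $\logO(n^{3/2})$ bits of \textsc{GroupBitsAggregation}+\textsc{GroupBitsSpreading}, and over your $\logO(n/\sqrt m)$ epochs the communication is $\logO(n^{5/2}/\sqrt m)$, above $\logO(n^2)$ precisely when $m$ is small. Your self-flagged obstacle (making item~(i) robust and cheap) is real but secondary; the delegation of the \emph{counts} to a small, adversarially controllable committee is what breaks the argument.
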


The extended version of the above result is formally described and proved in
Section~\ref{sec:tradeoff-upper}, as Theorem~\ref{thm:alg-tradeoff}. We give here a general overview.
The algorithm uses an idea of grouping processes into smaller subsets, first solving consensus within each subset and then propagating the decision to the entire system. Such a scheme is typically heterogeneous, in the sense that intra-group procedures/consensus mimics an efficient randomized algorithm while the inter-group procedures are based on efficient deterministic solution. Therefore, the smaller the groups the more randomness could be reduced (in theory).
Grouping algorithms were first used in~\cite{DBLP:conf/stoc/HajiaghayiKO22}, where a similar trade-off between \textit{communication/random-bit-complexity} and time was shown in case the processes are prone to crashes only (more benign faults). However, the same framework cannot be used here because, as discussed earlier, omission failures automatically introduce at least quadratic communication complexity, while the scheme in \cite{DBLP:conf/stoc/HajiaghayiKO22} relied on lower-communication
procedures.

Therefore, our omission-based implementation is very different from the crash-based one in~\cite{DBLP:conf/stoc/HajiaghayiKO22} and relies, once more, on the idea of dividing processes into operative and inoperative and on combining this approach with a round-robin algorithm working on groups (i.e., considering groups one-by-one). For the sake of clarity, assume that we split the set of processes $\cP$ into $x$ groups $SP_{1}, \ldots, SP_{x}$, of size at most $\ceil{\frac{n}{x}}$ each. In $x$ subsequent phases, we let each group invoke our almost-optimal consensus algorithm against omissions (limited only to the members of the group), given by Theorem~\ref{thm:omissions-opt-res}, one group per phase. Then, we propagate the decision of the consensus algorithm to other processes in the system and require that each subsequent call to the almost-optimal consensus algorithm, which takes place in any later phase, uses the propagated value as the input bit for the processes involved in the call in this phase. 

The crux is in 
executing
the above only among operative processes. Similarly to the algorithm presented by Theorem~\ref{thm:omissions-opt-res}, we use a result about random graphs of expected degree $\Theta(\log n)$ for specifying the communication graph between groups. We also determine whether a process is operative or not based on the number of messages it received in the random graph. 
Ultimately, we again use the observation that the operative processes, ergo those who constantly receive many messages from processes who are their neighbors in the communication graph, can exchange any information between themselves within $O(\log{n})$ rounds. This property is crucial for correctness of our algorithm and its detailed explanation can be found in the proof of Lemma~\ref{lem:operative-contribution}.
It then follows that if a group, which has sufficiently many non-faulty and operative processes, calculated a decision value, this value is properly distributed to all other operative processes. 
From now on, the operative processes have the same input value in all remaining phases and, therefore, the adversary cannot prevent them from deciding on that value. 
The existence of a group satisfying the requirements on the number of non-faulty and operative processes is a consequence of limiting the adversary to at most a constant fraction of
faults, and -- similarly as in the main Algorithm~\ref{alg:opt-omissions} -- the fact that a random graph of expected degree $\Theta(\log{n})$ has dense and shallow (with a small diameter) sub-graphs of size being a constant fraction of all vertices. 

Now, the trade-off comes from the fact that the number of random bits needed to solve consensus in a group does not scale linearly with the size of the group. If the size of the group is roughly~$\frac{n}{x}$, our optimal algorithm uses $\logO\left(\frac{n}{x}\cdot \sqrt{\frac{n}{x}}\right) = \logO\left(\left( \frac{n}{x}\right)^{3/2}\right)$ random bits per run. It follows 
that performing $x$ of such runs, one by one, gives the desired trade-off between time $T=\logO\left(x\cdot \sqrt{\frac{n}{x}}\right) = \logO\left(\sqrt{nx}\right)$ and the randomness complexity 
$R=\logO\left(x\cdot \left(\frac{n}{x}\right)^{3/2}\right) = \logO\left(\frac{n^2}{\sqrt{nx}}\right) = \logO\left(\frac{n^2}{T}\right)$. 
The communication bit complexity bound follows from applying the bound of Theorem~\ref{thm:main-algo} to each phase of the round-robin scheme separately. 

\section{Future Directions}
\label{sec:future}

We believe that our novel parameterized valency technique could be used in proving lower bounds and impossibility results regarding trade-offs between randomness and other complexity measures in other distributed problems. First important open question is the tight characterization of the tradeoff between round complexity and randomness used by processes for the case when $t = o(n)$. Specifically, Theorem~\ref{thm:alg-tradeoff} in Appendix~\ref{sec:tradeoff-upper} (extended version of Theorem~\ref{thm:trade-off-res}) gives an algorithm satisfying the invariant $\texttt{ROUNDS} \times \texttt{RANDOMNESS} = \tilde\Theta(n^2)$, regardless of the number of faults. On the other hand, Theorem~\ref{thm:lower-randomness-res} gives only $\tilde\Omega(t^2)$ lower bound on this product.

Second important open question is about trade-off between communication complexity and time when $t=o(n)$, for any type of faults. In particular, for crash failures there is only an upper bound known~\cite{DBLP:conf/stoc/HajiaghayiKO22} for such tradeoff, while for omission failures, surprisingly, there is no such tradeoff when $t = \Theta(n)$, by our Theorem~\ref{thm:omissions-opt-res} (as it simultaneously matches, up to a polylogarithmic factor, the two independent lower bounds on time~\cite{Bar-JosephB98} and communication~\cite{AbrahamCDNPRS19}).

The concept of operative processes, maintaining them locally at (relatively) low cost and using them for performing tasks such as efficient counting and information exchange, could be a game-changing concept in designing and analysis of distributed fault-tolerant algorithms. We demonstrated how to use it efficiently against adaptive omission faults, but we suspect it could be applied to other computation and failure models and problems. On the technical side, it would be interesting to further improve communication performance of using operative processes in case of smaller number of failures.

\bibliographystyle{plain}
\bibliography{bibliography}

\appendix

\begin{center}
    {\bf\Large Appendix}
\end{center}

\section{Other Related Work}
\label{sec:related}

The consensus problem was introduced by Lamport, Pease and Shostak~\cite{LamportSP82,PeaseSL80}, in the context of {\bf\em deterministic} solutions. Fischer, Lynch and Paterson~\cite{FischerLP85} showed that the problem is unsolvable by any deterministic algorithm in an asynchronous setting, even 
if one process may fail.
Fischer and Lynch~\cite{FischerL82} proved that a synchronous solution requires $t+1$ rounds if up to~$t$~processes~may~crash, which is automatically applicable to settings with more severe failures, including omissions.
Dolev and Strong~\cite{DolevS83} gave an efficient deterministic solutions to Consensus under even a stronger Authenticated Byzantine failures, working in optimal time $t+1$ and using $\Omega(nt)$ messages, which was (nearly) matched by Dolev and Reischuk~\cite{DolevR85}, 
who proved an $\Omega(t^2+n)$ lower bound.

Recently, Abraham et al.~\cite{AbrahamCDNPRS19} showed that any (even {\bf\em randomized}) algorithm that solves Consensus with a constant probability against an adaptive adversary requires $\Omega(t^2 + n)$ messages, 
whp. 
Regarding time complexity, Bar Joseph and Ben-Or~\cite{Bar-JosephB98} showed that no algorithm can solve Consensus, with correctness probability $1$, against an adaptive adversary in fewer than $\Omega(\frac{t}{\sqrt{n\log{n}}})$ expected rounds. It holds even for more benign crash failures. Other classic work~\cite{karlin1986probabilistic} shows a lower bound on the probability of failure of any algorithm that runs in a fixed number of rounds, given that a linear (in $n$) number of processes is prone to Byzantine failures.
This result holds even if the adversary's knowledge is limited to the history of controlled processes and if the algorithm has cryptographic routines such as the public-key infrastructure or threshold signatures. Following a long line of work~\cite{AbrahamCDNPRS19, fitzi2003efficient, ghinea2022round, katz2006expected, micali2017optimal}, this result has been recently matched, under these cryptographic assumptions, by an algorithm of~\cite{ghinea2022round}. 

On the algorithms' side, the deterministic algorithm by Dolev and Strong~\cite{DolevS83} has been also the best known overall solution (including randomized algorithms) against an adaptive adversary (of unbounded computational power and full information) causing omission failures, with respect to time and message complexity. Other works that consider omission failures, sometimes known as general omission failures (since both, incoming and outgoing messages of a faulty processes can be omitted), are mostly concerned with the early-stopping version of the consensus problem variant~\cite{raipin2010strongly, rocsu1996early}, unlike our work focused on optimization of three complexity measures.

Better time and/or communication complexity is possible if {\bf\em failures are more benign,} e.g., processes are crashed permanently (instead of message omissions) or the {\bf\em adversary is weaker} (more restricted or more oblivious).
Hajiaghayi et al.~\cite{DBLP:conf/stoc/HajiaghayiKO22} designed a solution under $t < n$ crashes in almost-optimal time and using a subquadratic number of communication bits $\logO(n^{3/2})$. 
King and Saia~\cite{KingS11} proved that under some limitations on the Byzantine adversary and requiring termination only whp, the subquadratic expected communication complexity $\logO(n^{3/2})$ and even polylogarithmic time can be achieved. 
Even better complexities could be achieved against an {\em oblivious adversary}, i.e., the adversary who knows the algorithm but has to decide which process to fail and when before the execution starts. 
Chor, Merritt and Shmoys~\cite{ChorMS89} developed constant-time randomized algorithms 
%
while 
Gilbert and Kowalski~\cite{GilbertK10} 
solved Consensus in
optimal communication complexity $O(n)$ 
and 
$ \mathcal{O}(\log n) $ time, whp.
Sub-quadratic communication is also possible in asynchronous setting under oblivious adversary, which decides processes' and messages' delays in advance, c.f.,~\cite{GeorgiouGGK13}.

On the other side, {\bf\em stronger adversaries} (such as general Byzantine, discussed earlier) or/and {\bf\em asynchrony} require linear time or/and (super-)quadratic communication.
Georgiou et al.~\cite{GeorgiouGGK13} showed that even a simpler task of fault-tolerant gossip requires linear time {\em or} quadratic communication when an adaptive adversary crashes processes.
Recently, Alistarh
et al.~\cite{AlistarhAKS18} showed how to obtain almost-optimal communication complexity $O(n^2\log{n})$ if less then $n/2$ processes may fail, which improved upon the previous result $O(n^2\log^2 {n})$ by Aspnes and Waarts~\cite{AspnesW96} and is asymptotically almost optimal due to the lower bound $\Omega(n^2)$ by Attiya and Censor.

\section{Almost-optimal Algorithm against Omissions Failures}
\label{sec:main-algo}

We present a randomized algorithm that solves consensus with probability $1$. It works in $O\left(\frac{t}{\sqrt{n}}\log^{2}{n}\right)$ rounds with high probability and uses $O\left(n(t\log^4{n} + n) \right)$ communication bits with probability $1$ against any number up to $t < \frac{n}{30}$ of omission-faulty processes. We give a description below while the pseudocode of the algorithm has been provided in Algorithm~\ref{alg:opt-omissions} in Section~\ref{sec:overview-alg}, together with an extended overview and illustrative figures.

The algorithm partitions processes into two classes: \textit{operative} processes and \textit{inoperative} processes. The operative processes have the property that any two of them can exchange a bit of information in at most $O(\log{n})$ rounds, using a certain sparse graph of communication. In contrast, the inoperative processes are those that, due to the actions of the adversary, have been excluded from the fast flow of information guaranteed for the operative ones. 
We note that our partition is dynamic. The status of a process can change during the run of the algorithm. 
Our partition is also independent of the partition into faulty and non-faulty processes at the logical level. A non-faulty process may become inoperative if its communication neighborhood diminishes (e.g., many of its randomly selected neighbors could become faulty and omit messages), while faulty processes may stay operative as long as they are good transmitters from the perspective of other processes, despite the omissions of some of their messages. The operative/inoperative status of a process is also local -- other processes may not be aware of it.

To start, all processes are initially designated as operative (see line~\ref{line:operative-init}). Next, each operative process determines its communication neighbors, a set denoted by $V_{p}$ in the algorithm, based on its identifier $p$ and the set of all processes $\cP$ taking part in the protocol. To be precise, in Theorem~\ref{thm:random-graph-properties} we state the existence of a sparse random graph with a fault-tolerant properties strong enough to support the later stage of the algorithm regardless of the omissions pattern. Since the processes know $\cP$, they can all pre-compute the same graph (i.e., they can all choose a lexicographically smallest such graph) that is guaranteed by Theorem~\ref{thm:random-graph-properties}. Observe that, in order to do this selection, the processes do not need to communicate, as the properties of such graph are purely combinatorial 
and only
rely on the knowledge of $p$ and $\cP$.

During the main loop of the algorithm (lines~\ref{line:main-for-begin}-\ref{line:main-for-end}), the operative processes work collectively to establish a consensus decision among them. This work is divided into $O\left(\frac{t}{\sqrt{n}}\log n\right)$ \textit{epochs} and utilizes a variant of the majority-vote protocols introduced for the case of, more benign, permanent crash failures in~\cite{Bar-JosephB98}. 

In our approach, only operative processes participate in the majority-vote protocol. This is achieved as follows: each operative process prepares a variable (interpreted as a \textit{candidate value} for the final decision) $b_p$ initialized to its input bit. In each epoch, the operative processes rely on two different communication subroutines, \textsc{GroupBitsAggregation} and \textsc{GroupBitsSpreading} (see lines~\ref{line:group-bits-aggr} and \ref{line:sum_ones_zeros}), to calculate the number of operative 1's and 0's present in the candidate values. 
The communication algorithms \textsc{GroupBitsAggregation} and \textsc{GroupBitsSpreading} guarantee that, after $O(\log{n})$ rounds and using at most $O\left( n^{3/2} \polylog(n)\right)$ communication bits, every operative process has a correct estimation of the number of operative 1's in the candidate values from the beginning of the epoch (the variable $\texttt{ones}_{p}$ in the pseudocode) and operative 0's in the candidate values from the beginning of the epoch (the variable $\texttt{zeros}_{p}$ in the pseudocode) after their completion. Since the description of these routines is highly technical, we devoted Subsection~\ref{subsec:comm-patterns} for this purpose.
Based on the values of the variables $\texttt{ones}_{p}$ and $\texttt{zeros}_{p}$ process $p$ decides what candidate value assign to variable $b_{p}$ for the next epoch. If the estimation $\texttt{ones}_{p}$ makes at least $\frac{18}{30}$ fraction of the estimation of all operative processes (which is equal to
$\texttt{ones}_{p} + \texttt{zeros}_{p}$), then the process assigns $b_{p} = 1$; if the estimation $\texttt{ones}_{p}$ is less than half of the estimation of the total number of operative processes then the process assigns $b_{p} = 0$. In any other case, it assigns $b_p$ to a uniform random bit. The proportion $\frac{18}{30}$ is exactly $\frac{1}{2}$ incremented by the maximal possible fraction of inoperative processes there can be in the algorithm. It guarantees that no two operative processes can deterministically assign $b_{p} = 0$ and $b_{p} = 1$ at the same time.
See Figure~\ref{fig:random-coin} for illustration.

In the analysis, we will demonstrate that if the number of new processes becoming inoperative in a number of consecutive epochs is smaller than $\sqrt{n}$, the choices made by the operative processes will lead to the same candidate value $b_p$ across all operative processes with a constant probability. Our communication algorithms are designed in such a way that $n - \Theta(t)$ processes are always operative in the system, regardless of the adversary's actions. Therefore, after $O\left(\frac{t}{\sqrt{n}}\log{n}\right)$ such epochs, we can assert that operative processes have reached a consensus decision (i.e., on same value) with high probability. To lift the probability of success to $1$, the operative processes adopt a safety rule in the following form. If the estimation value $\texttt{ones}_{p}$ makes at least $\frac{9}{10}$ fraction of the estimation of all operative processes (or, symmetrically, the estimation of $\texttt{ones}$ makes less than $\frac{1}{10}$ fraction), a process $p$ marks in an auxiliary variable $\texttt{decided}_{p}$ that it is ready to decide by setting it to $true$. The idea is that, according to the previous arguments, it is only with probability less than $O\left(\frac{1}{n} \right)$ that there is an operative process that has not become ready to decide during the $O\left(\frac{t}{\sqrt{n}}\log{n}\right)$ epochs. In this unlikely event, processes of this type execute a slow but deterministic consensus protocol given as Theorem~4 in~\cite{DolevS83}, whose $O(t)$ running time is accounted to the total time complexity with $O\left(\frac{1}{n} \right)$ probability only.

To be more precise, in the last part of the algorithm, lines~\ref{line:good-spread}-\ref{line:fixing-und}, the operative processes who have the variable $\texttt{decided}_{p}$ set to $true$, broadcast their variable $b_{p}$ (at this time it is a consensus value) to all other processes. Then, the operative processes that have the variable $\texttt{decided}_{p}$ set to $true$ and inoperative processes who have received a consensus value $b$ in the previous round, acquire it, decide on the value and stop from participating in the protocol. 
On the other hand, all operative processes that have the variable $\texttt{decided}_{p}$ set to $false$, execute the deterministic consensus protocol from~\cite{DolevS83}, Theorem~4, which reaches consensus with probability $1$ in $O(t)$ rounds. In the analysis, we show that if any operative process has the variable $\texttt{decided}_{p}$ set to $true$, all other operative processes have the same candidate value in their bit $b_{p}$ as the operative process with the variable $\texttt{decided}_p$ set to $true$, thus -- the decision can safely by made on the value of the variable $b$.
The formal analysis of the correctness, running time and bit complexity is given in Subsection~\ref{subsec:analysis-main}.

\subsection{Communication patterns of operative processes within an epoch}
\label{subsec:comm-patterns}

In this part we describe the communication algorithms used by processes to gather information about candidate values' distribution from the beginning of the epoch. We start by explaining the algorithm \textsc{GroupBitsAggregation}, which achieves the goal limited only to subgroups of processes of size $O(\sqrt{n})$. Limiting the size of a group on which we estimate the candidate values is crucial for achieving $O(n^2)$ communication bit complexity.\\
\noindent \textbf{Explanation of \textsc{GroupBitsAggregation}.} 
We partition the set of processes $\cP$ into at most $\ceil{\sqrt{n}}$ groups $W_{1}, \ldots, W_{\ceil{\sqrt{n}}}$, each with a size of at most $\ceil{\sqrt{n}}$ processes, as shown in lines~\ref{line:sqrt-part}-\ref{line:group}. 
See also an example in Figure~\ref{fig:group-partition}.
Since every process knows the set of process names, the partition could be pre-defined, i.e., can be calculated locally by each process using the same local deterministic algorithm. Processes in different groups work independently in algorithm \textsc{GroupBitsAggregation}.

The objective of each group is to compute the number of operative processes within the group that have the variable $b$ set to $0$ and the number of those with $b = 1$, or mark inoperative the processes that cannot gather the necessary information. At the end of the algorithm, each operative process $p$ should have these two values, $\texttt{b\_zeros}_{p}$ and $\texttt{b\_ones}_{p}$, stored in its local memory, and updated its operative status $\texttt{operative}_{p}$. Algorithm~\ref{alg:bits-agreg} presents the pseudocode, while a formal description is provided below.
For each group $W_{i}$, where $1 \le i \le \ceil{\sqrt{n}}$, we use a binary-tree decomposition of depth $\ceil{\log(\sqrt{n}))}$. At the lowest layer $L^{(i)}(1)$ of the decomposition, the processes in $W_{i}$ are divided into $\ceil{\sqrt{n}}$ singleton \textit{bags}, yielding the following:
$$L^{(i)}(1) := \{ L^{(i)}(1,1), \ldots, L^{(i)}(1, \ceil{\sqrt{n}} \}.$$ 
For any subsequent layer $j$, $2 \le j \le \ceil{\log(\sqrt{n})}$, the bags of layer $j$ are formed as a union of bags at level $j-1$ corresponding to the children of the $j$-level bag in the tree. This way, at the higher levels of the decomposition, the number of bags is halved, but the size of a single bag is doubled. Formally, the bag $L^{(i)}(j, k)$, for $1 \le k \le \ceil{\frac{\sqrt{n}}{2^j}}$, which belongs to layer $L^{(i)}(j)$, is the union of bags $L^{(i)}(j - 1, 2k - 1)$ and $L^{(i)}(j - 1, 2k)$. Here, we assume that $L^{(i)}(j-1, k') = \emptyset$ if $k' > \ceil{\frac{\sqrt{n}}{2^{(j-1)}}}$.

The communication scheme is navigated by the structure described above. It proceeds in $\ceil{\log(\sqrt{n})}$ stages that correspond to aggregating knowledge within bags that belong to increasingly higher levels of the decomposition. 

In the first stage, each operative process $p$ that is a member of a bag $L^{(i)}(1, k)$ at the lowest layer $L^{(i)}(1)$ (for $1 \le k \le \ceil{\sqrt{n}}$) initializes two variables: $\texttt{b\_ones}_{p}(1,k)$ and $\texttt{b\_zeros}_{p}(1,k)$ that keep track of the value of its bit $b_{p}$, i.e., $\texttt{b\_ones}_{p}(1,k) = 1$ if $b_{p} = 1$, and $\texttt{b\_zeros}_{p}(1,k) = 1$ if $b_{p} = 0$; inoperative processes skip this step as their values $b$ are not counted  (see lines~\ref{line:if-operative-init}-\ref{line:if-inoperative-init}). Since at level $1$ the bags are singletons, every operative process has the correct number of one and zero candidate values of other operative processes in its bag.

In stage $j$, for $2 \le j \le \ceil{\log(\sqrt{n})}$ processes work to gather information of the candidate values of all other operative processes that belong to the same bag in the layer $L^{(i)}(j)$, see the main loop in lines~\ref{line:main-for-group}-\ref{line:main-for-group-end}. For $1 \le k \le \ceil{\frac{\sqrt{n}}{2^j}}$, fix a bag $L^{(i)}(j, k)$ of the layer $L^{(i)}(j)$. By the hierarchical structure of the tree, at this point of the execution, the operative processes belonging to the left-child bag of $L^{(i)}(j - 1, 2k - 1)$ already collected the number of operative $0$'s and $1$'s among them. The same is true for processes belonging to the right-child bag $L^{(i)}(j - 1, 2k)$. 
To distribute this information between the operative processes in the bag $L^{(i)}(j, k)$, they use procedure $\textsc{GroupRelay}$~described~below.\\

\noindent \textbf{Specification of the procedure \textsc{GroupRelay}.}
The procedure takes five inputs: the identifier of a process $p$ that executes it, its operative status $\texttt{operative}_{p}$, its group $W_{i}$, the bag $L^{(i)}(j, k)$ 
in which $p$ currently exchanges information, and the set $\texttt{counts}_{p}$, which is the information to be distributed among other operative processes in $L^{(i)}(j, k)$. The information could be initialized differently by the algorithm calling the procedure \textsc{GroupRelay}, depending on whether $p$ belongs to a left-child or a right-child of the bag $L^{(i)}(j, k)$ in the lower layer $j$, as shown in lines~\ref{line:group-relay-1}-\ref{line:group-relay-2} of Algorithm~\ref{alg:bits-agreg}.

During the execution of the procedure \textsc{GroupRelay}, the processes in $W_{i}$ have two different roles: \textit{non-faulty} processes of the entire group serve as \textit{transmitters}, while \textit{operative} processes in the bag $L^{(i)}(j, k)$ have an additional role of a \textit{source}. The following $3$-round protocol is executed. In the first round, each source sends its set $\texttt{count}$ to all transmitters. The transmitters collect received sets $\texttt{count}$ and prepare new sets of the same name $\texttt{count}$ that are the union of the former. There are at most four logically different values that processes (sources) of the bag $L^{(i)}(j,k)$ try to disseminate, which are the counts of operative zeros and ones originating in the two bags that are children of the bag $L^{(i)}(j,k)$ in the lower level of the decomposition. In case there are two or more different counts of operative zeros and ones for a children bag, the transmitters may choose arbitrary of them. Therefore, each newly prepared set has the size of at most $O(\log{n})$ bits.
In the second round, the transmitters confirm to sources if they received a message in the previous round. Any source that receives less than $\frac{1}{2}|W_{i}| + 1$ confirmations becomes inoperative. In the third round, the transmitters send the new sets $\texttt{count}$ to the sources. Again, the sources that receive less than $\frac{1}{2}|W_{i}| + 1$ notifications in this round become inoperative. If a process becomes inoperative, it stops serving as a source, although it is still considered a potential transmitter in subsequent calls of the algorithm. The change in the operative status of a process is reported to the main algorithm \textsc{OptimalOmissionsConsenus} via variable $\texttt{operative}_{p}$. 
Example of how the above $3$-round relay process works is given in Figure~\ref{fig:single-group-tree}. 

After the procedure \textsc{GroupRelay} terminates, each process adds the received counts of operative ones and zeros and proceeds to the next stage of \textsc{GroupBitsAggregation}. Therefore, after the last stage, the operative processes gather the knowledge about the number of operative processes having $b=0$ and the number of those who have $b=1$ in the entire group $W_{i}$ (as this set corresponds to the bag being the root of the tree). These numbers are stored in variables $\texttt{b\_zeros}_{p}(\ceil{\sqrt{\log{n}}}, 1)$ and $\texttt{b\_ones}_{p}(\ceil{\sqrt{\log{n}}}, 1)$, and together with the operative status of a process, they are returned to the main algorithm \textsc{OptimalOmissionsConsenus}.

\begin{algorithm}
\SetAlFnt{\tiny}
\SetAlgoLined
\SetKwInput{Input}{input}
\Input{$W_{i}, p,  \texttt{operative}_{p}; b_{p}$}
\If{$\texttt{operative}_{p} = true$\label{line:if-operative-init}}
{
\lIf{$b_{p} = 0$} {$\texttt{b\_ones}_{p}(1, i) \leftarrow 0, \texttt{b\_zeros}_{p}(1, i) \leftarrow 1$}
\lElse{$\texttt{b\_ones}_{p}(1, i) \leftarrow 1, \texttt{b\_zeros}_{p}(1, i) \leftarrow 0$\label{line:if-inoperative-init}}
}

\For{stage $j \in \{2, \ldots, \ceil{\log{n}}\}$\label{line:main-for-group}}
{
    let $L^{(i)}(j, k)$ be $p$'s bag in $j$-th layer $L^{(i)}(j)$\; 
    \lIf{$p \in L^{(i)}(j - 1, 2k - 1)$ \label{line:group-relay-1}}
    {
    $\texttt{counts}_{p} \leftarrow \{\texttt{b\_ones}_{p}(j - 1, 2k - 1), \texttt{b\_zeros}_{p}(j - 1, 2k - 1) \}$
    }
    \lElse{
        $\texttt{counts}_{p} \leftarrow \{\texttt{b\_ones}_{p}(j - 1, 2k), \texttt{b\_zeros}_{p}(j - 1, 2k) \}$ \label{line:group-relay-2}
    }
    $\texttt{counts}_{p}, \texttt{operative}_{p} \leftarrow \textsc{GroupRelay}(W_{i}, p,\texttt{operative}_{p}, L^{(i)}(j,k), counts_{p})$\;
compute values of variables $\texttt{b\_ones}_{p}(j - 1, 2k - 1), \texttt{b\_zeros}_{p}(j - 1, 2k - 1), \texttt{b\_ones}_{p}(j - 1, 2k)$, \ \ \ \ \ \ \ \ \ $\texttt{b\_zeros}_{p}(j - 1, 2k)$ based on the elements in the set $\texttt{counts}_{p}$\;
    $\texttt{b\_ones}_{p}(j, k) \leftarrow \texttt{b\_ones}_{p}(j - 1, 2k - 1) + \texttt{b\_ones}_{p}(j - 1, 2k)$\; 
    $\texttt{b\_zeros}_{p}(j, k) \leftarrow \texttt{b\_zeros}_{p}(j - 1, 2k - 1) + \texttt{b\_zeros}_{p}(j - 1, 2k) $
    \label{line:main-for-group-end}
}
\textbf{return} 
{$\texttt{b\_ones}_{p}(\ceil{\log{n}}, 1), \texttt{b\_zeros}_{p}(\ceil{\log{n}}, 1), \texttt{operative}_{p}$}\label{line:aggr-return}
\caption{\textsc{GroupBitsAggregation}\label{alg:bits-agreg}}
\end{algorithm}

\noindent \textbf{Explanation of \textsc{GroupBitsSpreading.}}
The algorithm \textsc{GroupBitsSpreading} serves as a tool to disseminate the values calculated by the grouped operative processes (in the run of the algorithm \textsc{GroupBitsAggregation}) to all operative processes.
The processes locally prepare an array of pairs $\texttt{BitPacks}$ of size $\ceil{\sqrt{n}}$ as follows. If $W_{\ell}$ is the group of a process in the partitioning $W_{1}, \ldots, W_{\ceil{\sqrt{n}}}$, then the process writes the numbers of operative processes of $W_{\ell}$ having $b = 0$ and having $b = 1$, which is the result of the algorithm \textsc{GroupBitsAggregation}, at the position $\ell$ of the array, keeping all other entries empty. 
Next, the communication is performed along links corresponding to the predetermined graph $G$ with certain combinatorial properties, which existence is guaranteed in Theorem~\ref{thm:random-graph-properties}, cf., line~\ref{line:graph_sampling} of Algorithm~\ref{alg:opt-omissions}. 
Recall that $G$ is such that every process $p \in \cP$ has degree $O(\log{n})$. The set $V_{p}$ of neighbors is passed as a parameter to every execution of the algorithm \textsc{GroupBitsSpreading}.
See also Figure~\ref{fig:group-partition} for an illustration of the group partitioning with additional graph spanned on nodes.

The communication in the \textsc{GroupBitsSpreading} algorithm lasts $8\log{n}$ rounds. Initially, $p$ sends to every process in $V_{p}$ the entries of the array $\texttt{BitPacks}_{p}$ that have not been sent via the link yet. When receiving a message from another process $q \in V_{p}$ process $p$ updates its array $\texttt{BitPacks}_{p}$ by replacing empty entries with the non-empty entries of the array $\texttt{BitPacks}_{q}$. Additionally, $p$ keeps track of the processes from $V_{p}$ who failed to deliver a message in the most recent round and excludes them from the communication and refutes to accept messages from them in any future round of the algorithm \textsc{GroupBitsSpreading}. 
If the number of processes from which $p$ received a message is less than $\Delta / 3$, where $\Delta$ is the parameter from Theorem~\ref{thm:random-graph-properties}, $p$ becomes inoperative. It stays idle to the end of the current epoch and in all future epochs of the execution of the main algorithm \textsc{OptimalOmissionsConsensus}.

After $8\log{n}$ rounds, each operative process adds up the values in its array $\texttt{BitPacks}$ corresponding to the numbers of $0$'s and $1$'s, resp., in different groups, and returns these two numbers. The key property, which we prove formally in the analysis of the protocol, is that any two values returned by the operative processes (either the count of $0$'s or $1$'s) may differ by at most the number of processes that have turned inoperative during the epoch. 

\begin{algorithm}
\SetAlFnt{\tiny}
\SetAlgoLined
\SetKwInput{Input}{input}
\Input{$V_{p}, p, \ell, \texttt{operative}_{p}; \texttt{g\_ones}_{p},\texttt{g\_zeros}_{p}$}
$\texttt{BitPacks}_{p} \leftarrow [(\emptyset, \emptyset), \ldots, (\emptyset, \emptyset)]$, an array of pairs (initially of zeros) of size $\ceil{\sqrt{n}}$\;
$\texttt{BitPacks}_{p}[\ell] \leftarrow (\texttt{g\_ones}_{p}, \texttt{g\_zeros}_{p})$\;
\For{$8\log{n}$ rounds}
{
    \textbf{for} $q \in V_{p}$ : \newline
    \hspace*{2.5mm} \textbf{send} a 
    message to $q$ containing the entries of $\texttt{BitPacks}_{p}$ not shared with $q$ before, provided $q$ has not been disregarded earlier\newline 
    \hspace*{2.5mm} \textbf{receive} a message (if any) with $\texttt{BitPacks}_{q}$ sent by $q$:\newline
    \hspace*{5mm} \textbf{if} $q$ has not sent a message, disregard sending to $q$ in any future round;\newline
    \hspace*{5mm} \textbf{for} $i \in [\ceil{\sqrt{n}}]$ :  
    $\texttt{BitPacks}_{p}[i] \leftarrow \texttt{BitPacks}_{p}[i] \vee \texttt{BitPacks}_{q}[i]$\;\label{line:spreading-comm}
    \lIf{number of received messages is less than $\Delta / 3$}{\newline \hspace*{2.5mm} $\texttt{operative}_{p} \leftarrow false$;\newline \hspace*{2.5mm} stay idle until line~\ref{line:spread-return}}
}
$\texttt{ones}_{p}, \texttt{zeros}_{p} \leftarrow $ the sum of the first (or the second resp.) elements of each non-empty pair of $\texttt{BitPacks}_{p}$\;
\textbf{return} $\texttt{ones}_{p}, \texttt{zeros}_{p}, \texttt{operative}_{p}$\label{line:spread-return}
\caption{\textsc{GroupBitsSpreading}\label{alg:bits-spread}}
\end{algorithm}


\subsection{Analysis of algorithm \textsc{OptimalOmissionsConsensus}}
\label{subsec:analysis-main}

The building blocks of an execution of the algorithm are \textit{epochs} -- full iterations of the main loop of Algorithm~\ref{alg:opt-omissions}, \textsc{OptimalOmissionsConsensus}. 
As any final decision of a process comes from a communication with an operative process, see lines~(\ref{line:final-receiv-1},~\ref{line:final-receiv-2}), our goal is to prove that, with probability 1, the operative processes store the same value in variables $b_p$. 
To achieve that, we will gradually analyze how consecutive subroutines executed within an epoch influence the values $b_p$ held by operative processes.\\
\noindent \textbf{Analysis of} \textsc{GroupBitsAggregation} \textbf{algorithm.}
In this algorithm, the operative processes work in groups that are given by the partition $\{ W_{1}, \ldots, W_{\ceil{\sqrt{n}}}\}$ of the set $\cP$.
Consider a group $W_{i}$.  For a process $p$ and a process $q$ such that $p, q$ in $W_{i}$, we say that $q$ \textit{contributes} to variable $\texttt{b\_ones}_{p}(j, k)$ $(\texttt{b\_zeros}_{p}(j, k)$ respectively), for some $1 \le j \le \ceil{\log(\sqrt{n})}$, $k \ge 1$, if the input $b_{q}$ is counted in the variable $\texttt{b\_ones}_{p}(j, k)$ (or $(\texttt{b\_zeros}_{p}(j, k)$ depending on the value of $b_{q}$). 

\begin{lemma}\label{lem:bits-agg-contr}
Let $M_{i} \subseteq W_{i}$ be the set of operative processes in the group $W_{i}$ at the end of the algorithm \textsc{GroupBitsAggregation}.
Any process $p \in M_{i}$ contributes to both variables, $\texttt{b\_ones}_{q}(\ceil{\log(\sqrt{n})}, 1)$ and $\texttt{b\_zeros}_{q}(\ceil{\log(\sqrt{n})}, 1)$, of any other process $q \in M_{i}$.
\end{lemma}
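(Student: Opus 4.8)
The plan is to prove the lemma by induction on the stage index $j \in \{1, \ldots, \ceil{\log(\sqrt{n})}\}$ of the binary-tree decomposition of $W_i$, establishing the following statement for every bag $L^{(i)}(j,k)$: if $p$ and $q$ are both operative at the end of stage $j$ and both belong to $L^{(i)}(j,k)$, then $b_p$ is counted in $\texttt{b\_ones}_{q}(j,k)$ when $b_p = 1$ and in $\texttt{b\_zeros}_{q}(j,k)$ when $b_p = 0$; this is precisely the sense in which $p$ ``contributes to both'' of $q$'s variables at that node. Since $L^{(i)}(\ceil{\log(\sqrt n)},1) = W_i$ and a process operative at the end of \textsc{GroupBitsAggregation} was operative at the end of every stage, the lemma is the top-level instance of this claim. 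The base case $j=1$ is immediate: bags are singletons, an operative process initializes its counters from its own bit (lines~\ref{line:if-operative-init}--\ref{line:if-inoperative-init} of Algorithm~\ref{alg:bits-agreg}) while an inoperative one skips this step, so the unique process of a level-$1$ bag that is operative at the end of stage $1$ (if any) is correctly recorded in its own counters.

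For the inductive step, fix $L^{(i)}(j,k)$ with children $c_0 = L^{(i)}(j-1,2k-1)$ and $c_1 = L^{(i)}(j-1,2k)$, and let $p, q$ be operative at the end of stage $j$ with $p,q \in L^{(i)}(j,k)$; say $p \in c_0$ (the case $p \in c_1$ is symmetric). In stage $j$, after running \textsc{GroupRelay}, process $q$ recomputes the child counters from the received set $\texttt{counts}_q$ and sets $\texttt{b\_ones}_{q}(j,k) = \texttt{b\_ones}_{q}(j-1,2k-1) + \texttt{b\_ones}_{q}(j-1,2k)$, and symmetrically for zeros. Hence it suffices to show that the value $q$ holds for the child $c_0$ after the relay equals $(\texttt{b\_ones}_{p''}(j-1,2k-1), \texttt{b\_zeros}_{p''}(j-1,2k-1))$ for some process $p''$ that was a \emph{source} of $c_0$ in stage $j$, i.e.\ a process of $c_0$ that was operative at the \emph{start} of stage $j$ and therefore at the end of stage $j-1$. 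Granting this, observe that $p$ itself lies in $c_0$ and, being operative at the end of stage $j$, is operative at the end of stage $j-1$; applying the inductive hypothesis to $p$ and $p''$ inside the bag $c_0$ shows $b_p$ is counted in $(\texttt{b\_ones}_{p''}(j-1,2k-1), \texttt{b\_zeros}_{p''}(j-1,2k-1))$, hence in $(\texttt{b\_ones}_{q}(j,k), \texttt{b\_zeros}_{q}(j,k))$, which closes the induction.

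To establish this provenance claim for $q$'s count of $c_0$, I would split it into two parts. First, since omission faults only drop messages and never corrupt their content or force a process off its code, every count for $c_0$ that any transmitter ever holds was sent in round $1$ by a source of $c_0$; thus \emph{whatever} value $q$ finally uses for $c_0$ already has the required form, and the transmitters' freedom to break ties among duplicate counts is harmless. Second, $q$ does receive such a value: $p$ is operative at the end of the relay, so it collected at least $\frac{1}{2}|W_i|+1$ confirmations in round $2$, witnessing that its round-$1$ message reached a set $A$ of at least $\frac{1}{2}|W_i|+1$ transmitters, each of which thereafter holds a count for $c_0$; and $q$ is operative at the end of the relay, so it received round-$3$ messages from a set $B$ of at least $\frac{1}{2}|W_i|+1$ transmitters; since $A, B \subseteq W_i$ and $|A| + |B| > |W_i|$, some transmitter $r \in A \cap B$ delivers to $q$ in round $3$ a count for $c_0$.

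The main obstacle, and the only genuinely delicate point, is the bookkeeping in the last paragraph: one must consistently separate ``operative at the start of a stage'' (all that is needed for a process's count to keep propagating, as it has already been transmitted in round~$1$ before operative status is possibly lost) from ``operative at the end of a stage'' (what the inductive hypothesis and the lemma talk about), and verify that the two-majorities-intersect argument is robust to the adversary omitting an arbitrary subset of messages at faulty processes and to faulty-but-operative sources sending their (uncorrupted) counts to only some transmitters. One should also check the first- and last-stage corner cases and note that the bound is unaffected by $|W_i|$ being even, since $\frac{1}{2}|W_i|+1$ always strictly exceeds $\frac{1}{2}|W_i|$.
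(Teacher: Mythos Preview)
Your proposal is correct and follows essentially the same approach as the paper: induction on the layer index $j$, with the inductive step driven by the pigeonhole argument that the $\ge \frac{1}{2}|W_i|+1$ transmitters reached by the source $p$ in round~1 must intersect the $\ge \frac{1}{2}|W_i|+1$ transmitters from which $q$ receives in round~3.

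One point worth noting is that your inductive hypothesis is phrased for processes operative \emph{at the end of stage $j$}, rather than for processes in $M_i$ (operative at the end of the whole procedure) as the paper does. This is a genuine refinement: because transmitters may choose arbitrarily among several received counts for the same child bag, the count $q$ adopts for $c_0$ may originate from a source $p''\in c_0$ that is operative at the start of stage $j$ but later becomes inoperative, so $p''\notin M_i$. Your hypothesis lets you apply the induction to the pair $(p,p'')$ in $c_0$ regardless, whereas the paper's proof implicitly assumes $q'$ uses $p'$'s own count once it has received it and glosses over this provenance issue. Your version closes that small gap at no extra cost.
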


\begin{proof}

In the proof we inductively show that any two processes $p', q'$ belonging to the intersection of a bag $L^{(i)}(j, k)$ and the set $M_{i}$ have the property that $p'$ contributes to variables $\texttt{b\_ones}_{q'}(j, k)$, $\texttt{b\_zeros}_{q'}(j, k)$ of the process $q'$, where $1 \le j \le \ceil{\log(\sqrt{n})}$, $k \ge 1$. 
The induction proceeds with respect to the depth of a layer, $j$, in the tree decomposition $L^{(i)}$ of the group $W_{i}$. 
For the base case, we observe that the lemma holds for any pair of processes belonging to the set $M_{i} \cap L^{(i)}(1, k)$, where $L^{(i)}(1, k)$ is a bag in the first layer, simply because $M_{i} \cap L^{(i)}(1, k)$ is a singleton set (in this case, the only feasible pair has $p' = q'$).

For the inductive step, consider a bag $L^{(i)}(j, k)$ in some layer $j > 1$. Let $p'$ and $q'$ be any two processes belonging to $M_{i} \cap L^{(i)}(j, k)$. During the execution of the $\textsc{GroupRelay}$ procedure, the process $p'$, who had the role of a source, relayed its set $\texttt{count}$ to at least $\frac{1}{2}|W_{i}| + 1$ other processes (transmitters), because otherwise it would have lost its operative status. Similarly, $q'$ has received the accumulated set $\texttt{count}$ from at least another $\frac{1}{2}|W_{i}| + 1$ processes. However, these two sets of transmitters must have at least one common transmitter, who in this case assures that $q'$ received the counts of $p'$. 
Since $L^{(i)}(\ceil{\log(\sqrt{n})}, 1) = W_{i}$, the lemma is proven.
\end{proof}

Next, we bound the bit complexity of the algorithm $\textsc{GroupBitsSpreading}$ when limited to a single group only.

\begin{lemma}\label{lem:msg-aggr}
Processes in a single group use at most $O(n\log^2{n})$ bits of communication, in total, in an execution of the algorithm $\textsc{GroupBitsAggregation}$.
\end{lemma}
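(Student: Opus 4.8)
The plan is to bound the total communication by summing, over all $\ceil{\log(\sqrt{n})}$ stages of \textsc{GroupBitsAggregation}, the number of bits exchanged during the single invocation of \textsc{GroupRelay} that happens in that stage. First I would recall the structure: in stage $j$, for each bag $L^{(i)}(j,k)$ in layer $L^{(i)}(j)$, the processes run the $3$-round \textsc{GroupRelay} protocol. The key observation is that within stage $j$, these bags partition the group $W_{i}$ (each process lies in exactly one bag of layer $j$), and \textsc{GroupRelay} restricted to a single bag involves at most $|W_i| \le \ceil{\sqrt{n}}$ transmitters and at most $|W_i|$ sources. In round~1, each source sends its set $\texttt{counts}$ (of size $O(\log n)$ bits, since there are at most four logically distinct count-values, each an integer in $\{0,\dots,\ceil{\sqrt n}\}$) to all transmitters: that is at most $|W_i|^2 \cdot O(\log n) = O(n\log n)$ bits. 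In round~2, each transmitter sends a one-bit confirmation to each source: $O(|W_i|^2) = O(n)$ bits. In round~3, each transmitter sends the aggregated set $\texttt{counts}$, again $O(\log n)$ bits, to every source: another $O(n\log n)$ bits.

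Summing over all bags in layer $j$: since the bags partition $W_i$, the crude bound above (treating a single bag as if it had all $|W_i|$ processes in each role) already dominates the sum over the partition, so stage $j$ costs $O(n\log n)$ bits in total across the whole group. There are $\ceil{\log(\sqrt n)} = O(\log n)$ stages, giving $O(n\log^2 n)$ bits overall, which is the claimed bound. One subtlety to address cleanly: a process that has become inoperative during an earlier stage still acts as a potential transmitter in later stages (per the specification of \textsc{GroupRelay}), so the transmitter count stays bounded by $|W_i|$ throughout; this is harmless for the upper bound. I would also note that the initialization step (lines~\ref{line:if-operative-init}--\ref{line:if-inoperative-init}) involves no communication.

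The only real point requiring care — and the main obstacle — is verifying that the \emph{message sizes} genuinely stay $O(\log n)$ bits throughout all stages, i.e., that the sets $\texttt{counts}_p$ do not grow as we move up the tree. This follows because at every stage a process carries only the pair of operative-$0$/operative-$1$ counts for its \emph{single} current bag (assembled from at most four child-level values during the relay, then collapsed in the aggregation step of Algorithm~\ref{alg:bits-agreg} into two numbers each bounded by $|W_i| \le \ceil{\sqrt n}$), so $O(\log n)$ bits always suffice; the transmitters, when they encounter conflicting copies of a child's count, keep just one, so their forwarded sets are the same size. Granting this, the remaining accounting is the routine geometric-type sum sketched above, and the lemma follows.
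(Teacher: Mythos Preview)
The proposal is correct and follows essentially the same approach as the paper's proof: bound the per-stage communication by (number of sources)$\times$(number of transmitters)$\times$(message size) $= O(\sqrt{n})\cdot O(\sqrt{n})\cdot O(\log n) = O(n\log n)$, using that the bags at a layer partition $W_i$, then multiply by the $O(\log n)$ stages. Your write-up is in fact more detailed than the paper's (you separate the three rounds of \textsc{GroupRelay} and explicitly justify the $O(\log n)$ message size), but the underlying accounting is the same.
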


\begin{proof}
The algorithm takes $O(\log{n})$ stages of communication. In a stage, the procedure $\textsc{GroupRelay}$ is executed to distribute information between processes belonging to every bag of the layer of the stage. The information about a single bag distributed in the messages of the procedure $\textsc{GroupRelay}$ is of size $O(\log{n})$ bits at most, as it contains at most four counts of size $O(\log{n})$ each, c.f. the description of the procedure. The messages send in the procedure are always between transmitters and sources. This first class of processes contains at most $O(\sqrt{n})$ processes (the size of any group). The second amortizes to $O(\sqrt{n})$ if considered the union of all bags on the layer corresponding to the stage. It follows then, that at most $O(n)$ messages is sent per stage, in total. Thus the lemma follows.
\end{proof}

\noindent \textbf{Analysis of} \textsc{GroupBitsSpreading} \textbf{algorithm.}
We first explain formally the scheme of communication used in the algorithm \textsc{GroupBitsSpreading}. Let $\cR(n, \rho)$ be a random graph on $n$ vertices in which every edge is contained independently with probability $\rho$. For suitable parameters $\alpha$~and~$\ell$, the following properties of certain graphs $\cR(n, \rho)$ will be used to analyze omission-tolerance of the predetermined graph $G$ used for communication. 

\begin{definition}[\textit{of expansion and edge-sparsity}]
\hfill
\begin{itemize}
    \item \textbf{Expansion:} A graph~$G$ is said to be  \emph{$\ell$-expanding}, or to be an \emph{$\ell$-expander}, if any two subsets of~$\ell$ nodes are connected by an edge. 
   
    
    \item \textbf{Edge-sparsity:} A graph~$G$ is said to be \emph{$(\ell,\alpha)$-edge-sparse} if for any set $X\subseteq V$ of \emph{at most} $\ell$ nodes, there are at most $\alpha |X|$ edges internal for~$X$.
        
\end{itemize}
\end{definition}

To assure that there exists a graph $G$ that can be used by processes to perform the communication in the algorithms we provide the following theorem. 

\begin{theorem}
\label{thm:random-graph-properties}
Let $n \in \textbf{N}$, 
and $\Delta := 832\log{n}$. A random graph $\cR(n, \Delta / (n - 1))$ satisfies all the below properties whp:\\
\noindent  
	\emph{(i)} it is $(n/10)$-expanding, 
\hspace*{6.4em}\\
\noindent \emph{(ii)} it is $(n/10, \Delta / 15)$-edge-sparse,

\noindent \emph{(iii)} the degree of each node is between $\frac{19}{20}\Delta$ and $\frac{21}{20}\Delta$.	
\end{theorem}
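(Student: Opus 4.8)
The plan is to prove the three properties separately by a union bound over "bad" configurations, using standard Chernoff-type concentration for the binomial degree of each vertex and for the number of internal edges of a fixed set. Throughout write $p = \Delta/(n-1)$ with $\Delta = 832\log n$, so that the expected degree of each vertex is exactly $\Delta$. The three events to control are: (a) some vertex has degree outside $\left[\tfrac{19}{20}\Delta, \tfrac{21}{20}\Delta\right]$; (b) some pair of disjoint $(n/10)$-sets has no connecting edge; (c) some set $X$ with $|X| \le n/10$ spans more than $\tfrac{\Delta}{15}|X|$ internal edges. Each of these I would bound by a first-moment / union-bound argument and show the total failure probability is $n^{-\Omega(1)}$, which is exactly the "whp" guarantee required.

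First I would handle (iii), the degree bounds, since it is the easiest and is reused implicitly elsewhere. The degree of a fixed vertex is $\mathrm{Bin}(n-1, p)$ with mean $\Delta = 832\log n$; a multiplicative Chernoff bound gives deviation probability at most $2\exp(-c\Delta)$ for the $\pm\tfrac{1}{20}$ relative window, and choosing the constant $832$ large enough makes this $\le n^{-10}$, say. A union bound over $n$ vertices leaves failure probability $\le n^{-9}$. Next, for (i), fix two disjoint sets $A, B$ each of size exactly $\ell = n/10$ (it suffices to treat size exactly $\ell$, since larger sets only contain such a pair). The probability that no edge joins $A$ to $B$ is $(1-p)^{\ell^2} \le \exp(-p\,\ell^2) = \exp(-\Theta(n\log n))$, and the number of choices of the pair $(A,B)$ is at most $\binom{n}{\ell}^2 \le 2^{2n} = \exp(\Theta(n))$; since $n\log n$ dominates $n$, the union bound gives failure probability $e^{-\Omega(n\log n)} \ll n^{-9}$.

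The remaining and most delicate case is (ii), edge-sparsity, because here the union is over $\binom{n}{k}$ sets for every $k \le n/10$ and the concentration must be good enough to beat $\binom{n}{k}$ for small $k$ as well. For a fixed set $X$ with $|X| = k$, the number of internal edges is $\mathrm{Bin}\!\left(\binom{k}{2}, p\right)$ with mean $\mu_k = \binom{k}{2}p \le \tfrac{k^2 p}{2} = \tfrac{k\Delta}{2}\cdot\tfrac{k}{n-1} \le \tfrac{k\Delta}{20}$ when $k \le n/10$. We want to bound the probability that this exceeds $\tfrac{\Delta}{15}k$, i.e. exceeds the mean by a factor $\ge \tfrac{3}{2}$ (uniformly, since $\tfrac{\Delta}{15}k \ge \tfrac{3}{2}\mu_k$); by the standard upper-tail bound $\Pr[\mathrm{Bin} > (1+\delta)\mu] \le \left(e^\delta/(1+\delta)^{1+\delta}\right)^\mu$, and here it is cleaner to use the form $\Pr[\text{Bin} \ge a] \le \left(e\mu_k / a\right)^{a}$ with $a = \tfrac{\Delta}{15}k$, giving a bound of the shape $\left(\tfrac{e}{\text{const}}\right)^{\Delta k/15} = \exp(-c\Delta k)$ for an absolute $c>0$ once the constant $832$ is large. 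Against this we union-bound over $\binom{n}{k} \le (en/k)^k = \exp(k\log(en/k)) \le \exp(k\log(en))$ choices; since $\Delta = 832\log n \gg \log(en)$, the exponent $-c\Delta k + k\log(en)$ is $\le -c'\Delta k / 2 \le -c' \log n$ for every $k \ge 1$, so summing over $k$ from $1$ to $n/10$ still gives total failure probability $n^{-\Omega(1)}$. The main obstacle — and the only place the specific constant $832$ really matters — is making the tail exponent $c\Delta$ large enough to dominate the $\log(en/k)$ entropy term for the smallest sets; I would verify that the chosen constant leaves comfortable slack. Finally, a union bound over the three events (a)–(c) finishes the proof.
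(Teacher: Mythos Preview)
Your overall strategy---Chernoff-type tail bounds combined with a union bound over all bad configurations, treating the three properties separately---is exactly the paper's approach. Your argument for (iii) is identical to the paper's, and your argument for (i) is in fact a little simpler: you directly bound the probability that two $\ell$-sets have \emph{zero} crossing edges via $(1-p)^{\ell^2}$, whereas the paper applies a lower-tail Chernoff bound to show the number of crossing edges is at least $2/3$ of its mean. Both are fine for establishing $(n/10)$-expansion.

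There is one genuine technical gap in your treatment of (ii). You claim the form $\Pr[\mathrm{Bin}\ge a]\le (e\mu_k/a)^a$ yields a bound of the shape $\exp(-c\Delta k)$ uniformly in $k\le n/10$. But $e\mu_k/a = \tfrac{15e(k-1)}{2(n-1)}$, and for $k$ close to $n/10$ this ratio is about $3e/4\approx 2.04>1$, so the bound $(e\mu_k/a)^a$ is vacuous in that range. The $(e\mu/a)^a$ form is only useful when $a$ exceeds $e\mu_k$, roughly $k\lesssim n/20$; for $n/20\lesssim k\le n/10$ you need the multiplicative form $\Pr[X\ge(1+\delta)\mu]\le e^{-\mu\delta^2/3}$ that you mention but set aside. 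In that regime $a/\mu_k\ge 4/3$, so $\delta\ge 1/3$ and $\mu_k=\Theta(k\Delta)$, which gives $e^{-\Omega(k\Delta)}$ and beats $\binom{n}{k}\le (10e)^k$ since $\Delta=832\log n$. Splitting into these two ranges (or using the full form $(e^\delta/(1+\delta)^{1+\delta})^\mu$ uniformly and tracking the exponent more carefully) closes the gap; the rest of your outline goes through.
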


\begin{proof}
In this proof we will use Chernoff bounds in the following form, where $\varepsilon>0$ and $X=\sum_{1 \leq i \leq k} X_{i}$ is a sum of independent Bernoulli trials with the expected value $\mu$:

\begin{equation}
\operatorname{Pr}[X \geq(1+\varepsilon) \mu]<\left(\frac{e^{\varepsilon}}
{(1+\varepsilon)^{1+\varepsilon}}\right)^{\mu} 
\end{equation}
\begin{equation}
\operatorname{Pr}[X \geq(1+\varepsilon) \mu]<e^{-\mu \varepsilon^{2} / 3}, \text { for } \varepsilon \leq 1 
\end{equation}
\begin{equation}
\operatorname{Pr}[X \leq(1-\varepsilon) \mu]<e^{-\mu \varepsilon^{2} / 2}, \text { for } \varepsilon<1
\end{equation}
see~\cite{mitzenmacher2017probability}.

We first prove that property (iii) is satisfied by graph $H$ whp. The expected node degree in $H$ is $\Delta$. By the Chernoff bounds (2) and (3), each degree of $H$ is between $(19 / 20) \Delta$ and $(21 / 20) \Delta$ with probability at least
$$
1-2 \cdot \exp \left\{-\Delta \cdot(1 / 20)^{2} / 3\right\} \geq 1-1 / n
$$
as $\Delta>832 \log n$.

Now we consider property (ii). Consider a set $X_{1} \subseteq V$ of at most $n / 10$ nodes. We show that it has at most $\left|X_{1}\right| \Delta / 15$ internal edges with probability at least $1-1 / n^{2}$. Let $x_{1}$ stand for $\left|X_{1}\right|$. Recall that the expected number of edges internal for $X_{1}$ is $\Delta / (n - 1) \cdot  x_{1}\left(x_{1}-1\right) / 2$. The probability that the number $N$ of internal edges is more than $x_{1} \Delta / 15$ can be upper bounded by the Chernoff bound (2) with $\varepsilon_{1}=1/3$ to obtain
$$
\operatorname{Pr}\left[N > \left(1 + \varepsilon_{1}\right) \cdot \frac{\Delta x_{1}\left(x_{1}-1\right)}{2 (n - 1)}\right] \leq e^{-\frac{\Delta x_{1}\left(x_{1}-1\right)}{2(n - 1)} \cdot\left(\varepsilon_{1}\right)^{2} / 3} \leq e^{-\Delta x_{1} / 270},
$$
as $x_1 \le n / 10$. It follows that the probability that a set $X_1$ of at most $x_{1}$ nodes with more than $x_{1} \Delta / 15$ internal edges exists in graph $H$, where $x_{1} \le n/10$, is at most
$$
\sum_{x_{1}=1}^{n/10}\left(\begin{array}{c}
n \\
x_{1}
\end{array}\right) e^{-\Delta x_{1} / 270} \leq \sum_{x_{1}=1}^{n / 10} 2^{x_{1} \log(n)- \Delta x_{1}/ 189} \leq \sum_{x_{1}=1}^{n / 10} 2^{-3 x_{1} \log n} \leq 1 / n^{2},
$$
as $\Delta > 832\log{n}$.

It remains to prove property (i). 
Consider any two disjoint sets $X$ and $Y$ of exactly $\ell$ elements each. The expected number of edges connecting two disjoint sets of nodes $X$ and $Y$ is $\Delta |X||Y| / (n - 1)$. Therefore, by the Chernoff bound (3), we obtain that the number $N$ of edges connecting the sets $X$ and $Y$ is less than $2/3 \Delta |X||Y| / (n - 1)$ with a probability that is at most
$$
\operatorname{Pr}[N<(1-1 / 3) \cdot \Delta |X||Y| / (n - 1)] \leq e^{-\Delta|X||Y| / (n - 1)  \cdot (1 / 3)^{2} / 2} \leq e^{-\Delta n / 1800}
$$
Consider an event that two sets of nodes of $n / 10$ elements each and with less than $2/3 \Delta |X||Y| / (n - 1)$ edges between them exist. The probability of this event is at most
$$
\left(\begin{array}{l}
n \\
\frac{n}{10}
\end{array}\right)\left(\begin{array}{l}
n \\
\frac{n}{10}
\end{array}\right) e^{-\Delta n / 1800} \leq 2^{2 n/10 \cdot \lg n} \cdot e^{-\Delta n / 1800} \leq 2^{n / 5 \log n} \cdot 2^{-832\Delta n / 1800} \leq 2^{-2 n\log(n) / 15}<1 / n^{2}.
$$
Since the complement of this event guarantees that every two disjoint set of size $n / 10$ have at least $2/3 \Delta |X||Y| / (n - 1) > 1$ connecting edges, thus $(n/10)$-expansion follows.
\end{proof}

\remove{
The argument proving this theorem follows the probabilistic method, and due to its combinatorial technicality and not direct connection with the main analysis of the algorithmic part, it is deferred to Appendix~\ref{sec:appendix}. The same applies to the proofs of the next Lemmas~\ref{lemma:sparse-to-dense} and~\ref{lem:compact-comp}, which infer further combinatorial properties of graph $G$.\footnote{%
\jo{We note here that statements of similar results to Theorem~\ref{thm:random-graph-properties} and Lemma~\ref{lemma:sparse-to-dense} can be found in~\cite{ChlebusKS-PODC09}, however, the theorem is given without a proof, and the lemma is proven for different constants. To make to paper self-contained, we give the proofs in Appendix~\ref{sec:appendix}.}
}
}

Let us now fix a particular graph $G$ that satisfies the above properties and let this graph be the same as the one on which the processes decide when executing line~\ref{line:graph_sampling}. As a consequence of these properties, it can be also shown that close neighborhoods of any vertex in $G$ contain regular subgraphs and grow rapidly. Let us start by giving the appropriate definition.

\begin{definition}[\textit{of $(\gamma,\delta)$-dense-neighborhood}]
For a node $v\in V$, denote $N^{d}_G(v)$ the set of vertices at distance at most $d$ from $v$ in graph $G$. Then, for $\gamma \ge 1$, a set $S\subseteq N^{\gamma}_G(v), v \in S$ is said to be \emph{$(\gamma,\delta)$-dense-neighborhood for~$v$} if each node in~$S\cap N^{\gamma-1}_G(v)$ has at least $\delta$ neighbors in~$S$.
\end{definition}

To formalize the meaning of the rapid growth of a close neighborhood of a vertex in $G$, we provide the following lemma. 

\begin{lemma} \label{lemma:sparse-to-dense}		
If graph~$G=(V,E)$ of $n$ nodes is $(n/5, \Delta / 15)$-edge-sparse then any $(\gamma,\Delta/3)$-dense-neighborhood for a node $v\in V$ has at least $\min\{2^{\gamma}, n/10\}$ nodes, for $\gamma \ge 0$.
\end{lemma}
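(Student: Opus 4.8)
The plan is to induct on $\gamma$. For the base case $\gamma = 0$, a $(0,\Delta/3)$-dense-neighborhood $S$ must contain $v$ itself, so $|S| \ge 1 = 2^0$, and the condition on nodes in $S \cap N^{-1}_G(v)$ is vacuous (that set is empty). So the base case is immediate. The inductive step is where the real work happens: assume the statement holds for all values up to $\gamma - 1$, fix a $(\gamma, \Delta/3)$-dense-neighborhood $S$ for $v$, and let $S' = S \cap N^{\gamma-1}_G(v)$. The key observation is that $S'$ is itself (essentially) a $(\gamma-1, \Delta/3)$-dense-neighborhood for $v$: every node of $S'$ lies within distance $\gamma-1$ of $v$, and — at least for nodes in $S' \cap N^{\gamma-2}_G(v)$ — still has $\ge \Delta/3$ neighbors in $S$; one must check those neighbors can be taken inside $S'$, or otherwise argue that such neighbors, being adjacent to a node at distance $\le \gamma-2$, lie at distance $\le \gamma-1$ and hence in $S'$. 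This gives $|S'| \ge \min\{2^{\gamma-1}, n/10\}$ by the inductive hypothesis.

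First I would handle the easy case: if $|S'| \ge n/10$ then already $|S| \ge |S'| \ge n/10 \ge \min\{2^\gamma, n/10\}$ and we are done. So assume $|S'| < n/10$, and in fact by induction $|S'| \ge 2^{\gamma-1}$. Now I want to show $|S| \ge 2 |S'|$ (which gives $|S| \ge 2^\gamma$, unless we hit the $n/10$ cap along the way — in which case $|S| \ge n/10$ and we are again done). Suppose for contradiction $|S| < 2|S'|$, equivalently $|S \setminus S'| < |S'|$, so $|S| < 2|S'| \le 2 \cdot n/10 \le n/5$. Count the edges internal to $S$: each of the $|S'|$ nodes in $S'$ has $\ge \Delta/3$ neighbors inside $S$, so the number of $S$-internal edges is at least $\tfrac12 \cdot |S'| \cdot \Delta/3 = |S'|\Delta/6$. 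On the other hand, $(n/5, \Delta/15)$-edge-sparsity applied to $S$ (legitimate since $|S| \le n/5$) bounds the number of internal edges by $|S| \cdot \Delta/15 < 2|S'| \cdot \Delta/15 = |S'|\Delta \cdot \tfrac{2}{15}$. Comparing, $|S'|\Delta/6 \le |S'|\Delta \cdot \tfrac{2}{15}$ forces $\tfrac16 \le \tfrac{2}{15}$, which is false. Hence $|S| \ge 2|S'| \ge 2^\gamma$, completing the step.

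The main obstacle I anticipate is the bookkeeping around the definition at the boundary of the neighborhood: making precise that $S' = S \cap N^{\gamma-1}_G(v)$ really does satisfy the hypotheses of the inductive claim (the density condition is imposed only on $S' \cap N^{\gamma-2}_G(v)$, which must be matched against the definition of $(\gamma-1,\Delta/3)$-dense-neighborhood), and handling the edge case where the neighborhood cap $n/10$ is reached partway through the induction so that the $\min$ switches branches. One also has to double-check the constant arithmetic, namely that $|S| \le n/5$ throughout so that edge-sparsity is applicable with parameter $n/5$, which is exactly why the lemma is stated with $(n/5,\Delta/15)$-edge-sparsity rather than the $(n/10,\cdot)$ version from Theorem~\ref{thm:random-graph-properties} — one would invoke the theorem with a slightly larger constant, or note monotonicity of the edge-sparsity property in $\ell$.
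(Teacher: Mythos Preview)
Your proof is correct and follows essentially the same approach as the paper: both argue by induction that each successive neighborhood layer at least doubles in size, using the edge-sparsity bound to rule out the possibility that too many of the $\Delta/3$ guaranteed neighbors stay inside the current set. The only cosmetic difference is that the paper fixes $S$ and inducts on the layer index $i$ (showing $|S\cap N^i_G(v)|\ge \min\{2^i,n/10\}$), whereas you induct on $\gamma$ and pass to the restricted dense-neighborhood $S'=S\cap N^{\gamma-1}_G(v)$; your packaging is arguably a bit cleaner, and your check that $|S|<n/5$ before invoking sparsity is exactly the constant-bookkeeping the paper leaves implicit.
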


\begin{proof}
Consider a node $v \in V$ and a set $S$ which is a $(\gamma, \Delta/3)$-dense-neighborhood for $v$. Let
$A_i$ stand for $S \cap N^i_G(v)$. We show the following fact by induction on $i$, where $1 \le i \le \gamma - 1$:
the set $A_i$ has at least $\min\{2^i, n/10\}$ elements. 
The base of induction for $i = 1$ follows
directly from the property that $v \in S$ has at least $\frac{\Delta}{3} > 3$ neighbors in $A_1 = S \cap N^1_G(v)$. 

The inductive step: suppose that the inequality $|A_i| = |S \cap N^i_G(v)| \ge \min\{2^i, n/10\}$ holds for $i$,
where $1 \le i < \gamma - 1$. We prove the inequality for $i + 1$ as follows. If $A_i$ has at least $n/5$
elements, then we are done. Otherwise the set $A_i$ has at most $|A_i|\Delta/15$ internal edges, by the
$(n/10, \Delta/15)$-edge-sparsity. We obtain that there are at least $|A_i|\Delta/3 - 2\cdot |A_i|\Delta/15 = 3|A_i|\Delta/15$
edges between the sets $A_i$ and $N^{i+1}_{G}(v) \setminus A_i$. Let $X$ denote the set of neighbors of $A_i$ in $N^{i+1}_{G}(v) \ A_i$. It follows that the number of edges internal for $A_{i+1} = A_i \cup X$ is at least $3|A_i|\Delta/15$. On the other hand, this number is at most $(|A_i|+|X|)\Delta/15$, by $(n/10, \Delta/15)$-edge-sparsity. Consequently, the inequality $(|A_i| + |X|)\Delta/15 \ge 3|A_i|\Delta/15$ holds. Hence $|X| \ge 2|A_i|$ and
there are at least $2^{i+1} \ge \min\{2^{i+1}, n/10\}$ elements in $A_{i+1}$ by the inductive assumption.
We use the fact that $A_{\gamma-1} = S \cap N^{\gamma-1}_G (v)$ has at least $\min\{2^{\gamma-1}, n/10\}$ elements. As the inequality $2^{\gamma-1} \ge n/5$ can be verified directly, the proof is complete.
\end{proof}

We note here that statements of similar results to Theorem~\ref{thm:random-graph-properties} and Lemma~\ref{lemma:sparse-to-dense} can be found in~\cite{ChlebusKS-PODC09}, however, the theorem is given without a proof, and the lemma is proven for different constants. Therefore, for the sake of completeness, we provided the proofs of the versions needed in this work.

An important property\footnote{A similar property has been previously observed for graphs of bounded spectral radius, c.f.~\cite{upfal1992tolerating}. There are known results that bounds spectral radius of a random Erdos-Renyi graph, c.f.~\cite{benaych2020spectral, erdHos2013spectral}, and in consequence could lead to the same conclusion. However, for the sake of the completeness, we provide an alternative, fundamental proof that relies solely on combinatorial properties guaranteed by Theorem~\ref{thm:random-graph-properties}.} is that removal of a linear fraction of nodes from the graph $G$ cannot leave many nodes with small neighborhood, as is stated next: 

\begin{lemma}\label{lem:compact-comp}
If graph $G = (V, E)$ satisfies the properties of Theorem~\ref{thm:random-graph-properties}, then for any set of nodes $T \subset G$ of size at most $\frac{n}{15}$ there exists a subset $A$ of $G$ of size at least $n - \frac{4}{3}|T|$ such that:\\
\noindent  
	\emph{(i)} $A \cap T = \emptyset$, 
\hspace*{3.0em}
	\emph{(ii)} every node from $A$ has at least $\frac{\Delta}{3}$ neighbors in $A$.
\end{lemma}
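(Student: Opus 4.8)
The plan is to obtain $A$ by a greedy peeling process applied to $V\setminus T$, and then to bound how many vertices get peeled off by an edge-counting argument.

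\emph{Construction.} Start from $V_{0}:=V\setminus T$ and repeatedly delete, one at a time, any vertex of the current set having fewer than $\Delta/3$ neighbours \emph{inside} the current set. Let $A$ be the set that survives once no such vertex remains, and let $R:=(V\setminus T)\setminus A$ be the set of deleted vertices, listed in deletion order $v_{1},v_{2},\dots$. By construction $A\cap T=\emptyset$, which gives (i), and no surviving vertex has fewer than $\Delta/3$ neighbours in $A$, which gives (ii). Since $|A|=(n-|T|)-|R|$, it remains to prove $|R|\le\tfrac13|T|$; in fact I will get the stronger $|R|\le\tfrac{8}{29}|T|$.

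\emph{The counting step.} Fix a prefix $R'=\{v_{1},\dots,v_{m}\}$ of the deletion order, put $S':=T\cup R'$, and suppose for the moment that $|S'|\le n/10$. Consider $\sum_{v\in R'}\deg(v)$ and split it according to where the edges go. Edges with both endpoints in $S'$ contribute at most $2e(S')\le\tfrac{2\Delta}{15}|S'|$ by the $(n/10,\Delta/15)$-edge-sparsity in Theorem~\ref{thm:random-graph-properties}. Edges from $R'$ to $V\setminus S'$ are few for the key reason that when $v_{i}$ is deleted the whole of $V\setminus S'$ is still present (its vertices are either in $A$, hence never deleted, or lie in $R\setminus R'$, hence deleted only after $v_{i}$), so $v_{i}$ has fewer than $\Delta/3$ neighbours in $V\setminus S'$; summing over $i$, these edges number less than $\tfrac{\Delta}{3}|R'|$. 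Combining with the minimum degree $\tfrac{19}{20}\Delta$ from Theorem~\ref{thm:random-graph-properties},
\[
\tfrac{19}{20}\Delta\,|R'|\ \le\ \sum_{v\in R'}\deg(v)\ \le\ \tfrac{2\Delta}{15}\,|S'|+\tfrac{\Delta}{3}\,|R'|\ =\ \tfrac{2\Delta}{15}\bigl(|T|+|R'|\bigr)+\tfrac{\Delta}{3}\,|R'| ,
\]
which, cancelling $\Delta$ and rearranging, gives $\tfrac{29}{60}|R'|\le\tfrac{2}{15}|T|$, i.e.\ $|R'|\le\tfrac{8}{29}|T|$.

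\emph{Removing the size hypothesis (the main obstacle).} The computation above invoked edge-sparsity on $S'=T\cup R'$, which is legitimate only when $|S'|\le n/10$, and a priori we do not control $|R|$. To close this gap I apply the counting step to the prefix $R'$ of length $m:=\min\{|R|,\lfloor n/30\rfloor\}$: then $|S'|=|T|+m\le n/15+n/30=n/10$, so the hypothesis holds and I get $m\le\tfrac{8}{29}|T|\le\tfrac{8}{29}\cdot\tfrac{n}{15}$. If $m=\lfloor n/30\rfloor$ this contradicts $\lfloor n/30\rfloor>\tfrac{8}{29}\cdot\tfrac{n}{15}$, valid for all large $n$; hence $m=|R|$, and therefore $|R|\le\tfrac{8}{29}|T|\le\tfrac13|T|$, which yields $|A|\ge n-\tfrac43|T|$ and finishes the proof. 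Every other ingredient is an elementary edge count using properties (ii) and (iii) of Theorem~\ref{thm:random-graph-properties} together with the defining property of the peeling process; the only genuinely delicate point is this bootstrapping that makes the edge-sparsity bound applicable.
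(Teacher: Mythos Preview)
Your proof is correct. Both you and the paper use the same underlying edge-counting argument (minimum degree $\tfrac{19}{20}\Delta$ versus the $(n/10,\Delta/15)$-edge-sparsity), but you run the construction in the dual direction: the paper \emph{grows} the bad set, starting from $T_1=T$ and repeatedly absorbing any vertex with at least $\tfrac{37}{60}\Delta$ neighbours already in $T_i$, then takes $A$ to be the complement; you instead \emph{peel} the good set $V\setminus T$, removing vertices with fewer than $\Delta/3$ neighbours inside. Your observation that $V\setminus S'$ is contained in the current set at the moment each $v_i$ is peeled is exactly what makes the bound on outgoing edges work, and your bootstrapping with the prefix of length $\min\{|R|,\lfloor n/30\rfloor\}$ cleanly handles the $|S'|\le n/10$ hypothesis (the paper does the analogous size check via $|T_k|\le\tfrac43|T_1|\le\tfrac{4n}{45}$). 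The two arguments are essentially mirror images; your direction has the mild advantage that property~(ii) holds by construction rather than requiring the separate check $\deg(v)-\tfrac{37}{60}\Delta\ge\tfrac{\Delta}{3}$, and your constant $\tfrac{8}{29}$ is marginally sharper than the paper's~$\tfrac13$.
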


\begin{proof}
Fix any set of nodes $T_{1} \subset V$ of size at most $\frac{n}{10}$ and consider the following, inductive, definition. Given a set $T_{i}$, for $i \ge 1$, we define $T_{i + 1} = T_{i} + \{v\}$ if there exists a node $v \in V \setminus T_{i}$ that has at least $\frac{37}{60}\Delta$ neighbors in $T_{i}$; otherwise define $T_{i+1} = T_{i}$. By the fact that there is a finite number of nodes in $G$ and by the fact that if $T_{j+1} = T_{j}$ then all subsequent sets $T_{j+2}, T_{j + 3}, \ldots $ are the same, there must be exactly one index $K$ such that for all $k \ge K$ it holds $T_{k + 1} = T_{k}$, but $T_{K} \neq T_{K-1}$. 

We prove that $K < |T_{1}| / 3$. 
Assume to the contrary that $K \ge |T_{1}| / 3$. Let $k = |T_{1}| / 3$. 
Based on this assumption, observe that adding a new node $v$ to $T_{i}$, for $1 \le i \le K - 1$, increases the number of edges induced by the set $T_{i}$ by $37\Delta / 60$ at least.
It follows that the number of edges induced by $T_{K}$ is at least $37 K\Delta/60$. 
On the other hand, the number of vertices of $T_{K}$ is $|T_{1}| + k - 1 \le 4/3|T_{1}| = 4/3 \cdot n/15 = 4n/45$. 
Therefore, we can use the $(n/10, \Delta/15)$-edge-sparsity property of $G$ to the set of vertices $T_{k}$ and conclude that there can be at most $4/3|T_{1}| \cdot \Delta / 15 = 4\Delta K / 45$ edges in the subgraph induced by $T_{K}$.
Comparing the last upper bound of $4\Delta K / 45$ on the number of edges with the derived lower bound bound of $37\Delta K / 60$ yields a contradiction as $37\Delta K / 60$ is not less than $4\Delta K /45$. It follows that $K$ must be smaller than $|T_{1}| / 3$.

To finish the proof of the lemma, we define $A$ as $V \setminus T_{K}$. We first observe that degrees of all nodes in $A$ are at least $\Delta / 3$. If there was a node of a smaller degree, it would have to have at least $\frac{19}{20}\Delta - \frac{1}{3}\Delta \ge \frac{37}{60}\Delta$ neighbors in $T_{K}$ and thus it would be added to $T_{K + 1}$ violating the definition of $K$. Since $K < |T_{1}| / 3$, thus $|T_{K}| \le 4|T_{1}|/2$ as there is only one node added at the time. The size of $A$ must be then at least $n - 4|T_{1}|/3$ which completes the proof.
\end{proof}

\noindent Establishing the necessary properties of the communication graph used in runs of the \textsc{GroupBitsSpreading} algorithm, we focus on analyzing how information is spread across the graph in these runs. 
Let us fix a run of the algorithm. For this run, we define $\texttt{OP}_{i}$ to be the set of these processes who maintained their status as operative until the end of the round $i$ of the run. Observe that there is $8\log{n}$ rounds, corresponding precisely to the iterations of the main loop of the algorithm, thus $\texttt{OP}_{8\log{n}}$ denotes the set of processes that finished the run as operative.
We define $G^{\texttt{OP}}_{i}$ a subgraph of $G$ with the set of vertices equal to $\texttt{OP}_{i}$ and the set of edges equal to the links by which processes in $\texttt{OP}_{i}$ received messages in round $i$ of the run of this algorithm. We call a link \textit{operative} in round $i$ if it belongs to $G^{\texttt{OP}}_{i}$. Observe the a link operative in round $i$ correctly transmits messages in both directions in rounds $1, \ldots, i - 1$. That is because links that are observed faulty by a process are never used in the future rounds.


\begin{lemma}\label{lem:operative-dense-neigh}
For a fixed run of the algorithm~\texttt{GroupBitsSpreading}, any processes $p$ in $\texttt{OP}_{8\log{n}}$ has a $(2\log{n}, \Delta/3)$-dense-neighborhood in the graph $G_{8\log{n}}^{\texttt{OP}}$.
\end{lemma}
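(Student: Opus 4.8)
The plan is to exhibit, for any $p\in\texttt{OP}_{8\log n}$, an explicit $(2\log n,\Delta/3)$-dense-neighborhood for $p$ in $H:=G^{\texttt{OP}}_{8\log n}$. Two ingredients set things up. First, $H$ is a subgraph of $G$, hence it inherits the $(n/10,\Delta/15)$-edge-sparsity of Theorem~\ref{thm:random-graph-properties}(ii), so Lemma~\ref{lemma:sparse-to-dense} applies to $H$ verbatim. Second, I extract a \emph{robust core} that survives the whole run: let $F$, $|F|\le t<n/30<n/15$, be the set of faulty processes, and apply Lemma~\ref{lem:compact-comp} with $T:=F$ to obtain $A$ with $A\cap F=\emptyset$, $|A|\ge n-\tfrac{4}{3}t$, every node of which has at least $\Delta/3$ neighbors in $A$ (in $G$). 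I claim $A\subseteq\texttt{OP}_i$ for every $i\le 8\log n$ and that $G[A]$ is a subgraph of every $G^{\texttt{OP}}_i$; in particular $\deg_H(q)\ge\Delta/3$ for all $q\in A$. This is a round-by-round induction: if $A\subseteq\texttt{OP}_{i-1}$, then for $q\in A$ each of its $\ge\Delta/3$ $A$-neighbors $r$ is non-faulty and still operative, and the link $\{q,r\}$ was never marked faulty (messages between two non-faulty processes are never omitted), so $r$'s round-$i$ message reaches $q$; hence $q$ receives $\ge\Delta/3$ messages and stays operative, and $\{q,r\}$ is operative in round $i$. (Expansion, Theorem~\ref{thm:random-graph-properties}(i), together with edge-sparsity to forbid small components, moreover shows $G[A]$ is connected on $\ge n-\tfrac{4}{3}t$ vertices, which gives that the dense-neighborhood we produce is large.)

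I then take $S$ to be the \emph{maximal} $(2\log n,\Delta/3)$-dense-neighborhood contained in $N_H^{2\log n}(p)$, i.e.\ the set obtained from $N_H^{2\log n}(p)$ by repeatedly discarding any vertex at $H$-distance at most $2\log n-1$ from $p$ whose number of neighbors inside the current set drops below $\Delta/3$ (the family of such sets is closed under union, so the maximal one is unique). By construction $S\subseteq N_H^{2\log n}(p)$ and $S$ satisfies the required density condition, so the whole lemma reduces to proving that the peeling never reaches $p$, i.e.\ $p\in S$.

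To show $p\in S$ I would combine two facts. (a) No vertex of $A$ within $H$-distance $2\log n$ of $p$ is ever peeled: an interior such vertex has all $\ge\Delta/3$ of its $A$-neighbors at $H$-distance $\le 2\log n$ and inside $A$, hence inside $N_H^{2\log n}(p)$, and these are never removed, by induction on the peeling; thus $A\cap N_H^{2\log n}(p)\subseteq S$. (b) Since $p$ stayed operative through all $8\log n$ rounds it received on $\ge\Delta/3$ incident edges of $H$ in the last round (and, using that a link once observed faulty is never reused, on $\ge\Delta/3$ edges throughout), so $\deg_H(p)\ge\Delta/3$; and a counting argument against the edge-sparsity of $G$ — applied to the ``fragile'' set $V\setminus A$, which has size $<\tfrac{4}{3}t<n/10$, and to the peeled layers around $p$ — confines the peeling so that at every step $p$ still has $\ge\Delta/3$ surviving neighbors. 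Hence $p$ is never discarded, $p\in S$, and $S$ is the desired $(2\log n,\Delta/3)$-dense-neighborhood.

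The main obstacle is precisely part (b): quantitatively ruling out that the fragile set $V\setminus A$, together with whatever is stripped off near the boundary of the ball, percolates inward and severs $p$ from the robust core reached in part (a). This is where the slack between the radius $2\log n$ and the $8\log n$ rounds of \textsc{GroupBitsSpreading}, and the concrete sparsity, expansion and degree constants of Theorem~\ref{thm:random-graph-properties}, are essential; the rest is bookkeeping over the communication rounds.
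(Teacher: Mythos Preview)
Your approach diverges from the paper's and leaves a real gap at step~(b). The paper does not use a robust core $A$ or a peeling argument; it runs a \emph{backward-in-time} induction. Starting from $p\in\texttt{OP}_{8\log n}$, let $B_1$ consist of $p$ together with the $\ge\Delta/3$ processes from which $p$ received in round $8\log n$; these senders were operative when they sent, so $B_1\subseteq\texttt{OP}_{8\log n-1}$ and the links used lie in $G^{\texttt{OP}}_{8\log n-1}$. Inductively, $B_i$ adds to $B_{i-1}$ all processes that sent to $B_{i-1}$ in round $8\log n-i+1$; every vertex of $B_{i-1}$ was operative in that round and therefore has $\ge\Delta/3$ such senders. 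After $2\log n$ backward steps one obtains a $(2\log n,\Delta/3)$-dense-neighborhood of $p$ in $G^{\texttt{OP}}_{6\log n}$ --- which is exactly the graph Lemma~\ref{lem:spreading-reaching} actually invokes (the ``$8\log n$'' in the present statement appears to be a slip for ``$6\log n$''). The $8\log n$ versus $2\log n$ slack you mention at the end is \emph{consumed} by this induction, one round per layer; it is not an ingredient in a separate counting step.

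Your step~(b) is not merely unfinished; the degree claim it rests on is unjustified and can fail. The $\ge\Delta/3$ senders to $p$ in round $8\log n$ are only guaranteed to lie in $\texttt{OP}_{8\log n-1}$; the adversary may drop each of them below the $\Delta/3$ threshold in that very round, expelling them from $\texttt{OP}_{8\log n}$ and hence from $H=G^{\texttt{OP}}_{8\log n}$, so $\deg_H(p)$ can be zero. Moreover, nothing connects $p$ to your core $A$: $p$ may itself be faulty (faulty processes can remain operative), so $p\notin A$ by construction, and the adversary can arrange that all of $p$'s surviving $H$-neighbors lie in $V\setminus A$. Edge-sparsity bounds the \emph{total} number of edges inside a small set, but does not prevent a single vertex from having all $O(\log n)$ of its edges there. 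The temporal monotonicity of operative links is what supplies the missing connection in the paper's proof, and your static peeling argument has no substitute for it.
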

\begin{proof}
For proving the existence of a $(2\log{n}, \Delta / 3)$-dense-neighborhoods for process $p$, we use the following inductive construction. 
Let $R_{0}(p)$ be the set of these processes from which $p$ receives a message in the last round of the algorithm algorithm \textsc{GroupBitsSpreading}. Observe that $|R_{0}(p)|$ is at least $\Delta/3$ as $p$ is in $\texttt{OP}_{8\log{n}}$. 
Let $B_{1}$ be a subgraph of $G$ with the set of vertices equal to $B_{0} \cup R_{0}(p)$ and the set of edges equal to the links by which $p$ receives a message from $R_{0}(p)$ in the last round of the algorithm. First, we note that the graph $B_{1}$ is a subgraph of $G^{\texttt{OP}}_{8\log{n} - 1}$. Any processes of $B_{1}$ must be operative until round $8\log{n} - 1$ as only operative processes send messages. Also, operative processes use links that continuously deliver messages, c.f.~\ref{line:spreading-comm}, instruction $3$, hence any link that delivers a message \textit{to} process $p$ in round $8\log{n}$ has to deliver a message in rounds $1, \ldots, 8\log{n} - 1$ \textit{from} $p$.
It also holds that, as observed, $|B_{1}| \ge \Delta /3$, and thus the graph $B_{1}$ satisfies the definition of $(1, \Delta / 3)$-dense-neighborhood of $p$ in the graph $G^{\texttt{OP}}_{8\log{n} - 1}$. This concludes the base case of the inductive construction.

For the inductive step, we define the set $R_{i}(p)$, for $2 \le i \le 2\log{n}$, as a set of these processes that send a message to any process in $B_{i-1}$ in the round $8\log{n} - i$ of the algorithm. Let $B_{i}$ be a subgraph of $G$ with the set of vertices equal $R_{i}(p)$ and the set of edges equal to the set of link by which a process from $B_{i}$ received a message in the round $8\log{n} - i$. The same argument as the one used in the base case shows that $B_{i}$ is a subgraph of $G^{\texttt{OP}}_{8\log{n} - i}$. By the inductive assumption $B_{i-1} \subseteq G^{\texttt{OP}}_{8\log{n} - i + 1}$. Also, from the line~\ref{line:spreading-comm} it follows that the operative subgraphs $G^{\texttt{OP}}$ are downward monotonic and thus $G^{\texttt{OP}}_{8\log{n} - i + 1} \subseteq G^{\texttt{OP}}_{8\log{n} - i}$. Therefore $B_{i-1} \subseteq B_{i}$. Also, since $B_{i-1} \subseteq G^{\texttt{OP}}_{8\log{n} - i + 1}$, thus processes from $B_{i}$ are operative until the round $8\log{n} - i + 1 \ge 6\log{n}$. It follows that each of these processes receives at least $\Delta / 3$ messages in the round $6\log{n}$ and these messages must be send from processes in $B_{i}$. Therefore $B_{i}$ satisfies the definition of $(i, \Delta / 3)$-dense-neighborhood of $p$ in $G^{\texttt{OP}}_{8\log{n} - i}$. This completes the induction step and concludes the proof.
\end{proof}
We say that a process $p \in \texttt{OP}_{8\log{n}}$ can \textit{reach} a process $q \in \texttt{OP}_{8\log{n}}$ during the run of the algorithm \textsc{GroupBitsSpreading}, if there exists a path of processes $p = s_{1}, s_{2}, \ldots, s_{k} = q$, such that the process $s_{i + 1}$ received a message from $s_{i}$ in $i$th round of the run of the \textsc{GroupBitsSpreading} algorithm, for any $1 \le i \le k - 1$.

\begin{lemma}\label{lem:spreading-reaching}
For a run of the algorithm~\texttt{GroupBitsSpreading}, any processes $p$ in $\texttt{OP}_{8\log{n}}$ can reach any other process $q$ in $\texttt{OP}_{8\log{n}}$.
\end{lemma}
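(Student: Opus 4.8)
The plan is to reduce the statement to finding a sufficiently short path between $p$ and $q$ in the ``surviving'' communication graph $G^{\texttt{OP}}_{8\log n}$, and then to build such a path out of the dense-neighborhoods supplied by Lemma~\ref{lem:operative-dense-neigh}. To that end, I would first record the timing observation already used above: a link that is operative in some round $j$ delivers messages in both directions in all rounds $1,\dots,j-1$, and by downward monotonicity of $G^{\texttt{OP}}$ every link of $G^{\texttt{OP}}_{8\log n}$ is of this form with $j=8\log n$. Hence any path $p=s_1,\dots,s_k=q$ with $k-1<8\log n$ whose $i$-th edge is operative in a round $\ge i+1$ can be realized as a reaching path (send along edge $i$ in round $i$); in particular a path of length $<8\log n$ lying in $G^{\texttt{OP}}_{8\log n}$ suffices, and, more flexibly, so does a concatenation of short segments whose edges all lie in $G^{\texttt{OP}}_{6\log n}$, as long as the total length stays below $6\log n$.

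Next I would apply Lemma~\ref{lem:operative-dense-neigh} to both endpoints, obtaining $(2\log n,\Delta/3)$-dense-neighborhoods $S_p$ and $S_q$ in $G^{\texttt{OP}}_{8\log n}$. The inductive construction in that lemma moreover exhibits, for every $u\in S_p$, a path of length $\le 2\log n+1$ from $u$ to $p$ all of whose edges belong to $G^{\texttt{OP}}_{j}$ for some $j\ge 6\log n$, and symmetrically for $S_q$. Since $G^{\texttt{OP}}_{8\log n}$ is a subgraph of $G$ it inherits $(n/10,\Delta/15)$-edge-sparsity, so Lemma~\ref{lemma:sparse-to-dense} with $\gamma=2\log n$ (using $2^{2\log n}\ge n/10$) yields $|S_p|,|S_q|\ge n/10$.

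It remains to join the two neighborhoods. If $S_p\cap S_q\ne\emptyset$, pick $w$ in the intersection: $p$ reaches $w$ in $a\le 2\log n+1$ rounds along the $S_p$-path, and the $S_q$-path from $w$ to $q$ (length $b\le 2\log n+1$, edges in $G^{\texttt{OP}}_{\ge 6\log n}$) can be traversed in rounds $a+1,\dots,a+b$ since $a+b\le 4\log n+2<6\log n-1$; concatenating gives a reaching path of length $O(\log n)<8\log n$. If $S_p\cap S_q=\emptyset$, then both sets have size $\ge n/10$ and $G$ is $(n/10)$-expanding, so some edge of $G$ joins a vertex $u\in S_p$ to a vertex $w\in S_q$; one then argues, using the robustness of the operative subgraph (every vertex of $S_p\cup S_q$ still carries at least $\Delta/3$ operative links and the dense-neighborhoods keep growing by Lemma~\ref{lemma:sparse-to-dense}), that this crossing is realizable inside $G^{\texttt{OP}}$ — either because the expansion forces $S_p\cap S_q\ne\emptyset$ once the neighborhoods are enlarged by one more layer in $G^{\texttt{OP}}_{8\log n}$, or by rerouting around the edge $\{u,w\}$ through operative links — again producing a reaching path of length $O(\log n)$.

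The main obstacle is exactly this last point: Theorem~\ref{thm:random-graph-properties} guarantees expansion in $G$, whereas what we actually need is a \emph{delivered} message at the right round, i.e.\ an edge of $G^{\texttt{OP}}$. Converting the combinatorial expander edge into such an operative link — or otherwise showing that the two dense-neighborhoods must already touch — is where the real work lies; by contrast, the layer-by-layer construction of the dense-neighborhoods, the $n/10$ size bound, and the round-bookkeeping (which has ample slack, since operative links deliver in \emph{all} earlier rounds) are routine.
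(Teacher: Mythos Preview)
Your outline is structurally right and you have correctly located the one real difficulty, but the resolution you sketch for it does not work, and the actual fix is different and simpler than either of your two suggestions.

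The paper bridges the two dense-neighborhoods by first \emph{passing to the non-faulty subsets} $S_p^{C}\subseteq S_p$ and $S_q^{C}\subseteq S_q$. Since at most $t<n/30$ processes are faulty and $|S_p|,|S_q|\ge n/10$, these subsets are still large enough to invoke the $(n/10)$-expansion of $G$ and obtain an edge $\{p',q'\}$ of $G$ with both endpoints \emph{non-faulty}. This is the point: a link between two non-faulty processes has never dropped a message in either direction, so neither endpoint has disregarded the other, and both (being in the dense-neighborhoods, hence in $\texttt{OP}_{6\log n}$) are still sending in round $2\log n+1$. Therefore the message from $p'$ to $q'$ in that round is actually delivered, with no need to argue that the edge lies in $G^{\texttt{OP}}$. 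The full reaching path is then: $p\to p'$ in rounds $1,\dots,2\log n$ along edges of $G^{\texttt{OP}}_{6\log n}$ (bidirectional in those rounds), the edge $p'\to q'$ in round $2\log n+1$, and $q'\to q$ in the next $2\log n$ rounds symmetrically.

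Your two proposed workarounds do not close the gap. Enlarging the neighborhoods by one more layer in $G^{\texttt{OP}}$ need not force $S_p\cap S_q\ne\emptyset$; nothing prevents the adversary from keeping them disjoint while each still has the requisite degree. And ``rerouting around $\{u,w\}$ through operative links'' assumes you can find nearby operative edges crossing between the two sides, which is exactly the statement you are trying to prove. The missing idea is not a structural property of $G^{\texttt{OP}}$ at all; it is to exploit the bound $t<n/30$ a second time, inside each neighborhood, so that the expander edge can be taken between non-faulty endpoints and is therefore guaranteed to deliver.
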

\begin{proof}
Since $p$ and $q$ belong to $\texttt{OP}_{8\log{n}}$, thus by Lemma~\ref{lem:operative-dense-neigh} there exist $(2\log{n}, \Delta / 3)$-dense-neighborhoods of, respectively, $p$ and $q$ in the graph $G^{\texttt{OP}}_{6\log{n}}$. Denote the neighborhoods $S_{p}$ and $S_{q}$ respectively. Since $G^{\texttt{OP}_{6\log{n}}}$ is a subgraph $G$, thus $S_{1}$ and $S_{2}$ are also subgraphs of $G$. This observation allows us to use Lemma~\ref{lemma:sparse-to-dense} and conclude that the sizes of $S_{1}$ and $S_{2}$ are $n / 10$ at least.
Since there is at most $n / 30$ faulty processes, the are subsets $S^{C}_{1}$, $S^{C}_{2}$ of $S_{1}$ and $S_{2}$ consisting of only non-faulty processes of size at least $n/ 10$ each. By the $(n/10)$-expanding property of $G$ there is at least one edge connecting $S^{C}_{1}$ and $S^{C}_{2}$. Let $p'$ and $q'$ be the endpoint processes of this edge. Now, we can define the path by which $p$ can reach $q$. Processes $p$ and $p'$ belong to $G^{\texttt{OP}}_{6\log{n}}$. By the previous observation links of this subgraph transmit messages bidirectionally in rounds $1, \ldots, 2\log{n}$. Since $p'$ is in a $(2\log{n}, \Delta/3)$-dense-neighborhood of $p$ in $G^{\texttt{OP}}_{6\log{n}}$, thus $p'$ is in distance at most $2\log{n}$ from $p$ in $G^{\texttt{OP}}_{6\log{n}}$. It follows that $p$ can reach $p'$ in $2\log{n}$ rounds. In the round $2\log{n} + 1$ the process $p'$ can reach $q'$ as they are both non-faulty and operative and there is an edge between them. Then, in the next $2\log{n}$ rounds, $q'$ can reach $q$ by the same reasoning as for $p$ and $p'$. Thus, the lemma is proven. 
\end{proof}

\noindent \textbf{Analysis of the main algorithm}.
In the final part of the proof, we connect together the properties of the algorithms \textsc{GroupBitsAggregation} and \textsc{GroupBitsSpreading} and explain how they lead to correct updates of values $b$ in lines~\ref{line:if-1}-~\ref{line:if-random} of the main algorithm~\textsc{OptimalOmissionsConsensus}, resulting eventually in a valid consensus decision.   
We start by noting that the communication graphs used in any of the inner algorithms are dense enough to maintain a large fraction of correct processes operative, regardless of the actions of the adversary. We recall here the assumption that the number $t$ of faulty processes is less than $\frac{n}{30}$.
Also, recall that we call a single iteration of the main loop of the algorithm ~\textsc{OptimalOmissionsConsensus} an epoch. For an epoch, let $\texttt{OP\_END}$ denote the set of these processes that are operative at the end of this epoch.

\begin{lemma}\label{lem:good-proc-are-large}
For any epoch, the size of the set $\texttt{OP\_END}$ is larger than $n - 3t$.
\end{lemma}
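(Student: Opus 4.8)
The plan is to track, across a single epoch, the two distinct mechanisms by which a process can turn inoperative: (a) inside \textsc{GroupBitsAggregation}, where a source drops out if it fails to receive $\tfrac12|W_i|+1$ confirmations in the \textsc{GroupRelay} subroutine, and (b) inside \textsc{GroupBitsSpreading}, where a process drops out if at some round it receives fewer than $\Delta/3$ messages along its edges of the predetermined graph $G$. I would bound the number of processes lost to each mechanism separately and then add the bounds. The key observation throughout is that every such loss is ``charged'' to faulty processes: a non-faulty process sends all the messages it is supposed to send, so if a correct process does not receive enough messages, the missing senders must be faulty (or already inoperative, which recursively traces back to faulty processes).

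First I would handle mechanism (b) via the combinatorial properties of $G$ from Theorem~\ref{thm:random-graph-properties} and, crucially, Lemma~\ref{lem:compact-comp}. Let $T$ be the set of faulty processes, so $|T|=t<\tfrac{n}{30}<\tfrac{n}{15}$. Apply Lemma~\ref{lem:compact-comp} to $T$: there is a set $A\subseteq V$, disjoint from $T$, of size at least $n-\tfrac43 t$, in which every node has at least $\Delta/3$ neighbors inside $A$. I claim every process of $A$ remains operative through all $8\log n$ rounds of \textsc{GroupBitsSpreading}: a process $p\in A$ has $\ge\Delta/3$ neighbors in $A$, all non-faulty, hence all of them deliver their messages to $p$ in every round (non-faulty processes never omit, and $p$ never disregards a neighbor that has always delivered), so $p$ always receives $\ge\Delta/3$ messages and never triggers the inoperative rule in line~\ref{line:spreading-comm}. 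Therefore at most $n-|A|\le\tfrac43 t$ processes can be lost to mechanism (b); in fact since these lost processes lie outside $A$, and $A\cap T=\emptyset$, all processes of $A$ survive this phase.

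Next I would handle mechanism (a) inside \textsc{GroupBitsAggregation}. Here the relevant condition is receiving $\tfrac12|W_i|+1$ confirmations from transmitters, where transmitters are the \emph{non-faulty} processes of the whole group $W_i$. If group $W_i$ contains at most $\tfrac12|W_i|-1$ faulty processes, then it has at least $\tfrac12|W_i|+1$ non-faulty transmitters, each of whom delivers to every source, so \emph{no} correct source in $W_i$ is ever dropped. Summing over groups, the only correct processes that can be dropped by mechanism (a) lie in ``bad'' groups having $\ge\tfrac12|W_i|$ faulty processes; a simple counting argument (each bad group consumes $\Omega(\sqrt n)$ of the $\le t$ faults, and a bad group has $O(\sqrt n)$ processes total) shows the total number of correct processes in bad groups is $O(t)$ — and with the constants here ($|W_i|\le\lceil\sqrt n\rceil$, $t<n/30$) one gets a bound of at most $2t$ or so. Combining: the set of processes operative at the end of the epoch contains $A$ minus the correct processes of $W_i$ lost to mechanism (a); but those lost processes, together with $T$ and the $\le\tfrac43 t$ processes outside $A$, total strictly less than $3t$, so $|\texttt{OP\_END}|>n-3t$.

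The main obstacle I anticipate is getting the constants to line up cleanly: the $\tfrac{n}{30}$ bound on $t$, the $\Delta=832\log n$, the $\Delta/3$ and $\Delta/15$ thresholds, and the $\tfrac12|W_i|+1$ confirmation threshold must all be reconciled so that the two loss-counts plus $t$ itself come in under $3t$. In particular one must be careful that the argument for \textsc{GroupBitsAggregation} is not circular — a source that has \emph{already} become inoperative in an earlier stage is no longer a potential confirmer, but it is still a potential transmitter (the pseudocode says transmitters are non-faulty processes of the group, and this role is not revoked), so the ``at least $\tfrac12|W_i|+1$ non-faulty transmitters'' count is genuinely static across stages, which is what makes the induction-free argument work. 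I would double-check that subtlety, and also that the ``reach'' guarantee of Lemma~\ref{lem:spreading-reaching} is not needed here (it is used later for correctness of the counts, not for this cardinality bound).
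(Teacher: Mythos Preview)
Your separation into mechanisms (a) and (b) is natural and the tools you identify (the counting argument on groups for (a), Lemma~\ref{lem:compact-comp} for (b)) are the right ones. But there is a genuine gap in the order in which you apply them. You invoke Lemma~\ref{lem:compact-comp} on the faulty set $T$ and then claim every process in the resulting set $A$ survives \textsc{GroupBitsSpreading} because its $\ge\Delta/3$ neighbors in $A$ are non-faulty and hence deliver. This step is not sound: a process $q\in A$ that sits in a ``bad'' group may have already become inoperative during \textsc{GroupBitsAggregation}, and by line~\ref{line:group-bits-aggr}$+$1 of Algorithm~\ref{alg:opt-omissions} it then stays idle for the rest of the epoch and sends nothing in \textsc{GroupBitsSpreading}. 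Its neighbors in $A$ therefore lose that message, may drop below the $\Delta/3$ threshold, and become inoperative themselves---a cascade that ``subtracting off the processes lost to (a) at the end'' does not capture. Concretely, your containment $\texttt{OP\_END}\supseteq A\setminus B$ fails because a process in $A\setminus B$ can be knocked out in (b) by losing its $B$-neighbors.

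The paper avoids this by swapping the order of the two steps. It first runs the counting argument to obtain the set $X$ of non-faulty processes in groups with a non-faulty majority (so $|V\setminus X|\le 2t$, and every process in $X$ is guaranteed to survive (a) in every epoch up to the current one). Only then does it apply Lemma~\ref{lem:compact-comp}, but to $V\setminus X$ rather than to $T$, yielding $X'\subseteq X$ with $|X'|\ge n-\tfrac{4}{3}\cdot 2t > n-3t$ in which every node has $\ge\Delta/3$ neighbors in $X'$. Because $X'\subseteq X$, all of $X'$ is still operative when \textsc{GroupBitsSpreading} starts, so each process in $X'$ actually receives from all its $X'$-neighbors and survives (b). Your argument is repaired by exactly this swap (equivalently, apply Lemma~\ref{lem:compact-comp} to $T\cup B$ instead of $T$); the constants then fall out cleanly as $n-\tfrac{8}{3}t>n-3t$ without any ad hoc inclusion--exclusion.
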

\begin{proof}
Consider an epoch $\cE$ of the algorithm and let $F$ be the set of processes that the adversary corrupted in the epoch $\cE$ or before. Recall the partition $\{ W_{1}, \ldots, W_{\ceil{\sqrt{n}}}\}$ of the set $\cP$  used in the algorithm $\textsc{GroupBitsAggregation}$. Since $|F| \le t$, processes then by a counting argument there exists a set of $\sqrt{n} - 2\frac{t}{\sqrt{n}}$ groups, which corresponds to a set of at least $n - 2t$ processes, such that every process in the set is non-faulty and has more than half non-faulty processes in its group. Denote this set $X$.

Consider the subgraph induced by $X$ in the graph $G$ used for the communication in $\textsc{GroupBitsSpreading}$ algorithm. Using Lemma~\ref{lem:compact-comp} for the set $G \setminus X$, which has size at most $2t \le n/15$, we conclude that there exists $X' \subseteq X$ that has size at least $n - 4/3 \cdot 2t = n - 8/3t \ge n - 3t$ such that every process from $X'$ has degree at least $\Delta / 3$ in $X'$.

We finish the proof by arguing, that all process from $X'$ belongs to the set $\texttt{OP\_END}$ for the epoch $\cE$. First, every process $p$ in $X'$ also belongs $X$ thus, by the definition of $X$, it always receives more than half of messages from its group in all the executions of \textsc{GroupBitsAggregation} that occur before or in the epoch $\cE$. 
Second, we choose $X'$ to consist only of non-faulty processes. Communication on the links \textit{between} them is always reliable. By the fact that every processes of $X'$ has degree at least $\Delta / 3$, we conclude that every processes from $X'$ maintains its operative status in all the executions \textsc{GroupBitsSpreading} that occur before or in the epoch $\cE$. Therefore, $X' \subseteq \texttt{OP\_END}$.
\end{proof}
We say that a process $p$ \textit{contributes} to the sum of ones (or zeros) of a process $q$ if its bit $b_{p}$ is included in the value $ones_{q}$ calculated in line~\ref{line:sum_ones_zeros} of the main algorithm~\textsc{OptimalOmissionsConsensus} ($zeros_{q}$ resp.). 
\begin{lemma}\label{lem:operative-contribution}
For any epoch, every process $p$ contained in the set $\texttt{OP\_END}$ with $b_{p} = 1$ contributes to the sum of ones of any other process $q$ from $\texttt{OP\_END}$. Analogically, if $b_{p} = 0$ then $p$ contributes to the sum of zeros of any other process $q$ from $\texttt{OP\_END}$.
\end{lemma}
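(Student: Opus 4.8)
The plan is to chain together the two structural lemmas already proved for the two communication subroutines that make up an epoch: Lemma~\ref{lem:bits-agg-contr} for \textsc{GroupBitsAggregation} and Lemma~\ref{lem:spreading-reaching} for \textsc{GroupBitsSpreading}. Fix an epoch and let $W_\ell$ be the group containing $p$. The first thing to record is that membership in $\texttt{OP\_END}$ is persistent backwards in time: since an inoperative process never becomes operative again, $p \in \texttt{OP\_END}$ forces $p$ to have been operative at the end of this epoch's run of \textsc{GroupBitsAggregation}, i.e.\ $p \in M_\ell$ in the notation of Lemma~\ref{lem:bits-agg-contr}; likewise for $q$, and in addition both $p$ and $q$ lie in the set $\texttt{OP}_{8\log n}$ of this epoch's \textsc{GroupBitsSpreading} run.

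Next I would invoke Lemma~\ref{lem:bits-agg-contr}: after \textsc{GroupBitsAggregation}, every process of $M_\ell$ --- in particular $p$ itself, which carries this value into the spreading phase --- holds a pair $(\texttt{g\_ones},\texttt{g\_zeros})$ in which $b_p$ is counted in the coordinate determined by its value, namely in $\texttt{g\_ones}$ when $b_p=1$ and in $\texttt{g\_zeros}$ when $b_p=0$. The point to pin down precisely is that the slot $\texttt{BitPacks}[\ell]$ is ever initialized only by processes of $W_\ell$ that are still operative when \textsc{GroupBitsSpreading} starts --- that is, exactly by the processes of $M_\ell$ --- and that, by the semantics of the $\vee$-update in line~\ref{line:spreading-comm}, a non-empty entry is never overwritten. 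Hence any non-empty value this slot ever takes at any process during the spreading phase is one of those initial values held by members of $M_\ell$, and every one of them counts $b_p$ in the appropriate coordinate.

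The last step is the propagation argument. By Lemma~\ref{lem:spreading-reaching}, applied to this epoch's run of \textsc{GroupBitsSpreading} (which is legitimate because $p,q\in\texttt{OP}_{8\log n}$), there is a reaching path $p=s_1,s_2,\dots,s_k=q$ with $s_{i+1}$ receiving $s_i$'s message in round $i$. I would then argue by a short induction along this path that $s_{i+1}$ holds $\texttt{BitPacks}_{s_{i+1}}[\ell]$ non-empty by the end of round $i$: $s_1=p$ starts with it non-empty, and whenever $s_i$ sends to $s_{i+1}$ in round $i$, either that message carries slot $\ell$ (if it was not shared with $s_{i+1}$ earlier) or slot $\ell$ had already been shared with $s_{i+1}$ in a previous round --- in both cases $s_{i+1}$ holds slot $\ell$ non-empty at the end of round $i$. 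Thus $q$ finishes the spreading phase with $\texttt{BitPacks}_q[\ell]$ non-empty and equal to an initial value of some process of $M_\ell$, whose first coordinate counts $b_p$ when $b_p=1$ and whose second coordinate counts $b_p$ when $b_p=0$. Finally, since $p$ belongs to no group other than $W_\ell$, its bit is accumulated into $\texttt{ones}_q$ --- computed in line~\ref{line:sum_ones_zeros} as the sum of the first coordinates of the non-empty entries of $\texttt{BitPacks}_q$ --- through the $\ell$-th slot and through no other, which is precisely the statement that $p$ contributes to the sum of ones of $q$; the case $b_p=0$ is verbatim the same with ``ones''/``zeros'' and ``first''/``second'' interchanged.

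The step that needs genuine care, rather than being routine bookkeeping, is the second one: establishing that \emph{every} group-$W_\ell$ count that can possibly circulate during the spreading phase already includes $b_p$. This needs three ingredients at once --- (i) $p\in M_\ell$, from the persistence observation; (ii) Lemma~\ref{lem:bits-agg-contr}, guaranteeing that the group-counts of all members of $M_\ell$ count each other's bits; and (iii) the facts that only members of $M_\ell$ ever write slot $\ell$ and that such writes are monotone --- since without (iii) one would have to rule out $q$ latching onto a ``stale'' $W_\ell$-count that predates $p$'s contribution. Everything else (the persistence of inoperativeness, the induction along the reaching path, and checking that $b_p$ is counted in exactly one slot of $\texttt{BitPacks}_q$) is straightforward.
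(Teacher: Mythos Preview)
Your proposal is correct and follows essentially the same approach as the paper: chain Lemma~\ref{lem:bits-agg-contr} (so that every group-$W_\ell$ count that can circulate already includes $b_p$) with Lemma~\ref{lem:spreading-reaching} (so that $q$ receives at least one such count). Your write-up is in fact more careful than the paper's terse version, explicitly justifying the persistence of operative status, the fact that only members of $M_\ell$ initialize slot $\ell$, and the monotonicity of the $\vee$-update --- details the paper implicitly relies on but does not spell out.
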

\begin{proof}
Consider any $p \in \texttt{OP\_END}$. Let $W_{i}$ be the $p$'s group, i.e. $p \in W_{i}$. 
By Lemma~\ref{lem:bits-agg-contr}, we get that any operative process $r$ contributes to the values returned by any other operative process within its group in the algorithm \textsc{GroupBitsAggregation}.
This yields, that whatever pair of values  $(\texttt{g\_ones}_{i}, \texttt{g\_zeros}_{i})$, describing the number of $0$'s and $1$'s among operative process in $W_{i}$, the process $q$ receives during the execution \textsc{GroupBitsSpreading} algorithm, the bit value $b_{p}$ contributes to this pair. 
On the other hand, Lemma~\ref{lem:spreading-reaching} assures that $p$ can reach $q$. 
It follows that $q$ receives at least one pair of values $(\texttt{g\_ones}_{i}, \texttt{g\_zeros}_{i})$, and the lemma is proven.
\end{proof}

\begin{lemma}[Lemma $4.3$ in~\cite{Bar-JosephB98}]\label{lem:anti-concetration}
Assume that $n$ processes independently choose a random bit from uniform distribution. Let $X$ be the random variable denoting the number processes that chose bit $1$. Then for any $t \le \sqrt{n} / 8$
$$\Pr(X - \E(X) \ge t\sqrt{n}) \ge \frac{e^{-4(t+1)^2}}{\sqrt{2\pi}} \ .$$
\end{lemma}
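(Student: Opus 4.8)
This is an anti-concentration bound for the binomial: $X$ is distributed as $\mathrm{Bin}(n,1/2)$, $\E(X)=n/2$, and we must lower-bound a right-tail probability. I would prove it by an elementary window-summation argument. Assume first that $n$ is even, so that $Y:=X-\E(X)=X-n/2$ is integer-valued, symmetric about $0$, and unimodal with mode at $0$ (the odd case is analogous, with $\lceil n/2\rceil$ in place of $n/2$ and one extra rounding term to track, and the finitely many small $n$ for which $t\le\sqrt n/8$ already forces $t$ to be negligible are checked by hand, e.g.\ $\Pr(Y\ge0)\ge\tfrac12$ when $t=0$). Set $a:=\lceil t\sqrt n\rceil$ and $\ell:=a+\lfloor\sqrt n\rfloor$, and consider the block $W:=\{a,a+1,\dots,\ell\}$ of $\lfloor\sqrt n\rfloor+1\ge\sqrt n$ consecutive integers. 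Since $a\le t\sqrt n+1$ and $t\le\sqrt n/8$, we get $\ell\le(t+1)\sqrt n+1\le n/8+\sqrt n+1<n/2$ for $n$ above an absolute constant, so $W$ lies strictly to the right of the mode; hence $\Pr(Y=j)$ is non-increasing on $W$ and
\[
\Pr\!\big(X-\E(X)\ge t\sqrt n\big)\ \ge\ \sum_{j\in W}\Pr(Y=j)\ \ge\ |W|\cdot\Pr(Y=\ell)\ \ge\ \sqrt n\cdot\Pr(Y=\ell).
\]
It thus suffices to show $\Pr(Y=\ell)\ge\frac{1}{\sqrt{2n}}\,e^{-4(t+1)^2}$.

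To lower-bound the point mass $\Pr(Y=\ell)=\binom{n}{n/2+\ell}2^{-n}$ I would combine the standard central-binomial estimate $\binom{n}{n/2}2^{-n}\ge\frac{1}{\sqrt{2n}}$ (a two-line induction, since $m\mapsto 2\sqrt m\,\binom{2m}{m}4^{-m}$ is increasing and equals $1$ at $m=1$) with the telescoping identity
\[
\binom{n}{n/2+\ell}2^{-n}=\binom{n}{n/2}2^{-n}\prod_{i=1}^{\ell}\frac{n/2-i+1}{n/2+i}\ \ge\ \frac{1}{\sqrt{2n}}\prod_{i=1}^{\ell}\frac{n/2-i}{n/2+i},
\]
and then control the product by taking logarithms: $\ln\frac{n/2-i}{n/2+i}=-\ln\!\big(1+\tfrac{2i}{n/2-i}\big)\ge-\tfrac{2i}{n/2-i}\ge-\tfrac{2i}{n/2-\ell}$, so summing over $i=1,\dots,\ell$ gives $\prod_{i=1}^{\ell}\frac{n/2-i}{n/2+i}\ge\exp\!\big(-\tfrac{\ell(\ell+1)}{n/2-\ell}\big)$. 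Therefore $\Pr(Y=\ell)\ge\frac{1}{\sqrt{2n}}\exp\!\big(-\tfrac{\ell(\ell+1)}{n/2-\ell}\big)$.

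Finally I would plug in $\ell\le(t+1)\sqrt n+1$ and $n/2-\ell\ge 3n/8-\sqrt n-1$ and verify the arithmetic. Writing $u:=\ell/n\le\tfrac18+O(n^{-1/2})$, one has $\tfrac{\ell(\ell+1)}{n/2-\ell}\le\tfrac{u^2 n}{1/2-u}(1+o(1))=4(t+1)^2\cdot\tfrac{1}{4(1/2-u)}(1+o(1))\le 4(t+1)^2$ for all $n$ above an absolute constant, because $u$ stays below $1/4$. Combining this with $\sqrt n\cdot\frac{1}{\sqrt{2n}}=\frac{1}{\sqrt2}\ge\frac{1}{\sqrt{2\pi}}$ yields
\[
\Pr\!\big(X-\E(X)\ge t\sqrt n\big)\ \ge\ \frac{1}{\sqrt2}\,e^{-4(t+1)^2}\ \ge\ \frac{e^{-4(t+1)^2}}{\sqrt{2\pi}},
\]
as required. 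The conceptual content is light; the only real work — and the main place to be careful — is this last constant-chasing step: keeping the floors/ceilings and the $+1$'s under control, checking that the exponent's asymptotic constant (at most $\tfrac83$ for bounded $t$, and still strictly below $4$ in the extreme regime $t=\sqrt n/8$) really stays below $4$ once the lower-order terms are included, and cleanly separating the ``$n$ large'' regime from the handful of small $n$ settled by hand. An alternative to the telescoping product is a direct Stirling estimate $\binom{n}{n/2+\ell}2^{-n}\ge c\,n^{-1/2}e^{-2\ell^2/n}$, which gives a sharper constant at the cost of more error-term bookkeeping.
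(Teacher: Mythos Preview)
The paper does not prove this lemma at all: it is quoted verbatim as Lemma~4.3 of Bar-Joseph and Ben-Or~\cite{Bar-JosephB98} and used as a black box in the proof of Lemma~\ref{lem:three-epochs}. So there is no ``paper's own proof'' to compare against.

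Your argument is a correct and standard anti-concentration proof for the symmetric binomial. The window-of-width-$\sqrt n$ trick combined with a lower bound on the rightmost point mass via the central term and a telescoping ratio is exactly how such bounds are typically derived; the computation $\sqrt n\cdot\frac{1}{\sqrt{2n}}=\frac{1}{\sqrt2}>\frac{1}{\sqrt{2\pi}}$ and the exponent estimate $\frac{\ell(\ell+1)}{n/2-\ell}\lesssim\frac{(t+1)^2}{1/2-u}\le 4(t+1)^2$ for $u\le 1/4$ are both right. The only places that are sketched rather than nailed down are the small-$n$ and odd-$n$ cases and the absorption of the $+1$ rounding terms into the constant $4$; you flag these honestly, and they are routine to close. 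Your alternative suggestion (a direct Stirling bound $\binom{n}{n/2+\ell}2^{-n}\ge c\,n^{-1/2}e^{-2\ell^2/n}$) is in fact closer to how the original reference argues and would shorten the exponent bookkeeping.
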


\noindent We call an epoch \textit{good} if at most $\sqrt{n}$ processes become inoperative during the epoch. The following explains how lines~\ref{line:if-1}-~\ref{line:if-random} change $b$ values of operative processes under the assumption of a sequence of consecutive good epochs. 
\begin{lemma}\label{lem:three-epochs}
Consider three consecutive good epochs $\cE_{1}, \cE_{2}, \cE_{3}$. Let $\texttt{OP\_END}_{i}$, for $i \in \{1, 2, 3\}$ be the set of operative processes at the end of the epoch $\cE_{i}$. With probability $\Omega(1)$ all processes belonging to $\texttt{OP\_END}_{3}$ store the same value $b$ before the third epoch ends.
\end{lemma}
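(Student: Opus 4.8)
The plan is to follow the classical biased-majority-voting analysis of Bar-Joseph and Ben-Or~\cite{Bar-JosephB98}, but carefully tracking the fact that (a) different operative processes may hold slightly different counts, and (b) the set of operative processes shrinks by at most $\sqrt{n}$ per good epoch. Throughout, write $N_i := |\texttt{OP\_END}_i|$ for the number of operative processes at the end of $\cE_i$; by Lemma~\ref{lem:good-proc-are-large} all $N_i \ge n - 3t \ge \frac{9}{10}n$ since $t < \frac{n}{30}$. The key structural fact I will use repeatedly is Lemma~\ref{lem:operative-contribution}: for any operative process $q$ at the end of an epoch, the values $\texttt{ones}_q,\texttt{zeros}_q$ it computes count \emph{all} operative-at-end processes with the corresponding bit, plus possibly some processes that were operative at the start but became inoperative during the epoch. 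Hence, in a good epoch, if $z$ denotes the true number of operative-at-end processes holding bit $0$ at the start of the lines~\ref{line:if-1}--\ref{line:if-random} block, then $z \le \texttt{zeros}_q \le z + \sqrt{n}$ and similarly for ones, so any two operative processes' estimates of $\texttt{ones}$ (resp. of $\texttt{ones}+\texttt{zeros}$) differ by at most $\sqrt{n} = o(N_i)$. The threshold $\frac{18}{30} = \frac12 + \frac{1}{10}$ is chosen precisely so that this $O(\sqrt n)$ slack cannot push one operative process above $\frac{18}{30}$ while another is below $\frac12$: the gap $\frac{1}{10}N_i$ between the two thresholds dominates the $\sqrt n$ perturbation for large $n$.

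The argument proceeds epoch by epoch. \textbf{Epoch $\cE_1$, establishing a common deterministic step with constant probability.} Consider the operative processes at the end of $\cE_1$ and the distribution of their $b$-values \emph{as updated in lines~\ref{line:if-1}--\ref{line:if-random} of $\cE_1$}. If at the start of that block at least a $\frac{18}{30}$-fraction of operative-at-end processes hold bit $1$, then by the estimate above every operative process sees $\texttt{ones}_q \ge \frac{18}{30}(\texttt{ones}_q+\texttt{zeros}_q) - O(\sqrt n) \ge \frac{15}{30}(\cdots)$ is not the binding case — rather, I will show every such $q$ falls into the line~\ref{line:if-1} branch and sets $b_q \leftarrow 1$; symmetrically for a $\ge \frac{15}{30}$-minority of ones. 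The genuinely random case is when the fraction of ones among operative-at-end processes lies in the ``coin-flip window''; then some operative processes set $b$ deterministically (all to the same value, by the threshold-separation argument) and the rest flip a fresh uniform bit. In all sub-cases I want the conclusion: \emph{with probability $\Omega(1)$, after $\cE_1$'s update every process in $\texttt{OP\_END}_1$ that flipped landed on $0$ (say), and every process in $\texttt{OP\_END}_1$ that set $b$ deterministically also holds $0$} — i.e., all of $\texttt{OP\_END}_1$ agree on $b=0$. This is where Lemma~\ref{lem:anti-concetration} enters: among the $k \le N_1$ coin-flippers, with constant probability the number of $1$'s deviates \emph{below} its mean by $\Theta(\sqrt n)$, which (combined with the $\le \sqrt n$ perturbation and the fact that any deterministic setters are already at $0$) keeps the overall count of ones strictly below $\frac12$ of the total, and keeps it there at every operative process. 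Since $k\sqrt k$ scales and $k \ge N_1 - (\text{deterministic setters})$, a $\Theta(\sqrt n)$ one-sided deviation happens with probability at least some absolute constant.

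\textbf{Epochs $\cE_2$ and $\cE_3$, propagating the agreement.} Once all of $\texttt{OP\_END}_1$ hold $b=0$ at the end of $\cE_1$, I claim all of $\texttt{OP\_END}_2$ hold $b=0$ at the end of $\cE_2$, deterministically. Indeed, the processes operative at the \emph{start} of $\cE_2$ that carry bit $1$ into $\cE_2$ are only those that were operative at the end of $\cE_1$ but are \emph{not} in $\texttt{OP\_END}_1$ — wait, more carefully: $\texttt{OP\_END}_1 \supseteq \texttt{OP\_END}_2$ is not literally true, but the set of operative-at-end-of-$\cE_2$ processes that held bit $1$ at the start of $\cE_2$ is contained in $(\text{operative at start of }\cE_2)\setminus \texttt{OP\_END}_1$, which has size $\le \sqrt n$ because $\cE_1$ is good. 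Hence in $\cE_2$ every operative process computes $\texttt{ones}_q \le \sqrt n < \frac{3}{30}(\texttt{ones}_q+\texttt{zeros}_q)$, lands in the line~\ref{line:if-0} branch, and sets $b_q \leftarrow 0$; so all of $\texttt{OP\_END}_2$ hold $0$. The same one-step argument applied to $\cE_3$ shows all of $\texttt{OP\_END}_3$ hold $0$. Chaining: $\texttt{OP\_END}_3$ all store the same value $b$ before $\cE_3$ ends, with the $\Omega(1)$ probability coming solely from the coin-flip step in $\cE_1$.

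\textbf{Main obstacle.} The delicate point is the bookkeeping in the coin-flip step of $\cE_1$: I must verify that the $\Theta(\sqrt n)$ one-sided deviation from Lemma~\ref{lem:anti-concetration} is large enough to \emph{simultaneously} beat (i) the $\le \sqrt n$ discrepancy between the true operative-at-end count and each process's local estimate, and (ii) the contribution of the up-to-$\sqrt n$ processes that became inoperative during $\cE_1$ and may be spuriously counted — and that the constant in Lemma~\ref{lem:anti-concetration}'s $t\sqrt n$ deviation can be taken strictly larger than the constant governing these $\sqrt n$ slacks (here, a constant of $2$ or $3$ suffices, well within the $t \le \sqrt n/8$ range of that lemma), so that the net count of ones at \emph{every} operative process ends up on the correct side of $\frac12$. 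Once the constants are pinned down this is routine, and since an essentially identical convergence computation was carried out in~\cite{Bar-JosephB98} with different constants, I will state the constant chase and refer to~\cite{Bar-JosephB98} for the unchanged parts.
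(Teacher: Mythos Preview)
Your overall strategy (track the $\le \sqrt n$ slack in the counts, use the $\frac{18}{30}-\frac{15}{30}$ gap to rule out simultaneous deterministic $0$ and $1$, then invoke anti-concentration) matches the paper's. But there is a genuine gap in what you claim after $\cE_1$.

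You assert that, with probability $\Omega(1)$, \emph{every} process in $\texttt{OP\_END}_1$ holds the same bit at the end of $\cE_1$ --- literally that ``every process in $\texttt{OP\_END}_1$ that flipped landed on $0$''. Lemma~\ref{lem:anti-concetration} does not give this: with $k=\Theta(n)$ coin-flippers, the event ``all land on $0$'' has probability $2^{-\Theta(n)}$, not $\Omega(1)$. What anti-concentration \emph{does} give is only a $\Theta(\sqrt n)$ one-sided deviation, i.e.\ that the number of $1$'s among $\texttt{OP\_END}_1$ is at most $\tfrac{N_1}{2}-3\sqrt n$ (say). Your $\cE_2$ paragraph, however, is written under the premise that all of $\texttt{OP\_END}_1$ already hold $0$, and your bound ``$\texttt{ones}_q \le \sqrt n$ in $\cE_2$'' follows only from that false premise.

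The paper's proof uses the bias --- not agreement --- from $\cE_1$ to force agreement in $\cE_2$ (or $\cE_3$), and for this it needs a case split you are missing. If in $\cE_1$ no process executes line~\ref{line:if-1}, then anti-concentration (using stochastic domination to treat the deterministic-$0$ setters as extra coin-flippers) gives $\#\{b=1\}\le \tfrac{N_1}{2}-3\sqrt n$ with constant probability, so in $\cE_2$ every operative process sees a fraction below $\tfrac12$ and sets $b\leftarrow 0$. If in $\cE_1$ some process executes line~\ref{line:if-1} (so all deterministic setters are at $1$, not $0$), a further sub-split on the number $x$ of coin-flippers is required: for small $x$ one shows $\#\{b=1\}\ge \tfrac{6}{10}N_1+3\sqrt n$ whence all set $b\leftarrow 1$ in $\cE_2$; for large $x$ one shows $\#\{b=1\}\le \tfrac{6}{10}N_1-3\sqrt n$, which only rules out deterministic-$1$ in $\cE_2$, reducing to the first case for the pair $\cE_2,\cE_3$. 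This last sub-case is exactly why three good epochs are needed rather than two.

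Two smaller points: (i) $\texttt{OP\_END}_1\supseteq \texttt{OP\_END}_2$ \emph{is} literally true (operative status is monotone), so the workaround you sketch is unnecessary; (ii) anti-concentration among $k$ flippers gives a $\Theta(\sqrt k)$ deviation, so when you want a $\Theta(\sqrt n)$ margin you must either have $k=\Theta(n)$ or use the stochastic-domination trick above to apply the lemma to all $N_1\ge \tfrac{9}{10}n$ processes.
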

\begin{proof}
Using Lemma~\ref{lem:good-proc-are-large}, we have that $|\texttt{OP\_END}_{1}| \ge n - 3t \ge \frac{9}{10} n$, since we assumed  $t < \frac{1}{30}n$. Combining Lemma~\ref{lem:operative-contribution} with the assumption that in the epoch $\cE_{1}$ at most $\sqrt{n}$ change their status to inoperative, we see that values $ones_{p}$, $zeros_{p}$ and $ones_{p} + zeros_{p}$ calculated in a process $p$ in epoch $\cE_{1}$ differ by at most $\sqrt{n}$ from the corresponding values of any other processes from $\texttt{OP\_END}_{1}$. 
Therefore, in the epoch $\cE_{1}$ no two processes can execute line~\ref{line:if-1} and line~\ref{line:if-0} at the same time, since the difference between right-hand-sides of these two inequalities is at least $\frac{1}{30}|\texttt{OP\_END}_{1}| > \sqrt{n}$.

We first consider the case when no process executes line~\ref{line:if-0}, that is processes either assign a random bit to $b$ or $0$. Observe that if a processes assigns $0$ with probability $1$ instead of $\frac{1}{2}$ it is more likely that the total number of zeros exceeds a certain threshold. Therefore by applying Lemma~\ref{lem:anti-concetration}\footnote{Although Lemma~\ref{lem:anti-concetration} concerns the number of $1$'s being chosen by processes, if the distribution is uniform then $0$ and $1$ have the same probability of occurring and, by the symmetry, the same lemma can be used to estimate the number of chosen $0$'s.} we get that with a constant probability $C_{1}$, the number of processes starting the next epoch $\cE_{2}$ with $b$ equal zero is at least $\frac{|\texttt{OP\_END}_{1}|}{2} + 3\sqrt{n}$. 
The constant $C_{1}$ is an appropriate constant in Lemma~\ref{lem:anti-concetration} corresponding to the fact that we have a $|\texttt{OP\_END}_{1}| \ge \frac{9}{10} n$ lower bound on the size of $\texttt{OP\_END}_{1}$ and we measure the deviation of size $\sqrt{n}$ from the mean.
Nevertheless, it follows that with probabiblity $C_{1}$ the number of processes having $b = 1$ is at most
$$|\texttt{OP\_END}_{1}| - \frac{|\texttt{OP\_END}_{1}|}{2} - 3\sqrt{t} \le \frac{|\texttt{OP\_END}_{1}|}{2} - 3\sqrt{n} \ .$$
In the next epoch $\cE_{2}$ at most $\sqrt{n}$ new processes become inoperative and thus the ratio of the number of $1$'s held by the operative processes to the total number of operative processes must remain lower than $\frac{1}{2}$ in the entire epoch $\cE_{2}$. It follows then, that all operative processes execute line~\ref{line:if-0} and assign $b$ to $0$ before the epoch ends. 

The second case is when at least one process executes line~\ref{line:if-1} in the epoch $\cE_{1}$. If there is at least one operative process assigning $1$ to its variable $b$ without sampling, and at most $\sqrt{n}$ processes become inoperative the epoch $\cE_{1}$, thus every other operative process assigns $1$ to the variable $b$ or it assigns a random value. 
Let $x$ be the number of processes that assign a random value in the epoch $\cE_{1}$.
We proceed with two subcases:
\begin{itemize}
    \item \textbf{Subcase 1:} Assume that $x \ge \frac{8}{10}|\texttt{OP\_END}_{1}|$. By Lemma~\ref{lem:anti-concetration}, with a constant probability $C_{2}$, more than $\frac{4}{10}|\texttt{OP\_END}_{1}| + 3\sqrt{n}$ from processes who execute line~\ref{line:if-random} assign $0$ to their value $b$. Thus, by the beginning of the epoch $\cE_{2}$ at most
    $$|\texttt{OP\_END}_{1}| - \frac{4}{10}|\texttt{OP\_END}_{1}| + 3\sqrt{n} = \frac{6}{10}|\texttt{OP\_END}_{1}| - 3\sqrt{n}$$
    processes have value $b$ set to $1$. In the epoch $\cE_{2}$ at most $\sqrt{n}$ processes become inoperative, therefore the ratio of $ones$ to the total number of operative counted in any process will be smaller than $\frac{6}{10}$. Thus, we arrived at the case when no process can execute line~\ref{line:if-1} in the epoch $\cE_{2}$ and the two consecutive epochs are good. Using the same argument as for the first case of this lemma, we get that with a constant probability all operative processes have value $b$ set to $0$ before the epoch $\cE_{3}$ ends.
    
    \item \textbf{Subcase 2:} Suppose that $x < \frac{8}{10}|\texttt{OP\_END}_{1}|$. By Lemma~\ref{lem:anti-concetration} there is a constant probability that at most $\frac{4}{10}|\texttt{OP\_END}_{1}| - 3\sqrt{n}$ processes set the value of $b$ to $0$. This is because at most $\frac{8}{10}|\texttt{OP\_END}_{1}|$ processes assign a random value and all the remaining ones assign $1$ to the variable $b$.  Therefore, at least
    $$|\texttt{OP\_END}_{1}| - \frac{4}{10}|\texttt{OP\_END}_{1}| + 3\sqrt{n} = \frac{6}{10}|\texttt{OP\_END}_{1}| + 3\sqrt{n}$$
    processes start the epoch $\cE_{2}$ with $b$ set to $1$. Again, in the epoch $\cE_{2}$ at most $\sqrt{n}$ processes can become inoperative. In such a case, Lemma~\ref{lem:operative-contribution} proves that operative processes at least $\frac{6}{10}|\texttt{OP\_END}_{1}| + 2\sqrt{n}$ counts of the value $1$ and thus all operative processes must execute line~\ref{line:if-1} in the epoch $\cE_{2}$. Therefore all operative processes assign value $1$ to their variable $b$.
\end{itemize}
\end{proof}

In the following lemma we show how the operative processes assure termination on the same value for non-faulty processes, even in the case of the unlikely event that some of the operative processes do not set the variable $\texttt{decided}$ to $true$.

\begin{lemma}\label{lem:deciding-1}
With probability $1$, all non-faulty processes decide and the decision is on the same value.
\end{lemma}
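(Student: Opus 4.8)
The plan is to reduce Lemma~\ref{lem:deciding-1} to two structural properties of the main loop and then finish with a case analysis of the terminal block (lines~\ref{line:good-spread}--\ref{line:fixing-und}). Throughout I write $\texttt{OP\_END}_{\epsilon}$ for the operative set at the end of epoch $\epsilon$ and $D_{\epsilon}$ for the number of processes that become inoperative during $\epsilon$; by Lemma~\ref{lem:good-proc-are-large} the inoperative set never exceeds $3t$, so $D_{\epsilon} < 3t < n/10$. Note also that, since $t<\frac{n}{30}$, the number of non-faulty operative processes is always at least $n-4t>0$.

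First I would establish the \textbf{safety} property of the $\texttt{decided}$ flag: if some process $p$ sets $\texttt{decided}_{p}\leftarrow true$ in epoch $\epsilon$ (line~\ref{line:if-1-r}), then every process in $\texttt{OP\_END}_{\epsilon}$ assigns the \emph{same} bit $v^{\star}$ to $b$ in that epoch. By Lemma~\ref{lem:operative-contribution}, for any $q,q'\in\texttt{OP\_END}_{\epsilon}$ we have $|\texttt{ones}_{q}-\texttt{ones}_{q'}|\le D_{\epsilon}$, $|\texttt{zeros}_{q}-\texttt{zeros}_{q'}|\le D_{\epsilon}$, and $\texttt{ones}_{q}+\texttt{zeros}_{q}\ge|\texttt{OP\_END}_{\epsilon}|\ge n-3t$. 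If $p$ passed the test $\texttt{ones}_{p}>\frac{27}{30}(\texttt{ones}_{p}+\texttt{zeros}_{p})$ (the symmetric case is analogous), then feeding the discrepancy bounds into the gap between the thresholds $\frac{27}{30}$ and $\frac{18}{30}$ and using $D_{\epsilon}<3t<n/30$ shows that every $q\in\texttt{OP\_END}_{\epsilon}$ still satisfies $\texttt{ones}_{q}>\frac{18}{30}(\texttt{ones}_{q}+\texttt{zeros}_{q})$, hence executes line~\ref{line:if-1} and sets $b_{q}\leftarrow 1$; so $v^{\star}=1$. This threshold bookkeeping — verifying that the constant slack $(\frac{27}{30}-\frac{18}{30})(n-3t)$ strictly dominates the perturbation $D_{\epsilon}$ — is the one genuinely delicate point, and is exactly where the hypothesis $t<\frac{n}{30}$ is spent.

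Next I would prove the \textbf{stickiness} property: if every process in $\texttt{OP\_END}_{\epsilon-1}$ holds the same value $v$ at the start of epoch $\epsilon$, then every process in $\texttt{OP\_END}_{\epsilon}$ again holds $b=v$ at the end of $\epsilon$ \emph{and} sets $\texttt{decided}\leftarrow true$. Indeed every process counted by \textsc{GroupBitsAggregation}/\textsc{GroupBitsSpreading} in epoch $\epsilon$ was operative at the start of $\epsilon$, i.e.\ lies in $\texttt{OP\_END}_{\epsilon-1}$ (operative status only decreases), and $b$ is not modified by the subroutines; hence all counted bits equal $v$, so each $q\in\texttt{OP\_END}_{\epsilon}$ obtains $\texttt{zeros}_{q}=0$ when $v=1$ (resp.\ $\texttt{ones}_{q}=0$ when $v=0$) while $\texttt{ones}_{q}+\texttt{zeros}_{q}\ge n-3t>0$, making the ratio exactly $1$ (resp.\ $0$) and triggering both line~\ref{line:if-1} (resp.~\ref{line:if-0}) and line~\ref{line:if-1-r}. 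Combining the two properties: let $\epsilon^{\star}$ be the first epoch in which any process sets $\texttt{decided}$, if one exists. By safety all of $\texttt{OP\_END}_{\epsilon^{\star}}$ hold a common value $v^{\star}$; by stickiness "unanimity at value $v^{\star}$ among operative processes" is then an invariant of every subsequent epoch, and from epoch $\epsilon^{\star}+1$ on every operative process also carries $\texttt{decided}=true$. A short argument then shows that \emph{every} process that ever has $\texttt{decided}=true$ holds $b=v^{\star}$ at the end of the loop: it set the flag while operative in some epoch $\ge\epsilon^{\star}$, at which point its bit was $v^{\star}$, and afterwards $b$ never changes (it is frozen if the process becomes inoperative, and kept at $v^{\star}$ by stickiness otherwise).

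Finally I would analyse the terminal block, in two cases, obtaining agreement and termination with probability $1$ in both. If \emph{no} process has $\texttt{decided}=true$, nobody broadcasts on line~\ref{line:good-spread}, so all operative processes enter line~\ref{line:spread-2} and run the deterministic protocol of~\cite{DolevS83}; since at most $t<n/4$ of the $\ge n-3t$ participants are faulty (and omissions are weaker than authenticated Byzantine), all non-faulty participants terminate with a common value $w$, broadcast it, and every remaining idle non-faulty process adopts $w$ on line~\ref{line:final-receiv-2}. If \emph{some} process has $\texttt{decided}=true$, then by the previous paragraph all operative processes hold $b=v^{\star}$ and every decided process holds $b=v^{\star}$; the operative decided processes broadcast $v^{\star}$ on line~\ref{line:good-spread}, every decided process decides $v^{\star}$ on line~\ref{line:if-decided}, any operative undecided process runs~\cite{DolevS83} on the common input $v^{\star}$ and hence terminates with output $v^{\star}$ by validity, and the remaining inoperative processes receive $v^{\star}$ (line~\ref{line:final-receiv-1} or~\ref{line:final-receiv-2}) and decide it; since there is always at least one non-faulty operative process, $v^{\star}$ is always successfully propagated, so every non-faulty process decides $v^{\star}$. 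In both cases all non-faulty processes decide and the decision is the same, which proves the lemma. The main obstacle is the arithmetic in the safety step; the terminal case analysis is routine but must be done carefully because of the several sub-cases of operative/inoperative and decided/undecided status (in particular that the deterministic fallback, though slow, always guarantees correctness regardless of the random bits).
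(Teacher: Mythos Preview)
Your proposal is correct and follows essentially the same approach as the paper's proof: establish threshold-safety (the paper uses a looser perturbation bound of $4t$ and carries out the explicit ratio computation you defer), invoke the stickiness observation (stated there in the proof of Theorem~\ref{thm:main-algo}), and finish with the same terminal-block case analysis, the paper organizing termination around the dichotomy $|D|>t$ versus $|U|>t$ rather than your ``a non-faulty operative process always exists.'' One small slip: in the safety step you write $3t<n/30$ where you mean $3t<n/10$; the arithmetic still goes through.
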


\begin{proof}
Let $O$ be the set of the processes that are operative after the last epoch of the algorithm \textsc{OptimalOmissionsConsensus}. We can further divide the set $O$ into two disjoint classes. The class $D \subseteq O$ of processes that have the variable $\texttt{decided}$ set to $true$ and the class $U \subseteq O$ consisting of these processes that have the variable $\texttt{decided}$ set to $false$. 

First, we prove that regardless of the actually partition of the operative processes into the classes $D$ and $U$ all non-faulty processes decide. Lemma~\ref{lem:good-proc-are-large} assures that $|O| \ge n -3t \ge \frac{9n}{10} \ge 2t$, thus either $D$ or $U$ have size greater than $t$. 

If it is the case that $|D| > t$, then there must be at least one non-faulty processes in $D$. The existence of this process assures that every non-faulty processes from the set $\cP \setminus O$ receives a message in the line~\ref{line:final-receiv-1} of the algorithm, and in consequence decides. The remaining non-faulty processes are all operative and they decide in line~\ref{line:if-decided} in the case they belong to the set $D$, or in line~\ref{line:spread-2} in the case they belong to $U$. The termination in the latter case follows from the correctness of the deterministic protocol from~\cite{DolevS83}, Theorem~4. 
For the case $|U| > t$ we have the following reasoning. The non-faulty operative processes can either belong to $D$ or $U$. Every non-faulty process belonging to $D$ decides in line~\ref{line:if-decided}. All non-faulty processes belonging to $U$ cannot decide in lines~\ref{line:good-spread}-\ref{line:if-decided} and they will execute line~\ref{line:spread-2}. From the termination of the deterministic protocol used in line~\ref{line:spread-2}, the non-faulty processes from $U$ will reach a decision and propagate the decision to all remaining processes that do not terminated yet. Since $|U| > t$, thus in the propagation takes part at least one non-faulty process which guarantees that every non-faulty processes will receive a decision in line~\ref{line:final-receiv-2}. This gives that eventually all non-faulty processes decide.    

Second, we show that all decision are on the same value.
To this end, we argue that if there exists a process $p$ that has the variable $\texttt{decided}_{p}$ set to $true$ when the last epoch ends then all operative processes have the value of the variable $b$ the same as the process $p$. 
Consider the epoch in which $p$, by executing line~\ref{line:if-1-r}, sets the variable $\texttt{decision}_{p}$ to $true$. It follows that the counts $\texttt{ones}_{p}$ and $\texttt{zeros}_{p}$ satisfy $\texttt{ones}_{p} > \frac{27}{30}(\texttt{ones}_{p} + \texttt{zeros}_{p})$ (or $\texttt{ones}_{p} < \frac{3}{30}(\texttt{ones}_{p} + \texttt{zeros}_{p})$, but since both cases are symmetric, we analyze only the first one). 
The number of operative process is always at least $n - 3t \ge \frac{9}{10}n$, by Lemma~\ref{lem:good-proc-are-large}. By Lemma~\ref{lem:operative-contribution}, it follows that the values of variables $\texttt{ones}$ and $\texttt{zeros}$ stored by different operative processes in the same epoch can differ by at most $4t$. Therefore, the ratio $\texttt{ones} / (\texttt{ones} + \texttt{zeros})$ calculated in any other operative process in this epoch is at least
$$\frac{\texttt{ones}_{p} - 4t}{\texttt{ones}_{p} + \texttt{zeros}_{p} + 4t}.$$
Since the process $p$ executes the line~\ref{line:if-1-r}, we get that the lower bound $\texttt{ones}_{p} \ge \frac{9}{10}(\texttt{ones}_{p} + \texttt{zeros}_{p})$. It follows that
$$\frac{\texttt{ones}_{p} - 4t}{\texttt{ones}_{p} + \texttt{zeros}_{p} + 4t} \ge \frac{\frac{9}{10}\left(\texttt{ones}_{p} + \texttt{zeros}_{p}\right) - 4t}{\texttt{ones}_{p} + \texttt{zeros}_{p} + 4t} \ge \frac{18}{30},$$
where the last inequality follows from a simple calculation, taking into account that $\texttt{ones}_{p} + \texttt{zeros}_{p} \ge \frac{9}{10}n$ and $t \le \frac{1}{30}n$. It yields, that the inequality in line \ref{line:if-1-r} holds for any other operative process in this epochs. Subsequently, this gives that any operative process sets the variable $b$ to $1$, by executing line~\ref{line:if-1}, if $p$ sets the variable $\texttt{decide}_{p}$ to $true$ in this epoch. 
By examining lines~\ref{line:good-spread}-\ref{line:fixing-und} we see that any process can decide only on a value that originates in a operative process. Since the consensus protocol used in line~\ref{line:spread-2} satisfies the validity condition, we have that its decision, if any, must also be equal to the decision of any operative process. This way there is only one value propagated as the decision in the system and every non-faulty process must decide on this value regardless of when it receives the value.

It remains to  a situation in which all operative processes have the variable $\texttt{decided}$ set to $false$. In this case, no process can decide before reaching line~\ref{line:spread-2}. The decision in any process is the outcome of the consensus protocol employed in line~\ref{line:spread-2} and by correctness of the employed algorithm, it must be the same across all process that decide.
\end{proof}

\noindent 
Finally, we give the main theorem.

\begin{theorem}
\label{thm:main-algo}
The algorithm \textsc{OptimalOmissionsConsensus} solves with probability~$1$ consensus against the adaptive adversary capable of controlling $t < \frac{n}{30}$ processes. With high probability the number of rounds used by the algorithm is $O\left(\frac{t}{\sqrt{n}} \log^{2} n\right)$ rounds and the number of communication bits is $O\left(n\left(t \log^{3}n + n\right)\right)$. With probability $1$ it uses $O\left(t\sqrt{n}\log^2{n}\right)$ random bits.
\end{theorem}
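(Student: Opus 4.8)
The plan is to obtain the four claims --- correctness, the round bound, the bit bound, and the randomness bound --- by assembling the lemmas already proved, with the only genuinely new ingredients being a short validity argument for the first epoch and an argument that \emph{whp} the deterministic fallback in line~\ref{line:spread-2} is never executed (which is what determines the round complexity). For correctness: termination and agreement with probability~$1$ are exactly Lemma~\ref{lem:deciding-1}, so I only need validity. Assume all non-faulty processes start with the same bit $b$. In the first epoch every process is initially operative, so every non-faulty process carries $b_p=b$ into \textsc{GroupBitsAggregation}; by Lemma~\ref{lem:operative-contribution} each operative process therefore obtains $\texttt{ones}_p+\texttt{zeros}_p \ge n-4t$ with the count of the value $1-b$ bounded by the number of faulty processes, i.e.\ at most $t$. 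Since $t<n/30$, this forces the ratio of $1-b$ below $3/30$, so every operative process executes line~\ref{line:if-1-r} (setting $\texttt{decided}_p=true$) and sets $b_p=b$ in line~\ref{line:if-1} or~\ref{line:if-0}. As $|\texttt{OP\_END}|\ge n-3t>t$, at least one such operative, decided process is non-faulty, so it broadcasts $b$ in line~\ref{line:good-spread} and every non-faulty process decides $b$ via line~\ref{line:final-receiv-1} or line~\ref{line:if-decided}; together with Lemma~\ref{lem:deciding-1} this gives consensus with probability~$1$.

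For the round complexity I would first prove \emph{persistence of agreement}: if at the end of some epoch every process in $\texttt{OP\_END}$ holds the same value $v$, then the same holds at the end of every later epoch and, moreover, all those processes have $\texttt{decided}=true$. This uses two facts: the operative status is monotone (once inoperative, always inoperative), so every process operative at the start of a later epoch still holds $v$ (a process changes $b_p$ only at the end of an epoch); and omission-faulty processes still execute the protocol faithfully, so every operative process counted during that epoch holds $v$, whence by Lemma~\ref{lem:operative-contribution} the count of $1-v$ at each operative process is $0$ and lines~\ref{line:if-1-r},~\ref{line:if-1}/\ref{line:if-0} fire. Next, call an epoch \emph{good} if at most $\sqrt n$ processes become inoperative during it; by Lemma~\ref{lem:good-proc-are-large} at most $3t$ processes ever become inoperative in a run, so at most $3t/\sqrt n$ epochs are not good, and splitting the $\Theta((t/\sqrt n)\log n)$ epochs of the main loop (lines~\ref{line:main-for-begin}--\ref{line:main-for-end}) into consecutive triples yields $\Omega(\log n)$ triples of three consecutive good epochs. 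Conditioned on any history, each such triple unifies the value of $\texttt{OP\_END}$ with probability $\Omega(1)$ by Lemma~\ref{lem:three-epochs}, and distinct triples consume independent random bits (line~\ref{line:if-random} uses one fresh bit per process per epoch), so the probability that no triple succeeds is $2^{-\Omega(\log n)}$, i.e.\ $n^{-c}$ for any constant $c$ after the usual linear rescaling. On the complementary (whp) event some triple succeeds, agreement persists, every operative process leaves the main loop with $\texttt{decided}=true$, and --- since $|\texttt{OP\_END}|>t$ guarantees a non-faulty operative process --- lines~\ref{line:good-spread}--\ref{line:if-decided} let every non-faulty process decide without anyone reaching line~\ref{line:spread-2}; the running time is then just that of the main loop, where each epoch runs \textsc{GroupBitsAggregation} and \textsc{GroupBitsSpreading}, each in $O(\log n)$ rounds, for $O((t/\sqrt n)\log^2 n)$ rounds total. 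On the remaining $O(1/n)$-probability event the protocol of~\cite{DolevS83} in line~\ref{line:spread-2} adds $O(t)$ rounds and still terminates correctly with probability~$1$, not affecting the whp bound.

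The bit bound holds with probability~$1$: the main loop runs a fixed number $O((t/\sqrt n)\log n)$ of epochs, within one epoch \textsc{GroupBitsAggregation} costs $O(n\log^2 n)$ bits per group by Lemma~\ref{lem:msg-aggr}, hence $O(n^{3/2}\log^2 n)$ over all $\lceil\sqrt n\rceil$ groups, \textsc{GroupBitsSpreading} costs $O(n^{3/2}\log^2 n)$ bits total (each of the $\le\sqrt n$ packets of $O(\log n)$ bits crosses each of the $O(\log n)$ edges of $G$ at most once, over $8\log n$ rounds), and lines~\ref{line:if-1}--\ref{line:if-1-r} send nothing; multiplying gives $O(nt\log^3 n)$, to which the dissemination in lines~\ref{line:good-spread}--\ref{line:if-decided} adds $O(n^2)$ and a possible run of the deterministic protocol adds $O(nt)$, both always counted, for $O(n(t\log^3 n+n))$ bits. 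For randomness, each process accesses its random source at most once, for one bit, per epoch (line~\ref{line:if-random}), so with probability~$1$ the total is $O(n\cdot(t/\sqrt n)\log n)=O(t\sqrt n\log n)\subseteq O(t\sqrt n\log^2 n)$.

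The hard part will be the round bound, specifically the persistence-of-agreement claim and the extraction of $\Omega(\log n)$ essentially-independent good triples: because the adversary is adaptive, the ``good'' epochs are not fixed in advance, so the Bernoulli trials must be revealed along the execution (a filtration/martingale argument conditioning on the history up to the start of each triple) rather than chosen beforehand, and the persistence argument must carefully use both monotonicity of the operative set and the fact that omission-faulty processes --- unlike Byzantine ones --- still run the code, so their candidate bits are legitimate inputs to the counting. Everything else is bookkeeping over Lemmas~\ref{lem:bits-agg-contr}--\ref{lem:deciding-1} and Theorem~\ref{thm:random-graph-properties}.
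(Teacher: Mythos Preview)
Your proposal is correct and follows essentially the same approach as the paper's proof: correctness via Lemma~\ref{lem:deciding-1} plus a validity check, the pigeonhole argument extracting $\Omega(\log n)$ disjoint good triples from the $\Theta((t/\sqrt n)\log n)$ epochs, Lemma~\ref{lem:three-epochs} applied to each triple with independence of the fresh random bits, persistence of the unified value once achieved, and the same per-epoch accounting for rounds, bits, and randomness. One small discrepancy worth noting: the theorem asserts the bit bound only \emph{whp}, and the paper's proof accounts the deterministic fallback in line~\ref{line:spread-2} as $O(n^{2}t)$ bits occurring with probability $O(1/n)$ rather than $O(nt)$ bits always, so your ``with probability~$1$'' claim for the bit bound is stronger than what is stated and hinges on the exact cost of the Dolev--Strong protocol (the paper itself is inconsistent on this figure between the overview and the formal proof).
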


\begin{proof}
Let us first argue about the correctness of the algorithm.
The termination and agreement conditions the consequence of Lemma~\ref{lem:deciding-1}. The validity condition follows from the following observation. If all processes start an epoch with the same value of the variable $b$, either $0$ or $1$, then with probability $1$ all operative processes will have the same value of the variable $b$ (consistent with the initial bit) in all subsequent epochs. This is because in the algorithms~\textsc{GroupBitsAggregation} and $\textsc{GroupBitsSpreading}$ processes count only this values that where present in a process at the beginning of the epoch. Consequently, in such a case no process ever accesses a random source in line~\ref{line:if-random} of the algorithm \mbox{\textsc{OptimalOmissionsConsensus}}, as the ratio of $1$'s of operative processes to all values is always either $0$ or $1$. Since the decision of any non-faulty algorithm is always derived from a value of the variable $b$ of an operative process, c.f. lines~\ref{line:good-spread}-\ref{line:fixing-und}, thus the validity follows. 

We now give the analysis of the number of rounds, the number of bits and the number of random bits used by the algorithm. 
By Lemma~\ref{lem:good-proc-are-large}, there are at most $4t$ processes that might become inoperative during the run of the algorithm. On the other hand, any epoch that is not good has more than $\sqrt{n}$ processes that become inoperative in this epoch. Since the algorithm executes $\frac{t}{\sqrt{n}}\log n$ epochs, thus by the pigeonhole principle there is at least $\Omega(\log{n})$ disjoint sequences of three consecutive \textit{good} epochs. 

By Lemma~\ref{lem:three-epochs}, after each such triple, there is a $\Omega(1)$ probability that all operative processes have the same value of the variable $b$. Since, as argued above in this proof, if the operative processes start an epoch with the same value of the variable $b$ they start with same value of the variable $b$ in the following epoch, thus the probability that after all such good sequences of epochs, the operative processes do not have the same value in variable $b$ is at most $O(1)^{\log{n}} = \frac{1}{n^{C}}$, for some absolute constant $C > 0$. Equivalently, it holds that with probability $1 - \frac{1}{n^{C}}$ the operative processes have the same value in the variable $b$ in the end of the last epoch. Moreover, with this probability, all operative processes have also the variable $\texttt{decision}$ set to $true$. This holds because if a triple of good epochs ends strictly before the last epoch, then all values of the variable $b$ stored by the operative processes are the same, and in the next epoch all operative processes must set the variable $\texttt{decided}$ to $true$ when executing line~\ref{line:if-1-r}.  

Assuming that the operative processes store $true$ in their variable $\texttt{decided}$, then all operative processes execute line~\ref{line:good-spread} of the algorithm.
Since there is at least $n - 3t$ operative processes and at most $t  < \frac{1}{30}n$ faulty processes, in the next round each non-faulty process receives the value of at least one operative process and, in result, adopts the value as its decision. 
It follows that in line~\ref{line:if-decided} all non-faulty processes decide. The operative processes because they all have the variable $\texttt{decided}$ set to $true$; the non-faulty processes that are not operative decide because they received a message from an operative one. Since 
each epoch takes $O(\log{n})$ rounds and there is $\frac{t}{\sqrt{n}} \cdot \log{n}$ epochs, we have that all non-faulty processes terminate in $O\left(\frac{t}{\sqrt{n}} \log^2{n}\right)$ rounds whp. 

To upper bound the communication bit complexity, we first analyze how many bits are sent per epoch.  The algorithm \textsc{GroupBitsAggregation} uses $O(n\log^2{n})$ bits per group, as it is given in Lemma~\ref{lem:msg-aggr}. Since the algorithm is executed in parallel on $\ceil{\sqrt{n}}$ groups, this results in $O(n^{3/2} \log^{2}{n})$ bits per call to this algorithm by the main one.
The algorithm \textsc{GroupBitsSpreading} takes $O(\log{n})$ rounds of communication on a graph with the maximum degree $O(\log{n})$, according to Theorem~\ref{thm:random-graph-properties}. Per each link of the communication graph, processes send at most $O(\sqrt{n}\log{n})$ bits, amortized -- in a round, each process sends only the information that has not been passed yet and the total amount of information corresponding to the size of the array $\texttt{BitPacks}$, which contains $O(\sqrt{n}\log{n})$ bits. 
Since there are $O(\frac{t}{\sqrt{n}}\log{n})$ epochs, the communication bits' complexity incurred by processes during all epochs is $O(t \cdot n\log^{3}{n})$. The consensus protocol from~\cite{DolevS83} (Theorem 4.) incurs at most $O(n^2t)$ additional bits in its communication, but this can happen with probability at most $\frac{1}{n^{C}}$ for some absolute constant $C$. On the other hand, the number of bits used for informing the inoperative processes is $O(n^2)$ with probability $1$, as each operative process broadcasts its value at most once. Thus, the bound on the communication bit complexity follows.

For the upper bound on number of random bits, we observe that the use of those is predetermined by the number of epochs. Each operative process uses at most one random bit in an epoch, thus the total number of random bits used by all processes is bounded by $O\left(t\sqrt{n}\log^2{n}\right)$.
\end{proof}

\subsection{Comparison to the state-of-the art strategies for solving consensus in the model of crash failures} 
When the power of the adversary is limited to crashing processes, meaning that a process becomes completely disconnected from other processes at some point of the execution (which is a decision of the adversary), there exist optimal, or almost-optimal, algorithms with respect to time and bit complexity. 

As for the time efficient strategies, most of them rely on the time-optimal algorithm proposed by Bar-Joseph and Ben-Or~\cite{Bar-JosephB98}, who employed the idea of the random coin in the case where there is no clear majority of either preferred value: $0$ or $1$, among the processes. 
The use of the randomness in the case of crashes shares the same intuition with the approach of omissions. Because the standard bounds on the deviation from the mean on many i.i.d. random variables guarantee that, with constant probability, either the number of $0$'s exceeds the numbers of $1$'s by $\Theta(\sqrt{n})$ or vice-versa. 
In any case, the adversary is expected to crash $\Omega(\sqrt{n})$ processes to 
maintain decision uncertainty preserve the status quo in the system. 
Nevertheless, the important difference in the case of crash failures is that a crashed process stops to communicate with other processes starting from the round when the crash has occurred or the next round -- hence, every correct process can assume the same default value of the failed process with at most one round delay. 
In consequence, the adversary cannot prolong the strategy of crashes mitigating the deviations from the mean for more than $O(\sqrt{n})$ rounds as it would require controlling to many processes. 
This crucial property does not hold when omissions are allowed. A process, controlled by the adversary, may deliberately avoid communication with some chosen process (which would force the chosen process to assume a default input value of the faulty one), while at the same time -- informing some set of other processes about its actual value, which may be different from the default one. Even more maliciously, a faulty process may change the set of processes it communicates with from round to round which allows the adversary for even more flexibility. This motivates the partition of processes into classes that are downgrade monotonic rather than of unpredictable behavior. A contribution of our paper is to design such a partition, into operative / inoperative processes, and show that the random coin idea can be efficiently implemented on this partition.

The state-of-the-art bit efficient strategy was proposed by 
Hajiaghayi et al.~\cite{DBLP:conf/stoc/HajiaghayiKO22}. They designed a randomized consensus algorithm against crash failures, which uses $O\left(n^{3/2} \cdot \polylog(n)\right)$ bits whp and terminates in the almost-optimal time $O\left(t / \sqrt{n} \cdot \polylog(n)\right)$ (but has no provably almost-optimal communication bit complexity),
by exploiting certain ``locally compact'' properties of expander graphs of gradually growing degree, 
used for scheduling communication between processes. 
Unfortunately, the approach in~\cite{DBLP:conf/stoc/HajiaghayiKO22}
is not efficient against the omission failures controlled by the adaptive, full-power adversary. Similarly to~\cite{Bar-JosephB98}, \cite{DBLP:conf/stoc/HajiaghayiKO22} also relies on the fact that processes permanently stop after crashes, which allows them to amortize time or communication to the number of fail-stops, e.g., by doubling the number of contacted processes each time when too few responses are received.
This no longer works against omissions, because the adversary can control incoming/outgoing messages of the process that implements such doubling strategy, and enforce that the process inquires $\Theta(n)$ other processes before the adversary allows it to receive any messages. This way even a single omission-faulty process may contribute linearly to the communication complexity, while for crash failures such contribution could be amortized to a polylogarithm per crashed process. 

\section{A Lower Bound}
\label{sec:proof-lower}

We give a new lower bound, announced in Theorem~\ref{thm:lower-randomness-res}, showing that any consensus algorithm achieving $T$ round complexity with high probability, has to make at least $\Omega\left(\frac{t^{2}}{\log{n}}\right)$ calls to a random source, with probability at least $1-\frac{1}{\log{n}}$. The main probabilistic tool used in the analysis of the lower bound strategies, introduced later in this section, is an abstract one-round coin-flipping~game.

\noindent \textbf{The coin-flipping game, revisited.}
The sides of the coin-flipping game are: $k$ players, an adversary and a function $f$ that decides the outcome of the game
and is known to both the adversary and players. The game has the following organization. First, the players propose input values. The value of a player $p$ is drawn from an arbitrary distribution $X_p$, independently from other players.
Next, the adversary looks at the drawn values and has the power to \textit{hide} some subset of them (which we will also refer to as ``failing players'' or ``taking over players''). The hidden values are denoted by $\perp$. Finally, the outcome of the game is a binary value determined by the evaluation of the function 
$f : \{X_{1} \cup \perp\} \times \ldots \times  \{X_{k} \cup \perp\} \rightarrow \{0, 1\} $ in the point that corresponds to actions of the players and the adversary. We say that a sequence of players' values $y = (y_{1}, \ldots, y_{k}) \in X_{1} \times \ldots \times  X_{k}$ can be {\em biased towards a value $v \in \{0, 1\}$} if the adversary can change some players' values to $\perp$ obtaining a sequence $y' \in\{X_{1} \cup \perp\} \times \ldots \times  \{X_{k} \cup \perp\} $ such that $f(y') = v$. 

\noindent \textbf{Application of the coin-flipping game to the lower bound proof.}
The usefulness of this game for proving our lower bound is best visible if the processes are seen as players, the values $X_{p}$ are possible state transitions of $k$ (out of $n$) processes that do random calls, the ``hiding'' action of the adversary is to omit all links of the process in the round corresponding to the game,
and the binary outcome of the function is a predetermined, by the adversary, classification of executions (i.e., executions that are more likely to output $1$ vs those that are more likely to output $0$).

\noindent \textbf{The coin-flipping game -- technical part.}
In~\cite{Bar-JosephB98}, the authors show that with probability at least $1 - \frac{1}{n}$, an arbitrary game of $n$ players can be biased towards a particular outcome if the adversary can hide / fail $\Omega(\sqrt{n\log{n}})$ players. Below, by applying 
parameterized (by probability $\alpha$) Talagrand's concentration inequality, we generalize this result and show that even if only $k = o(n)$ players use randomness, the probability of biasing the game can be exponentially (in $k$) close to $1$. In contrast, in~\cite{Bar-JosephB98} this relation is linear and flat in the number of all players. Formally, in our proof we will rely on the Alon's and Spencer's formulation of Talagrand's inequality for convex bodies that originally appeared in~\cite{Talagrand95}, adjusted to the notation used in our paper.

\begin{theorem}[Theorem 7.6.1 in~\cite{alon2016probabilistic}]\label{thm:talagrand}
Let $\Omega$ be a probability space and let
\[
\Omega^{k} = \Omega \times \Omega \times \cdots \times \Omega
\]
be a product probability space. 
Then, for any $U \subseteq \Omega^{k}$ and for any $t \ge 0$, it holds that
\[
\Pr[U] \cdot \Pr\left[{U^c_t}\right] \le e^{-t^2/4}
\ ,
\]
where ${U^c_t}$ is the complement of $U_{t}$
defined as follows:
\[
U_t = \{ x \in \Omega^{k} ~:~ \rho(U,x) \le t \}
\]
and where $\rho(A, x)$ is the Talagrand's convex distance defined as
\[
\rho(U,x) =  \max_{h, \|h\|_2 \le 1} \rho_{h}(U, x) \ , \ \ \mbox{ where } \ \ \rho_{h}(U, x) = \min_{y \in U} \ \sum_{i~:~x_i \neq y_i} h_i
\]
for $h = (h_1,\ldots,h_k) \in \mathbb{R}^k$ and $x,y \in \Omega^{k}$.
\end{theorem}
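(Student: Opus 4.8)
This is Talagrand's convex-distance inequality for product probability spaces, and the plan is to follow the standard exponential-moment route. The first step is to re-express the convex distance geometrically: by the minimax theorem, writing $s(y,x)\in\{0,1\}^k$ for the indicator vector of the coordinates on which $x$ and $y$ differ,
\[
\rho(U,x)=\max_{\|h\|_2\le 1}\ \min_{y\in U}\ \langle h, s(y,x)\rangle \;=\; \min_{z\in\mathrm{conv}\{s(y,x)\,:\,y\in U\}}\ \|z\|_2 ,
\]
i.e.\ $\rho(U,x)$ is the Euclidean distance from the origin to the convex hull of the disagreement vectors. This reformulation is what makes the problem amenable to induction on coordinates, since convex hulls interact cleanly with products.

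The heart of the proof is the inequality
\[
\int_{\Omega^k} e^{\rho(U,x)^2/4}\,d\mu(x)\ \le\ \frac{1}{\Pr[U]},
\]
proved by induction on $k$. The base case $k=1$ is immediate, since $\rho(U,\cdot)$ equals $0$ on $U$ and $1$ off it. For the inductive step I would split $\Omega^{k+1}=\Omega^k\times\Omega$, fix the last coordinate $\omega$, and control $\rho(U,(y,\omega))$ using the section $U_\omega=\{y:(y,\omega)\in U\}$ and the projection $\widehat U=\{y:\exists\,\omega',\,(y,\omega')\in U\}$. Combining, in the first $k$ coordinates, a near-optimal convex combination of the disagreement vectors of $U_\omega$ with one of $\widehat U$ and paying $(1-\lambda)^2$ for the last coordinate, convexity of $\|\cdot\|_2^2$ gives, for every $\lambda\in[0,1]$,
\[
\rho(U,(y,\omega))^2\ \le\ \lambda\,\rho(U_\omega,y)^2+(1-\lambda)\,\rho(\widehat U,y)^2+(1-\lambda)^2 .
\]
Exponentiating, applying Hölder's inequality with conjugate exponents $1/\lambda$ and $1/(1-\lambda)$, and invoking the inductive hypothesis on the two lower-dimensional sets bounds the left-hand integral by $e^{(1-\lambda)^2/4}\,\Pr[U_\omega]^{-\lambda}\,\Pr[\widehat U]^{-(1-\lambda)}$. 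Optimizing over $\lambda$ via the elementary bound $\min_{\lambda\in[0,1]} r^{-\lambda}e^{(1-\lambda)^2/4}\le 2-r$ (with $r=\Pr[U_\omega]/\Pr[\widehat U]\le 1$), integrating over $\omega$ using $\int \Pr[U_\omega]\,d\nu(\omega)=\Pr[U]$, and using $s(2-s)\le 1$ collapses the whole expression to $1/\Pr[U]$, closing the induction.

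The final step is purely formal: Markov's inequality applied to the nonnegative variable $e^{\rho(U,X)^2/4}$ gives
\[
\Pr[\rho(U,X)>t]\ \le\ e^{-t^2/4}\,\E\!\left[e^{\rho(U,X)^2/4}\right]\ \le\ \frac{e^{-t^2/4}}{\Pr[U]},
\]
and since $U_t^c=\{x:\rho(U,x)>t\}$, multiplying through by $\Pr[U]$ yields $\Pr[U]\cdot\Pr[U_t^c]\le e^{-t^2/4}$. The step I expect to be the most delicate is the inductive inequality relating $\rho(U,(y,\omega))$ to the section and the projection, together with the accompanying one-variable optimization over $\lambda$; everything after that is bookkeeping with Hölder and Markov. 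In the paper itself this theorem is invoked as a black box (Alon--Spencer, Theorem~7.6.1), so the above is only how one would reconstruct it if a self-contained argument were wanted.
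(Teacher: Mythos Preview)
Your proposal is correct and, as you yourself note, the paper does not prove this statement at all: it is quoted verbatim as Theorem~7.6.1 of Alon--Spencer and used as a black box in the proof of Lemma~\ref{cor:coin-game}. The sketch you give is precisely the standard induction-on-coordinates argument from that reference (minimax reformulation of $\rho$, the section/projection bound, H\"older, the calculus lemma $\min_\lambda r^{-\lambda}e^{(1-\lambda)^2/4}\le 2-r$, and Markov), so there is nothing to compare against.
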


\begin{lemma}
\label{cor:coin-game}
For any $\alpha \le \frac{1}{2}$, one can bias the single-round coin-flipping game towards one particular outcome $v \in \{ 0, 1\}$, with probability greater than $1-\alpha$, by hiding at most $8\sqrt{k\log(\alpha^{-1})}$ players' values. 
\end{lemma}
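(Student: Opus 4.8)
The plan is to apply Talagrand's inequality (Theorem~\ref{thm:talagrand}) to the more likely of the two outcome-preimages of $f$, and then to convert concentration in the convex distance into a bound on the number of players that must be hidden. First, when no player is hidden $f$ splits the product space as $\Omega^{k}=f^{-1}(0)\sqcup f^{-1}(1)$, so one of the two preimages has probability at least $\tfrac12$; let $v\in\{0,1\}$ be the corresponding outcome and set $U=f^{-1}(v)\subseteq\Omega^{k}$, so $\Pr[U]\ge\tfrac12$. This is the value the adversary will bias towards.

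Next, invoke Theorem~\ref{thm:talagrand} with this $U$ and the parameter $t:=3\sqrt{\log(\alpha^{-1})}$. Since $\Pr[U]\ge\tfrac12$, this gives $\Pr[U^{c}_{t}]\le e^{-t^{2}/4}/\Pr[U]\le 2e^{-t^{2}/4}=2\alpha^{9/4}<\alpha$, the last inequality holding for every $\alpha\le\tfrac12$ (as then $\alpha^{5/4}<\tfrac12$); hence $\Pr[U_{t}]>1-\alpha$. It remains to argue that every $x\in U_{t}$ can be biased towards $v$ by hiding at most $8\sqrt{k\log(\alpha^{-1})}$ players. Writing $D(x,y)=\{i:x_{i}\neq y_{i}\}$ and $\mathbf{1}_{D(x,y)}\in\{0,1\}^{k}$ for its indicator, the definition of the convex distance together with standard minimax duality yields $\rho(U,x)=\min_{\lambda}\bigl\|\mathbb{E}_{y\sim\lambda}\mathbf{1}_{D(x,y)}\bigr\|_{2}$, the minimum ranging over probability distributions $\lambda$ on $U$. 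So for $x\in U_{t}$ there is a distribution $\lambda$ with $\bigl\|\mathbb{E}_{y\sim\lambda}\mathbf{1}_{D(x,y)}\bigr\|_{2}\le t$, and then
\[
\mathbb{E}_{y\sim\lambda}\,|D(x,y)|=\bigl\|\mathbb{E}_{y\sim\lambda}\mathbf{1}_{D(x,y)}\bigr\|_{1}\le\sqrt{k}\,\bigl\|\mathbb{E}_{y\sim\lambda}\mathbf{1}_{D(x,y)}\bigr\|_{2}\le t\sqrt{k},
\]
so some single $y^{\star}\in U$ satisfies $|D(x,y^{\star})|\le t\sqrt{k}=3\sqrt{k\log(\alpha^{-1})}<8\sqrt{k\log(\alpha^{-1})}$.

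Finally, the adversary hides exactly the players in $D(x,y^{\star})$. After this, the revealed coordinates of $x$ agree with those of $y^{\star}$ and the hidden coordinates carry $\perp$; since $y^{\star}\in f^{-1}(v)$, the mechanics of the coin-flipping game make the outcome equal to $v$. Combining, with probability at least $\Pr[U_{t}]>1-\alpha$ the drawn configuration can be biased towards $v$ using at most $8\sqrt{k\log(\alpha^{-1})}$ hidden values, as claimed.

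\textbf{Main obstacle.} The delicate point is the last paragraph: showing that Talagrand-closeness of $x$ to $f^{-1}(v)$ genuinely translates into biasability of $x$ towards $v$ by hiding a comparably small set of players. One must first extract a single nearby configuration $y^{\star}$ from the fractional certificate produced by the convex distance (done above via $\|\cdot\|_{1}\le\sqrt{k}\,\|\cdot\|_{2}$), and then verify, against the precise semantics of the game — how $f$ treats the symbol $\perp$ on the hidden coordinates — that hiding the disagreement set $D(x,y^{\star})$ does force the outcome to $v$. Everything else (choosing the heavier preimage, plugging into Theorem~\ref{thm:talagrand}, and the elementary arithmetic that fixes $t$) is routine; in particular the gap between the $3$ coming from $t\sqrt{k}$ and the stated $8$ comfortably absorbs any change of logarithm base and any small loss incurred in this translation step.
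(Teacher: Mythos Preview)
Your argument has a genuine gap at precisely the point you flag as the ``main obstacle,'' and it cannot be repaired within your approach. You take $U=f^{-1}(v)$, extract some $y^\star\in U$ Hamming-close to the drawn configuration $x$, and then assert that hiding the coordinates in $D(x,y^\star)$ forces the outcome to be $v$. But after hiding, the adversary evaluates $f$ at the vector $x'$ with $x'_i=\perp$ for $i\in D(x,y^\star)$ and $x'_i=x_i=y^\star_i$ elsewhere. The game specifies $f$ as an \emph{arbitrary} function on $\{X_1\cup\perp\}\times\cdots\times\{X_k\cup\perp\}$; knowing $f(y^\star)=v$ (an evaluation with no $\perp$'s) tells you nothing whatsoever about $f(x')$. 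There is no monotonicity assumption, no default-value semantics, and no slack in the constants that could recover this step.

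The paper's proof sidesteps exactly this issue by choosing a different set to which Talagrand's inequality is applied. It defines $U^0$ as the set of sequences that \emph{cannot} be biased towards $0$ by hiding at most $8\sqrt{k\log(\alpha^{-1})}$ coordinates --- a property that already encodes how $f$ behaves on $\perp$-inputs. If $\Pr(U^0)\le\alpha$ we are done with $v=0$. Otherwise, for any $y$ within Hamming distance $8\sqrt{k\log(\alpha^{-1})}$ of some $b\in U^0$, hiding the coordinates $D(y,b)$ produces a vector $y'$ that is \emph{identical} to the vector $b'$ obtained from $b$ by hiding the same coordinates; since $b\in U^0$ forces $f(b')\ne 0$, we get $f(y')=f(b')=1$, so $y$ is biasable towards~$1$. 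Talagrand then bounds the measure of the complement of this Hamming ball by $\alpha^{15}<\alpha$. The essential trick is that $U^0$ is defined through the biasing operation itself, so that Hamming-closeness to $U^0$ automatically transfers the relevant $\perp$-behaviour --- which your choice $U=f^{-1}(v)$ cannot do.
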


\begin{proof}
First observe that if $k=0$ then there is only one outcome of the game which happens with probability $1$. Therefore, the lemma is trivially proven. In the remainder, we assume that $k \ge 1$.

Consider the set $U^{0} \subseteq  X_{1} \times \ldots \times X_{k}$ which corresponds to those sequences that {\em cannot} be biased towards the outcome $0$ by replacing at most $8\sqrt{k\log(\alpha^{-1})}$ values with the default value $\perp$. 
If $\Prob(U^{0}) \le \alpha$ then the lemma is proven with $v = 0$. 

Assume that $\Prob(U^{0}) > \alpha$. Let $B(U^{0}, t)$ be the set of the sequences from $X_{1} \times \ldots \times X_{n}$ that differ with an element of $U^{0}$ on at most $t$ values. We apply Talagrand's concentration inequality (in the version of Theorem~\ref{thm:talagrand}) to prove that $\Prob\left(B\left(U^{0}, 8\sqrt{k\log(\alpha^{-1})}\right)\right) \ge 1-\alpha^{15}$. 
Observe that for vector $h$ such that $h_i=\frac{1}{\sqrt{k}}$, we get that $\|h\|=1$ and thus
\begin{equation}
\label{eq:Tal}
B\left(U^{0}, 8\sqrt{k\log(\alpha^{-1})}\right)
=
\{x:\rho_h(U^{0},x)\le 8\sqrt{\log(\alpha^{-1})}\}
\supseteq
\{ x: \rho(U^{0},x)\le 8\sqrt{\log(\alpha^{-1})}\}
\ ,
\end{equation}
where the first equality follows from the fact $h_{i} = \frac{1}{\sqrt{k}}$, in which case $\rho_{h}(U, x)$ is just a weighted Hamming distance, and the latter inclusion from the fact that for any $x$ it holds $\rho_{h}(U^{0},x) \le \rho(U^{0},x)$.
From Theorem~\ref{thm:talagrand} we get that
\[
\Prob\left(U^{0} \right) \cdot \Prob\left(\{ x: \rho(U^{0},x)\le 8\sqrt{\log(\alpha^{-1})}\}^c\right) 
\le
\exp\left(-(8\sqrt{\log(\alpha^{-1})})^2/4\right)
=
\alpha^{16}
\ .
\]
Since we assumed $\Prob\left(U^{0} \right) > \alpha$, then we can rewrite
\[
\Prob\left(\{ x: \rho(U^{0},x)\le 8\sqrt{\log(\alpha^{-1})}\}^c\right) 
\le \alpha^{16} \cdot \Prob\left(U^{0} \right)^{-1}
\le \alpha^{15}
\ ,
\]
which together with Property~(\ref{eq:Tal}) implies
\[
\Prob\left(B\left(U^{0}, 8\sqrt{k\log(\alpha^{-1})}\right)\right)
\ge
\Prob\left(\{ x: \rho(U^{0},x)\le 8\sqrt{\log(\alpha^{-1})}\}^c\right)
\ge
1-
\alpha^{15}
\ .
\]
To finalize the proof of the lemma, we observe that if a sequence $y$ belongs to $B\left(U^{0}, 8\sqrt{k\log(\alpha^{-1})}\right)$, it differs with an element $b \in U^{0}$ on at most $8\sqrt{k\log(\alpha^{-1})}$ positions. Let $b'$ be derived from $b$ by replacing those positions with $\perp$.  Since $b \in U^{0}$, therefore $f(b') = 1$ which in turn gives that $f(y') = 1$ where $y'$ is derived from $y$ by replacing the same positions with $\perp$. Therefore, $y \notin U^{1}$ as it can be biased towards $1$ by changing at most $8\sqrt{k\log(\alpha^{-1})}$ positions. 
We therefore have that 
$\Prob(U^{1}) \le 1 - \Prob\left(B\left(U^{0}, 8\sqrt{k\log(\alpha^{-1})}\right)\right) \le \alpha^{15}$, where the last inequality follows from the above use of Talagrand's inequality. Since $\alpha^{15} < \alpha$ for $\alpha < \frac{1}{2}$, thus the lemma is proven. 
\end{proof}

\noindent Consider now a set of $n$ processes and how their internal changes of states affect the state of the entire execution. We use Lemma~\ref{cor:coin-game}, with $\alpha=n^{-3}$, to limit the actions of an adversary when only a subset of processes choose to evoke randomness when changing their states between two communication rounds. In the context of 
an execution of a consensus algorithm, we have~the~following.

\begin{corollary}
\label{cor:gen-biasing}
Consider a single-round coin-flipping game on a set of $n$ processes from which only $0 \le k \le n$ rely on random choices when changing their internal state while all others use deterministic transitions. Then, by failing at most $8\sqrt{k\log^{3}{n}}$ processes, an adversary can bias the outcome of the game towards $0$ or $1$ \footnote{Note that the values $0$ and $1$ here are not necessarily tied to a consensus decision, but rather mark two different states of the whole system of processes.}
with probability at least $1 - \frac{1}{n^{3}}$. 
\end{corollary}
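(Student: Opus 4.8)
The plan is to obtain this statement as an essentially immediate instantiation of Lemma~\ref{cor:coin-game}. The first step is to eliminate the $n-k$ processes that use deterministic transitions. Each such process contributes, with probability $1$, a single fixed value to the point at which $f$ is evaluated; substituting those fixed values into $f$ yields a function $g$ of only $k$ arguments, which is exactly the outcome function of a coin-flipping game played by the $k$ processes that draw their next state from a (possibly nontrivial) distribution $X_p$. Moreover, an adversary biasing $f$ never needs to hide a deterministic process: hiding it can only replace its fixed value by $\perp$, a value $g$ already accounts for, so without loss of generality the adversary fails only processes among the $k$ randomized ones. Thus biasing the original $n$-process game reduces to biasing the $k$-player game for $g$.

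The second step is to apply Lemma~\ref{cor:coin-game} to that $k$-player game with the parameter choice $\alpha = n^{-3}$. This is admissible because $n^{-3} \le 1/2$ for every $n \ge 2$ (and the case $k=0$ is degenerate: $g$ is then constant, so the single outcome is trivially biased by failing zero processes, matching the bound). The lemma then supplies a value $v \in \{0,1\}$ and a set of at most $8\sqrt{k\log(\alpha^{-1})} = 8\sqrt{3k\log n}$ players whose values can be hidden so that $g$ is forced to $v$ with probability exceeding $1 - \alpha = 1 - n^{-3}$. Translating back to processes, the adversary fails at most $8\sqrt{3k\log n}$ of the $k$ randomized processes and biases the system's outcome towards $v$ with probability at least $1 - n^{-3}$.

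The remaining step is only cosmetic: since $3\log n \le \log^{3}{n}$ for all $n$ large enough (the finitely many small cases being subsumed by the constants implicit in the surrounding asymptotic statements), the number of failed processes is bounded by $8\sqrt{k\log^{3}{n}}$ as claimed, and ``with probability exceeding $1-n^{-3}$'' gives the stated ``at least $1-n^{-3}$''. I do not foresee any genuine obstacle here; the only points requiring minimal care are making the reduction to $k$ players precise and keeping the logarithmic factor consistent with the (slightly loose) exponent written in the corollary. The real content of the argument lies entirely in Lemma~\ref{cor:coin-game}, whose proof via Talagrand's inequality is already given above.
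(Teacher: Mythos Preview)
Your proposal is correct and follows exactly the approach the paper intends: the corollary is stated immediately after Lemma~\ref{cor:coin-game} with the remark ``We use Lemma~\ref{cor:coin-game}, with $\alpha=n^{-3}$,'' and no separate proof is given. Your explicit reduction to a $k$-player game and your observation that $8\sqrt{3k\log n}\le 8\sqrt{k\log^{3}n}$ (covering the slightly loose exponent in the statement) simply spell out what the paper leaves implicit; the one imprecise sentence about why the adversary ``never needs'' to hide a deterministic process is unnecessary, since exhibiting the strategy coming from Lemma~\ref{cor:coin-game} on $g$ already suffices.
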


\noindent \textbf{Description of adversarial strategies.}
We say that a consensus algorithm is {\em $p$-strongly-correct}, for $p\in (0,1]$, if it satisfies all three conditions together (i.e., agreement, validity and termination) with probability at least $p$ against any adaptive adversary.\footnote{%
The adjective ``strongly'' is to distinguish from a weaker version in which each property must hold {\em separately} with probability at least $p$.} 
In the following, we restrict ourselves only to $1 - \frac{1}{n^{3/2}}$-strongly-correct algorithms. Recall also that 
each algorithm is
structured into rounds that are further split into two phases: a local computation phase (in which calls to a random source are included) and a communication phase. Without loss of generality, we can assume that a local computation phase begins with making a decision if a process is ready to decide. 
We keep track of an execution of a given
algorithm only until the first process decides. 
Consider an execution $\cE$ of an algorithm. 
For any round $i$ of the execution, we introduce the following notation -- let 
\begin{itemize}
    \item $\cH_i$ be the algorithm {\em history} of the whole execution, taken at the beginning of the local computation phase of round $i$ (aka the {\em state} of the algorithm),  
    \item
    $r_{i}$ be the number of processes that, based on their local subset of $\cH_{i}$, decide to use random bits in the local computation phase of the current round,
    \item $\cA_i$ be an adversarial strategy in round $i$ and the subsequent rounds, 
    under the history $\cH_i$ (we will drop sub-index $i$ if it is clear from the context),
    \item $\Prob(\cH_i,\cA)$ be the probability of reaching consensus on value $1$ when continuing the run of the algorithm with history $\cH_i$ under adversarial strategy 
    $\cA$.
\end{itemize}

\noindent We restrict the adversary to strategies in which it fails at most 
$16\sqrt{r_{i}\log{n}}+1=\Theta(\sqrt{r_i\log n}+1)$
processes in a round~$i$. 
Next, we introduce the classification of states of an execution, based on their potential valency -- the concept introduced in \cite{FischerLP85} for deterministic executions and in~\cite{Bar-JosephB98} for randomized ones.
The main difference in our classification, compared to~\cite{Bar-JosephB98}, is that we use the stronger Corollary~\ref{cor:gen-biasing} in the analysis and therefore we could refine the definition of bi-valent states to exclude scenarios that have too low probability of finishing either with decision value $1$ or $0$. This refinement, see below, is necessary to be able to analyze the amortized number of accesses to the random source.

\paragraph{Types of states (based on valency).}
We say that a state 
$\cH_i$ is: 

\begin{itemize}
    \item 
{\em null-valent} if for all adversarial strategies $\cA$ we have $\frac{1}{n\log n} - \frac{i}{n^{2}} \le \Prob(\cH_i,\cA)\le 1-\frac{1}{n\log n} + \frac{i}{n^{2}}$,
\item
{\em $1$-valent} if there is an adversarial strategy $\cA$ such that $\Prob(\cH_i,\cA)> 1-\frac{1}{n\log n} + \frac{i}{n^{2}}$ and for every other adversarial strategy $\cA'$: $\Prob(\cH_i,\cA') \ge \frac{1}{n\log n} - \frac{i}{n^{2}}$,
\item
{\em $0$-valent} if there is an adversarial strategy $\cA$ such that $\Prob(\cH_i,\cA) < \frac{1}{n\log n} - \frac{i}{n^2}$ and for every other adversarial strategy $\cA'$: $\Prob(\cH_i,\cA') \le 1- \frac{1}{n\log n} + \frac{i}{n^{2}}$,
\item
{\em bivalent} if there are adversarial strategies $\cA,\cA'$ such that $\Prob(\cH_i,\cA)> 1-\frac{1}{n\log n} + \frac{i}{n^2}$ and $\Prob(\cH_i,\cA') < \frac{1}{n\log n} - \frac{i}{n^{2}}$.
\end{itemize}

\noindent An execution that is $1$-valent or $0$-valent is also called {\em uni-valent}. We remark that the classes are disjoint and cover the whole space of an algorithm's states. 


\noindent \textbf{Overview of the lower bound proof.}
Initially, we show that there is an initial assignment of input bits to processes such that the algorithm starts in the state either bivalent or null-valent (c.f. Lemma~\ref{lem:initial-exe}). Consequently, we will prove that if the algorithm starts a round $i$ in an null-valent or bivalent state, then there is an adversarial strategy $\cA$ that keeps the algorithm state in one of these two classes for one more round, with high probability, or in the round 
every such
strategy has reached the limit of $t$ processes failed by the adversary (Lemmata~\ref{lem:null-valent}, \ref{lem:bi-valent}). 
In the latter case,
we show that because the strategy in each round failed at most $8\sqrt{r_{i}\log{n}}$ processes when $r_i\ge 1$ (or $1$ process if $r_i=0$), then a standard application of the Cauchy-Schwartz inequality yields that the desired amount of random calls has been used by the algorithm, with a constant probability. 
We now continue with the detail proof of Theorem~\ref{thm:lower-randomness-res}.


\begin{lemma}\label{lem:initial-exe}
For any synchronous consensus algorithm there exists an initial state, which, if the adversary can control one process, is null-valent or bivalent.
\end{lemma}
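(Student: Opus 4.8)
\emph{Plan.} The argument is the probabilistic analogue of the classical initial--bivalency proof, exploiting that the valency thresholds $\frac{1}{n\log n}-\frac{i}{n^2}$ and $1-\frac{1}{n\log n}+\frac{i}{n^2}$ are monotone in the round index $i$. I would consider the chain of input assignments $C_0,C_1,\ldots,C_n$, where in $C_j$ processes $1,\ldots,j$ start with input $1$ and the remaining processes start with input $0$; consecutive assignments $C_j$ and $C_{j+1}$ differ only in the input bit of process $q:=j+1$. By validity and termination together with $(1-\frac1{n^{3/2}})$-strong-correctness, the adversary that fails nobody forces all non-faulty processes to decide $0$ from $C_0$ and $1$ from $C_n$ with probability at least $1-\frac1{n^{3/2}}$; since $\frac1{n^{3/2}}<\frac1{n\log n}$ for $n$ large, $C_0$ cannot be $1$-valent (the empty strategy already drives $\Prob$ below the lower threshold) and $C_n$ cannot be $0$-valent. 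Hence $C_0$ is $0$-valent or bivalent and $C_n$ is $1$-valent or bivalent.

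\emph{Main step.} If any $C_j$ is bivalent or null-valent we are done, so assume every $C_j$ is uni-valent; then, walking along the chain from the $0$-valent $C_0$ to the $1$-valent $C_n$, there is an index $j$ with $C_j$ being $0$-valent and $C_{j+1}$ being $1$-valent, these differing only in the input of $q=j+1$. Now let the adversary spend its single corruption failing $q$ at the very start of the execution, before $q$ sends any message. Since $q$ then never affects any other process, the executions from $C_j$ and from $C_{j+1}$ can be coupled to be identical, and at the beginning of the next round they have reached one and the same state $D$ (reachable both from $C_j$ and from $C_{j+1}$), with $\Prob(D,\cA)$ equal to the probability that the $C_j$-execution (equivalently, the $C_{j+1}$-execution) under ``fail $q$ first, then $\cA$'' decides $1$, for every continuation strategy $\cA$.

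\emph{Conclusion.} I claim $D$ is null-valent, hence in particular null-valent or bivalent. If $D$ were $1$-valent, there would be a strategy $\cA$ with $\Prob(D,\cA)>1-\frac1{n\log n}+\frac{i_D}{n^2}$, where $i_D$ is $D$'s round index; composing with ``fail $q$ first'' gives a one-corruption strategy from $C_j$ achieving the same value, which exceeds $1-\frac1{n\log n}+\frac{i_j}{n^2}$ because $i_D>i_j$ --- contradicting that $C_j$ is $0$-valent, since every strategy from a $0$-valent state must stay below that upper threshold. The symmetric estimate, using that the lower threshold $\frac1{n\log n}-\frac{i}{n^2}$ decreases in $i$, shows $D$ cannot be $0$-valent either, this time invoking the $1$-valency of $C_{j+1}$. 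Thus $D$ --- the state at which the analysis begins once the adversary has used its control over one process --- is null-valent or bivalent. \emph{Expected main obstacle:} the bookkeeping of the round indices so that monotonicity of the thresholds yields a genuine (strict) contradiction, checking that the composed strategy respects the per-round failure budget $16\sqrt{r_i\log n}+1$, and verifying that the coupling of the $C_j$- and $C_{j+1}$-executions is truly measure-preserving even though the adversary is full-information (this is precisely why $q$ must be failed before it ever communicates).
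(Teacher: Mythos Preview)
Your proof is correct and follows essentially the same route as the paper: the chain of initial configurations from all-$0$ to all-$1$, identifying a critical adjacent pair $(C_j,C_{j+1})$ with opposite uni-valency, and then failing the single distinguishing process $q$ to couple the two runs. The paper organizes the last step slightly differently---it cases on the valency of the post-failure state and, in the $0$-/$1$-valent cases, concludes that $s_{j+1}$ (resp.\ $s_j$) itself is bivalent---whereas you argue directly that the coupled state $D$ can be neither $0$- nor $1$-valent via the threshold monotonicity; these are contrapositives of the same implication.
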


\begin{proof}
Assume to the contrary, that every initial state (i.e., every assignment of input values to processes) is either $1$-valent or $0$-valent. Certainly, the state $a_{1}$ where all inputs are $0$ has to be $0$-valent, while the state $b_{1}$ with only $1$'s as an input has to be $0$-valent. This holds, because the algorithm starting from all $0$'s has to solve consensus correctly (satisfying validity condition) with output $0$ with probability $1-\frac{1}{n^{3/2}}$ at least (and the opposite is true for all $1$'s). Consider now a chain of states $a_{1} = s_{1}, s_{2}, \ldots, s_{n} = b_{1}$ such that each two consecutive states are different only on one input value. Observe, that there must be a state $s_{j}$ in the sequence that is $0$-valent and the next state $s_{j + 1}$ is $1$-valent. Let $p$ be the process that gets different input in these two states. Consider the executions starting with $s_{j}$ but if the adversary fails the process $p$ and does not send its messages. If the execution becomes null-valent or bi-valent then the lemma is proven. If the execution remains $0$-valent, then the executions $s_{j+1}$ is bi-valent. It is $1$-valent by definition, however, by controlling (i.e. failing) the process $p$ and stopping it from sending messages it becomes $0$-valent, since there is no difference between this execution and the one started from $s_{j}$ with the adversary failing the same process. The same argument can be conducted if failing the process $p$ in the execution $s_{j}$ leads to $1$-valent execution. 
Since both these facts hold with high probability, their intersection also does and yields a contradiction.
Thus, the lemma is proven. 
\end{proof}

\begin{lemma} 
\label{lem:null-valent}
A state $\cH_{i}$ that is null-valent at the beginning of round $i < n$ can be extended to a null-valent state at the end of the round with probability greater than $1 - \frac{2}{n^{2}}$ by failing at most $16\sqrt{r_{i}\log n}$ processes.
\end{lemma}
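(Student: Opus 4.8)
The plan is to view the local-computation phase of round~$i$ as one instance of the coin-flipping game: the $k := r_i$ processes that query their random source in this round are the players, the remaining processes are deterministic, and ``hiding'' a player corresponds to the adversary crashing it in round~$i$. Write $\ell_j := \frac{1}{n\log n} - \frac{j}{n^2}$ and $u_j := 1 - \frac{1}{n\log n} + \frac{j}{n^2}$ for the valency thresholds of round~$j$, so that $\cH_i$ being null-valent means $\ell_i \le \Prob(\cH_i,\cA) \le u_i$ for every $\cA$. Note first that if $i+1 \ge n/\log n$ then $u_{i+1}\ge 1$ and $\ell_{i+1}\le 0$ simultaneously, so \emph{every} state of round $i+1$ is automatically null-valent and there is nothing to prove; assume henceforth $i+1 < n/\log n$. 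For an outcome $\omega$ of the $k$ coins and a round-$i$ crash set $F$, let $\cH_{i+1}(\omega,F)$ denote the resulting state; since a crashed process contributes nothing, this depends only on $F$ and on the coordinates of $\omega$ outside $F$, i.e., exactly on the post-hiding vector of the game. Define two Boolean functions of the resulting configuration: $f_1 = 1$ iff it admits a continuation strategy $\cA$ with $\Prob(\cdot,\cA) > u_{i+1}$, and $f_0 = 1$ iff it admits one with $\Prob(\cdot,\cA) < \ell_{i+1}$. The configuration is null-valent at round $i+1$ precisely when $f_1 = f_0 = 0$.

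Apply Lemma~\ref{cor:coin-game} to the game with function $f_1$ and parameter $\alpha = n^{-2}$; it guarantees that the game can be biased towards some fixed $v\in\{0,1\}$, using at most $8\sqrt{2r_i\log n}$ hidden players, with probability greater than $1-n^{-2}$ over $\omega$. The choice $v=1$ is impossible: if, with probability greater than $1-n^{-2}$ over $\omega$, the adversary could crash $\le 8\sqrt{2r_i\log n}$ processes to reach a configuration admitting a continuation exceeding $u_{i+1}$, then the composite strategy ``observe $\omega$, perform that crash in round $i$, then follow that continuation (and do anything on the exceptional $\omega$)'' would achieve, from $\cH_i$, success probability strictly greater than $(1-n^{-2})\,u_{i+1} = u_i + n^{-2}(1-u_{i+1}) > u_i$, using $u_{i+1}-u_i = n^{-2}$ and $u_{i+1}<1$ — contradicting null-valency of $\cH_i$. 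Hence $v=0$: with probability $>1-n^{-2}$ over $\omega$ the adversary can crash a small set $F^{(1)}(\omega)$ so that the resulting configuration has no continuation exceeding $u_{i+1}$. The symmetric argument applied to $f_0$ — using $\Prob(\cH_i,\cA)\ge\ell_i$ for all $\cA$, $\ell_{i+1}-\ell_i=-n^{-2}$ and $\ell_{i+1}>0$ — shows that with probability $>1-n^{-2}$ over $\omega$ the adversary can crash a small set $F^{(0)}(\omega)$ killing the other danger, i.e., leaving no continuation below $\ell_{i+1}$. A union bound then gives that, with probability $>1-2n^{-2}$ over $\omega$, both good events occur, and the adversary can choose a round-$i$ crash set of size $O(\sqrt{r_i\log n})$ (the stated constant $16$ follows by bookkeeping, and in any case the final theorem only needs a $\Theta(\sqrt{r_i\log n})$ bound) after which no continuation leaves the band $[\ell_{i+1},u_{i+1}]$ — i.e., the state is null-valent at the end of round $i$, as claimed.

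The step I expect to be the real obstacle is precisely this last combination. Lemma~\ref{cor:coin-game} is invoked separately on $f_1$ and on $f_0$ and a priori returns two different hiding sets $F^{(1)}(\omega)$ and $F^{(0)}(\omega)$; since adding crashes to round $i$ changes the state and carries no monotonicity in either danger direction, one must argue carefully that a \emph{single} legal round-$i$ crash set simultaneously neutralizes both dangers — either by running the concentration bound jointly against the two ``robust'' sets (those $\omega$ that cannot be pushed away from a given danger), or by a direct union-of-sets bound whose cost is absorbed into the constant. Minor secondary care is needed to verify that every ``hiding'' is realizable as a legitimate round-$i$ adversarial action (including the borderline behaviour of a process in its crash round) and to keep track of the threshold arithmetic $u_{i+1}<1$, $\ell_{i+1}>0$ for $i+1<n/\log n$, with the degenerate regime handled as noted above.
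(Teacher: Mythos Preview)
Your approach mirrors the paper's: both apply the coin-game lemma twice (once to rule out continuations above $u_{i+1}$, once below $\ell_{i+1}$), argue that the ``wrong'' bias direction would contradict null-valency of $\cH_i$, and combine. You do this in parallel with a union bound; the paper does it sequentially (first bias into class $(b)=\{\text{null-valent},\,0\text{-valent}\}$, then refine within $(b)$ to null-valent). The content is the same, and you correctly isolate the combination of the two hiding sets as the crux.

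Where you go wrong is the claim that ``adding crashes \ldots\ carries no monotonicity in either danger direction.'' The needed monotonicity \emph{is} available, and it is precisely what makes the paper's sequential argument sound. The point: crashing $F^{(0)}$ after $F^{(1)}$, still in round $i$ and still within the $16\sqrt{r_i\log n}$ budget, is itself one of the adversary's legal continuation moves from the post-$F^{(1)}$ configuration. Hence every continuation available after crashing $F^{(1)}\cup F^{(0)}$ is already available after crashing only $F^{(1)}$ (perform the extra crashes first, then proceed). So $\sup_\cA\Prob$ can only decrease and $\inf_\cA\Prob$ can only increase as you enlarge the round-$i$ crash set. Once the first hiding kills the ``high'' danger, adding $F^{(0)}$ keeps it killed; once the second kills the ``low'' danger, adding $F^{(1)}$ keeps that killed too. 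Thus $F=F^{(1)}\cup F^{(0)}$ works directly --- your ``union-of-sets'' suggestion is in fact the right move, once the monotonicity is in place.

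To get this monotonicity cleanly you must measure valency at the paper's \emph{temporary} state $H'_i$ (between the local-computation and communication phases of round $i$), taking the $\sup/\inf$ over continuation strategies that may still crash further in round $i$ up to the per-round limit; the valency of $\cH_{i+1}$ then follows since its continuations form a subset. With your definitions --- valency at $\cH_{i+1}$, continuations starting only from round $i{+}1$ --- the monotonicity genuinely fails (a process alive in round $i$'s communication phase can change other processes' states in ways that are not recoverable from round $i{+}1$), so your instinct about \emph{your} formulation was correct; the fix is just to shift where the $\sup/\inf$ is taken. The paper signals exactly this when it passes to the ``temporary state $H'_i$ satisfying $(b)$'' before the second game and then declares the second game's outcomes to be only $\{0\text{-valent},\text{null-valent}\}$: that restriction is legitimate precisely because additional round-$i$ crashes cannot resurrect a continuation exceeding $u_{i+1}$ --- the monotonicity in disguise.
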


\begin{proof}
Consider a one-round coin-flipping game applied to the state $\cH_{i}$ under the randomness used by the set of $r_{i}$ processes. 
Let us partition the possible outcomes (that is possible states in the subsequent round) of the game into two exhaustive classes:
\begin{description}
    \item[Class (a)] the states that are either bivalent or $1$-valent;
\item[Class (b)] the states that are either null-valent or $0$-valent.
\end{description}
By Corollary~\ref{cor:gen-biasing}, the adversary can bias the game by failing at most 
$8\sqrt{r_i\log{n}}$
processes towards either class (a) or (b), with probability at least $1-\frac{1}{n^{2}}$. If the adversary could bias towards class (a), obtaining a state $H_{i + 1}$ in the subsequent round
, then by the classification, there exists a further strategy $\cA_{i + 1}$ that guarantees
$$\Prob(\cH_{i + 1}, \cA_{i + 1}) > 1 - \frac{1}{n\log{n}} + \frac{i + 1}{n^{2}} \ .$$
It follows that at the beginning of round $i$ there was a strategy $\cA'_{i}$ that guaranteed
$$\Prob(H_{i}, \cA'_{i}) > \left(1- \frac{1}{n^{3}}\right)\left(1 - \frac{1}{n\log{n}} + \frac{i + 1}{n^{2}}\right)$$
$$=1 - \frac{1}{n\log{n}} + \frac{i + 1}{n^2} - \frac{1}{n^3} + \frac{1}{n^4\log{n}} - \frac{i + 1}{n^5} $$
$$\ge 1 - \frac{1}{n\log{n}} + \frac{i}{n^{2}} \ ,$$
where for the last inequality we used $i < n$, which contradicts the fact that $\cH_{i}$ is null-valent. Therefore, by failing at most 
$8\sqrt{r_{i} \log n}$ 
processes before the communication phase of round $i$, the adversary can steer to a \textit{temporary} state $H'_{i}$ satisfying $(b)$. Here, ``temporary'' refers to the fact that the state is measured between two phases of the round $i$.
Conditioned on this, we can view the remaining random choices of still-non-faulty processes as another one-round coin-flipping game with at most $r_{i}$ processes using randomness. 
We can further classify the outcomes (states) of this game into two classes:
\begin{description}
    \item[Class (a'):] the states that are $0$-valent;
    \item[Class (b'):] the states that are null-valent.
\end{description}
The adversary, by failing at most 
$8\sqrt{r_{i} \log n}$ 
other processes (thus altogether at most $16\sqrt{r_{i} \log n}$ processes), still between the phases of round $i$, can bias towards one of those new classes with probability at least $1 - \frac{1}{n^3}$, again by Corollary~\ref{cor:gen-biasing}. By analogous reasoning as above, we can exclude the class (a'), as biasing towards this class would mean that the state $\cH_{i}$ was $0$-valent. Therefore, it must be that the execution can be biased toward class (b') and henceforth the state $\cH_{i}$ can be extended to a null-valent state $\cH_{i + 1}$ in the next round with sufficiently high probability. 
\end{proof}


\noindent 
Next, we analyze the strategy for a state that is bivalent. 
In the proof, we use the fact that for any value of $r_i$, the upper bound $16\sqrt{r_{i}\log{n}}+1$ on the number of processes that the adversary can fail in a round is at least $1$.\footnote{Note that the first part of the upper bound formula, $16\sqrt{r_{i}\log{n}}$, which was sufficient for extending null-valent state as in Lemma~\ref{lem:null-valent}, could be $0$ for $r_i=0$, therefore we need $+1$ in the upper bound formula on the number of processes that could be failed by an adversary in a round, in order to give the adversary the power to fail at least one process in each round.}


\begin{lemma}
\label{lem:bi-valent}
Let $\cH_{i}$ be a bivalent state. By failing at most 
$16\sqrt{r_{i}\log{n}}+1$ 
processes per round, the adversary can extend the state for the next $i' > 1$ rounds, with probability at least $1 - \frac{i'}{n\log{n}}$, reaching a state $\cH_{i + i'}$  that is either bivalent or terminating. The latter case can happen only because failing the necessary processes in round $i + i' - 1$ would exceed the limit $t$ on the total number of failures.
\end{lemma}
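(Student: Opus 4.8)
I would argue round by round and then assemble the claim by a union bound over the $i'$ rounds, losing $\frac{1}{n\log n}$ per round. Concretely, the per-round statement I aim at is: if $\cH_{j}$ is bivalent at the start of a round $j<n$ for which the adversary still has budget, then by failing at most $16\sqrt{r_{j}\log n}+1$ processes in round $j$ the adversary reaches, with probability at least $1-\frac{1}{n\log n}$, a state $\cH_{j+1}$ that is bivalent again (or null-valent, in which case it subsequently switches to the strategy of Lemma~\ref{lem:null-valent}); and the only way this fails is either the at most $\frac{1}{n\log n}$ ``bad-coins'' event described below, or the event that the required failures would push the cumulative count past $t$, which is the ``terminating'' alternative and hence can occur only at the last round $j=i+i'-1$.

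For the one-round step I would run the coin-flipping game of Corollary~\ref{cor:gen-biasing} on the $r_{j}$ processes that access a random source in round $j$, in the same two-stage manner as in the proof of Lemma~\ref{lem:null-valent}: classify the possible successors $\cH_{j+1}$ according to whether they still admit a strategy forcing $\Prob$ above $1-\frac{1}{n\log n}+\frac{j+1}{n^{2}}$ (the $1$-valent and bivalent states) or not, apply the dichotomy of Corollary~\ref{cor:gen-biasing} to steer into one of the two classes while hiding at most $8\sqrt{r_{j}\log n}$ processes with probability at least $1-\frac{1}{n^{3}}$, and then a second, symmetric application (hiding another $8\sqrt{r_{j}\log n}$ processes) to refine the successor to a bivalent one. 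When no coin outcome admits a bivalent successor but both a $0$-valent and a $1$-valent successor are reachable, the adversary additionally excludes one ``pivot'' process, exactly as in the classical FLP adjacency argument --- this is the source of the ``$+1$'' and the only change in the failure count compared with Lemma~\ref{lem:null-valent}. Those branches of the two dichotomies that would collapse the bivalency window are discarded as in Lemma~\ref{lem:null-valent}: combining the $\bigl(1-\tfrac{1}{n^{3}}\bigr)$ success factor of the game with $j<n$ turns such a bias into a strategy from $\cH_{j}$ whose value lies outside the window of the relevant valency type.

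The genuinely new difficulty --- and what I expect to be the main obstacle --- is the accounting, and it is also the reason the per-round guarantee is only $1-\frac{1}{n\log n}$ rather than the $1-\frac{2}{n^{2}}$ of Lemma~\ref{lem:null-valent}. Unlike in the null-valent case, one branch of the dichotomy, namely the one on which the round's coins are extreme enough that every admissible choice of hidden processes leaves $\cH_{j+1}$ uni-valent (so the algorithm is essentially forced to decide), cannot be excluded outright; it can only be shown to have probability at most $\frac{1}{n\log n}$. This is where the $\frac{1}{n\log n}$ slack built into the valency definitions is spent: a bivalent state by definition still carries a strategy whose probability of deciding the ``wrong'' value is below $\frac{1}{n\log n}-\frac{j}{n^{2}}$, and one has to transport this bound through a single round --- using the anti-concentration inherent in the coin-flipping game --- without double-charging it in later rounds. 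Alongside this, one must verify the bookkeeping on failures: the two invocations of Corollary~\ref{cor:gen-biasing} cost $2\cdot 8\sqrt{r_{j}\log n}$ hidden processes, the pivot costs the last $+1$, and the ``terminating'' alternative is precisely the event that adding these at most $16\sqrt{r_{j}\log n}+1$ fresh failures to those spent in rounds $i,\dots,j-1$ would exceed $t$, so that it is triggered only in the final round of the chain.
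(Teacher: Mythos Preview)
Your plan has a genuine gap: the two-stage coin-flipping argument from Lemma~\ref{lem:null-valent} does not transfer to bivalent states in the way you describe. In the null-valent case, the exclusion of a class (say ``bivalent or $1$-valent'') works because null-valency constrains \emph{every} admissible strategy; biasing the coins towards that class would itself constitute a strategy whose $\Prob$ exceeds the null-valent upper bound, a contradiction. Bivalency, by contrast, only asserts the \emph{existence} of two extreme strategies $\cA,\cA'$; it places no constraint on other strategies. So if Corollary~\ref{cor:gen-biasing} tells you that you can bias towards ``$1$-valent or bivalent'' and, in a second stage, towards ``$1$-valent'', nothing is contradicted --- the biasing strategy is simply a third strategy, unrelated to $\cA$ or $\cA'$. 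Your sketch acknowledges this branch cannot be excluded and claims its probability is at most $\tfrac{1}{n\log n}$, but the justification (``transport the bound through a single round using the anti-concentration inherent in the coin-flipping game'') is not an argument: the coin-flipping game gives you concentration, not anti-concentration, and the $\tfrac{1}{n\log n}$ in the valency definition bounds $\Prob(\cH_{j},\cA')$ over \emph{all} future randomness, not the one-round probability that the coins land in a particular valency class.

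The paper's proof takes a different route that does not invoke Corollary~\ref{cor:gen-biasing} at all. From the bivalent $\cH_{i}$ it simply lets the coins fall; if the resulting state is already bivalent or null-valent, the adversary does nothing. If it is uni-valent, say $1$-valent, the adversary starts executing the strategy $\cA_{i}$ that witnesses the $0$-side of bivalency (so $\Prob(\cH_{i},\cA_{i})<\tfrac{1}{n\log n}$). While following $\cA_{i}$, either the state turns bivalent (stop), or a single process failure flips it from $1$-valent to $0$-valent (the FLP pivot: withhold that last failure and the start of round $i+1$ is bivalent, since failing or not failing that one process toggles the valency --- this is where the ``$+1$'' is used), or the state remains $1$-valent through the round, in which case the adversary continues $\cA_{i}$ into round $i+1$, $i+2$, and so on. The probability bound comes not from a per-round coin-game loss but from a single global fact: while the state is $1$-valent the algorithm cannot decide $0$, and the probability it decides $1$ while the adversary runs $\cA_{i}$ is at most $\Prob(\cH_{i},\cA_{i})<\tfrac{1}{n\log n}$. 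The $16\sqrt{r_{j}\log n}+1$ per-round budget is respected automatically because $\cA_{i}$ is, by definition of the admissible strategy class, already subject to it.
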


\begin{proof}
Consider the state $\cH_{i}$ and let processes complete their local computation phase scheduled for this round. Let $\cH'_{i}$ be the random variable equal to the new state 
after
the local computation phase.
Since the communication phase involves no additional randomness, 
we can distinguish the following cases:
\begin{description}
    \item[Case A.] The randomness used by processes (in round $i$) led to the next state $\cH'_{i}$ that is either null-valent or bivalent --- then the adversary does nothing. 
\item[Case B.] The result of the random choices (in round $i$) of the processes led to an uni-valent state $\cH'_{i}$, say $1$-valent.
\end{description}
Nevertheless, since the state $\cH_{i}$ is bivalent, there exists a strategy $\cA_{i}$ that achieves $\Prob(\cH_{i}, \cA_{i}) < \frac{1}{n\log{n}} - \frac{i}{n^2} < \frac{1}{n\log{n}}$ (where the probability includes all possibilities for the value of the random variable~$\cH'_{i}$). 
Next, we shall prove that the fact that $\cH'_{i'}$ 
is $1$-valent, together with the existence of the strategy $\cA_{i}$, disallows termination of the algorithm with probability greater than $1 - \frac{1}{n}$ in the round $i + 1$, and in some cases -- even in later rounds, unless the adversary reached the limit of failed processes. 

If $0$ is to be decided in the next round, then the state must become either $0$-valent or bivalent at some point of implementing $\cA_{i}$. 
If it becomes bivalent, then the adversary stops implementing $\cA_{i}$. At this point, abandoning the strategy and letting the algorithm continue, leads to a bivalent state $\cH_{i+1}$ which proves the lemma.

The other case is when during implementing $\cA_{i}$, the state becomes $0$-valent. 
Assume that this happens after the failure of a process after all its messages are delivered. Then, the adversary does not fail this process and stops implementing the strategy. It follows that crashing or not crashing the process switches between $0$-valent and $1$-valent states. Therefore, the initial state of the round $i+1$ must be bivalent, as the adversary can fail the process at the very beginning of round $i + 1$ and switch between the two types. (Here we use a fact that the adversary has to be able to fail at least $1$ process per round, which is why we set up the upper bound $16\sqrt{r_{i} \log n}+1$ that is at least $1$ even for $r_i=0$.)
It can be also the case that the state becomes $0$-valent after failing a process together with failing to deliver some non-empty subset of its messages. Let $u$ be the recipient of the message that -- if delivered -- would keep the state $1$-valent, while if failed -- it would change the state to $0$-valent. 
The adversary stops the algorithm from delivering the message but does not take any further actions in round $i$. Now, failing the process $u$ at the beginning of round $i+1$ hides the information whether the last message has been delivered or not, which again was the only difference between $0$-valent and $1$-valent state -- thus the state must be bivalent, proving the lemma.

Finally, following the strategy $\cA_{i}$ by the adversary may not lead to the change of the valency of the state, resulting in a $1$-valent state $\cH_{i + 1}$ at the beginning of the next round. Anyway, if the algorithm decides at the very beginning of round $i+1$, the decision cannot be $0$. 
To achieve the $\frac{1}{n^{3/2}}$-strongly-correctness with the decision $0$, it must be $\Prob\left({\cH_{i + 1}, \cB}\right) < \frac{1}{n^{3/2}} $ for any strategy $\cB$. However, the state $\cH_{i+1}$ is $1$-valent, thus there exists a strategy $\cB'$ such that $\Prob\left({\cH_{i + 1}, \cB'}\right) > 1 - \frac{1}{n\log n} + \frac{i + 1}{n^2}$, which, assuming $i + 1 < n$, excludes decision on value $0$. 
On the other hand, implementing the strategy $\cA_{i}$ from the beginning of round $i$ guarantees that the probability of deciding $1$ is at most $\frac{1}{n\log{n}}$ (which follows from the choice of $\cA_{i}$). 
Therefore, the adversary can implement $\cA_{i}$ in round $i + 1$, then in round $i + 2$ and so on. Either it achieves a bivalent state in a round $i+i'$, for some $i' \ge 1$, or in some round it runs out of processes to fail. If the strategy $\cA_{i}$ was continued for $i'$ rounds, then the probability of early deciding, by the union bound, is at most 
$i' \cdot \frac{1}{n\log{n}}$.
Thus, the lemma is proved.
\end{proof}

\noindent We can now finalize the lower bound announced in Theorem~\ref{thm:lower-randomness-res}. For the sake of completeness of this section, we also decided to repeat the statement of the theorem and the proof.

\begin{theorem}[Theorem~\ref{thm:lower-randomness-res}]
\label{thm:lower-randomness}
For any $\left(1 - \frac{1}{n^{3/2}}\right)$-strongly-correct randomized algorithm solving consensus, let $T$ be the random variable denoting the number of rounds until the first process terminates, while $R$ be the random variable denoting the total number of calls to a random source. Then, there is an adversarial strategy that with probability at least $1 - \frac{1}{\log{n}}$ forces the algorithm to have
$$T \times \left( R + T \right) = \Omega \left(\frac{t^2}{\log{n}}\right) \ .$$
%
%
%
\end{theorem}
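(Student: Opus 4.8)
The plan is to carry out exactly the adversarial strategy sketched for Theorem~\ref{thm:lower-randomness-res}, now feeding it the three supporting lemmas of this section. First I would invoke Lemma~\ref{lem:initial-exe} to fix an input assignment for which the algorithm's initial state is null-valent or bivalent (this costs the adversary at most one controlled process). From that point the adversary maintains the invariant ``the current state is null-valent or bivalent''. In a null-valent state it follows Lemma~\ref{lem:null-valent}, extending the invariant by one round while failing at most $16\sqrt{r_i\log n}$ processes, with success probability $> 1-\tfrac{2}{n^2}$. In a bivalent state it follows Lemma~\ref{lem:bi-valent}, extending the invariant for some block of $i' > 1$ rounds while failing at most $16\sqrt{r_i\log n}+1$ processes per round, with success probability $\ge 1-\tfrac{i'}{n\log n}$; the only way such a block ends badly is that failing the required processes in its last round would push the cumulative failure count over the budget $t$.

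Next I would observe that while this invariant holds the algorithm cannot have terminated: by $(1-n^{-3/2})$-strong-correctness together with agreement, a deciding state would force the conditional probability of the final decision to lie within $n^{-3/2}$ of $0$ or of $1$ under every continuing adversarial strategy, contradicting the defining inequalities of null-valent and bivalent states for all rounds $i < n$. Hence the adversary prolongs the execution until one of two things happens: either round $n$ is reached, or in some round $T < n$ the prescribed per-round failure bound would exceed $t$. A union bound over at most $n$ one-round extensions from Lemma~\ref{lem:null-valent} (total error at most $2/n$) and over the bivalent blocks, whose lengths sum to at most $n$ and hence contribute total error at most $1/\log n$, shows the adversary reaches one of these two events with probability at least $1-\tfrac{1}{\log n}$; I would re-check the constants here so that both contributions genuinely fit under $\tfrac{1}{\log n}$, tightening the lemma constants if necessary.

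Finally I would split on the two events. If the execution runs to round $n$ then $T\ge n$, so $T\,(R+T)\ge n^2 \ge t^2 \ge t^2/\log n$ and the bound is trivial. Otherwise the adversary stopped in round $T<n$ because its budget was exhausted, so $t \le \sum_{i=1}^{T-1}\bigl(16\sqrt{r_i\log n}+1\bigr) \le 32\sum_{i=1}^{T-1}\sqrt{(r_i+1)\log n}$, using $1\le 16\sqrt{\log n}$. Squaring and applying Cauchy--Schwarz gives $t^2 \le 1024(T-1)\sum_{i=1}^{T-1}(r_i+1)\log n = 1024(T-1)\log n\,(R+T-1) \le 1024\log n\cdot T\,(R+T)$, i.e. $T\,(R+T)=\Omega\!\left(\tfrac{t^2}{\log n}\right)$, which is the claim.

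The computational steps — the inequality $16\sqrt{r_i\log n}+1\le 32\sqrt{(r_i+1)\log n}$ and the Cauchy--Schwarz manipulation — are routine. The substantive content is already packaged in Lemmas~\ref{lem:initial-exe}, \ref{lem:null-valent} and~\ref{lem:bi-valent} (themselves resting on Corollary~\ref{cor:gen-biasing}); the only delicate bookkeeping left is the union bound above, namely confirming that the accumulated failure probabilities over all per-round applications stay below $\tfrac{1}{\log n}$, and that ``the algorithm has not decided'' really does follow from the maintained valency invariant for every $i<n$, with the boundary regime $i\approx n$ harmlessly absorbed by the trivial case $T\ge n$.
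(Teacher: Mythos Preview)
Your proposal is correct and follows essentially the same route as the paper: invoke Lemma~\ref{lem:initial-exe} for the initial state, maintain the null-valent/bivalent invariant via Lemmas~\ref{lem:null-valent} and~\ref{lem:bi-valent}, observe that termination forces uni-valency so the adversary runs until either round $n$ or its failure budget is exhausted, and then apply exactly the same $16\sqrt{r_i\log n}+1\le 32\sqrt{(r_i+1)\log n}$ bound followed by Cauchy--Schwarz. If anything, you are slightly more explicit than the paper about the union bound over the per-round failure probabilities; the paper states the $1-\tfrac{1}{\log n}$ success probability without spelling out the summation you describe.
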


\begin{proof}
By Lemma~\ref{lem:initial-exe}, the adversary can assign input values such that the initial state is in either a bivalent or a null-valent state. Then, the adversary follows the strategy described in Lemmas~\ref{lem:null-valent} and ~\ref{lem:bi-valent}, depending whether the current state is null-valent or bivalent.
Specifically, if the state is null-valent, the adversary can extend the execution by one more round with probability at least $1 - \frac{1}{n^{2}}$, by Lemma~\ref{lem:null-valent}. If the state is bivalent, it can extend the state for some $i' > 1$ rounds 
with probability at least $1 - \frac{i'}{n\log{n}}$, such that the new state is again either bivalent or null-valent, or the execution terminates but then the number of failed processes in the previous round would exceed the adversary's limit $t$. If the algorithm decides to terminate, it must be in either a $0$-valent or $1$-valent state, since the algorithm is $\left(1 - \frac{1}{n^{3/2}}\right)$-strongly-correct. Therefore, the adversary can prolong the execution either for $n$ rounds or until it runs out of the processes to fail. 

Let $T$ be the round in which the execution terminated. If $T = n$, then the theorem
follows. 
Assume then that the adversary stopped implementing its strategy in round $T$ due the fact that in the preceding round it could not fail the desired number of processes. Since the adversary fails at most 
$16\sqrt{r_{i}\log{n}}+1$ 
processes in a round $i$ (c.f., Lemma~\ref{lem:null-valent} and~\ref{lem:bi-valent}), we obtain
\[t \le \sum_{i = 1}^{T - 1}\left(16\sqrt{r_{i}\log{n}}+1\right) 
\le 32 \sum_{i = 1}^{T - 1} \sqrt{(r_{i}+1)\log{n}} \ ,\]
which is equivalent to
\[ t^{2} \le 1024\left(\sum_{i = 1}^{T - 1}\sqrt{r_{i} \log{n}}\right)^{2} \ . \]
Applying the Cauchy-Schwarz inequality to the right hand side of the above, we get
\[ t^2 \le 
1024\left(\sum_{i = 1}^{T - 1}\sqrt{(r_{i} + 1)\log{n}}\right)^{2} 
\le 1024(T - 1)\left(\sum_{i = 1}^{T - 1} (r_{i} + 1)\log{n} \right) \ , 
\]
which, after proper rearranging, yields
\[ \frac{t^{2}}{1024\log{n}} \le (T - 1) \times (R + T) \]
and proves the theorem.
\end{proof}

\remove{
\noindent \textbf{Proof of Theorem~\ref{thm:lower-randomness}}
We are now ready to describe the full
Consider the total amount of randomness used (i.e., the sum over all processes $p$ and rounds $t$ such that process $p$ uses randomness in round $t$):
\[
\sum_{t\le \tau} r_t
\]
and assume, by contradiction, that it is asymptotically smaller than $n\cdot \frac{n}{\tau\log n} = \frac{n^2}{\tau\log n}$. It follows that the adversary has crashed asymptotically at most
\[
\sum_{t\le \tau} \sqrt{r_t\log n}
\le
\sqrt{\tau\sum_{t\le \tau} (r_t\log n) }
<<
n
\]
processes by round $\tau$, by the Cauchy-Schwarz Inequality and the (contradictory) assumption.
It means that the adversary can still continue crashing processes and enforce the system to be in null-valent state, by Lemma~\ref{lem:lower-null}, which is a contradiction.
}
\remove{
\Paragraph{Lower bound | Old}
Consider a state $a_{k}$ (a.k.a execution) of an algorithm $\mathcal{A}$ in round $k$. We restrict our attention only to this class $\mathcal{B}$ of adversaries that take at most $8\sqrt{n\log n} +1$ processes in each round. We denote $\Prob(v | a_{k}, b)$ the probability that starting from the state $a_{k}$ the algorithm $\mathcal{A}$ solves consensus with the output value $v$ under the control of the adversary $b \in \mathcal{B}$. Then we define set $r(a_{k})$ as the set of the following probabilities:
$$r(a_{k}) := \{\Prob(1 | a_{k}, b) | b \in \mathcal{B} \}.$$
The set $r(a_{k})$ captures the distribution of the result $1$ of the algorithm $\mathcal{A}$ starting from the state $a_{k}$ over the adversaries from $\mathcal{B}$. Based on this statistic, we consider the following classes of states:
\begin{center}
\begin{tabular}{c | c c} 
 \hline
 state's class & $\min r(a_{k})$ & $\max r(a_{k})$ \\ [0.5ex] 
 \hline
 bivalent & $< \frac{1}{\sqrt{n}}-\frac{k}{n}$ & $> 1 - \frac{1}{\sqrt{n}} +\frac{k}{n}$ \\ 
 \hline
 null-valent & $\ge \frac{1}{\sqrt{n}}-\frac{k}{n}$ & $\le 1 - \frac{1}{\sqrt{n}} +\frac{k}{n}$ \\
 \hline
 $0$-valent & $< \frac{1}{\sqrt{n}}-\frac{k}{n}$ & $\le 1 - \frac{1}{\sqrt{n}} +\frac{k}{n}$ \\
 \hline
 $1$-valent & $\ge \frac{1}{\sqrt{n}}-\frac{k}{n}$ & $> 1 - \frac{1}{\sqrt{n}} +\frac{k}{n}$ \\

\end{tabular}
\end{center}

\subsection{A Lower Bound with Tradeoff}

In asynchronous networks there is a lower bound showing that, for any $t$-resilient randomized \textit{asynchronous} consensus algorithm, there exists an adversarial strategy that enforces the algorithm to use at least $\Omega(n/\log n)$ messages per process, with high probability. We cannot expect this result to hold in \textit{synchronous} networks because of our \textsc{ParamConsensus} algorithm breaking the $\Theta(n/\log n)$ barrier. However, we can make a connection between the number of rounds used by any (even synchronous) randomized algorithm, which must be $\Omega(\sqrt{n/\log n})$ by~\cite{Bar-JosephB98}, and the amortized number of messages sent (or the number of rounds with random choices made) by a process running any given consensus algorithm against an adaptive adversary. We aim at the following result.


The proof of Theorem~\ref{thm:lower-randomness} is given in the following Section~\ref{sec:proof-lower}.

\subsection{Proof of Theorem~\ref{thm:lower-randomness}}
\label{sec:proof-lower}

Consider a randomized algorithm solving consensus with probability $1$, with time complexity $\tau = \Omega(\sqrt{n/\log n})$ that holds with a constant probability, against an adaptive adversary. 
We adjust the ideas in the proof of the lower bound $\Omega\left(\sqrt{\frac{n}{\log n}}\right)$ on the number of rounds from \cite{Bar-JosephB98}, by careful accounting of amount of randomness used by the algorithm.

\paragraph{Preliminaries.}

We will also use this result to bias an execution of the algorithm towards undecided state (so called null-valent) with probability at least $1-1/n$ (by setting $\alpha = 1/n$ by crashing $\Theta(\sqrt{r_t\log n})$ processes using randomness in round $t$. If $r_t$ is very small, say $\sqrt{r_t\log n}>r_t$, simply all $r_t$ processes are crashed.

\paragraph{Starting executions.}
In \cite{Bar-JosephB98}, the authors proved that an adaptive adversary could turn some initial execution, which is in fact a vector of initial values ot processes, into a null-valent or bi-valent execution by crashing at most one process.
Although in their analysis they assumed only the case of $\tau=\Theta(\sqrt{n})$, that particular result holds for any $\tau < n$, i.e., when the probabilistic thresholds in the definitions of null-valent and bi-valent executions should be strictly in $(0,1)$.
This result holds even if the algorithm can use unlimited randomness and communication, therefore it automatically applies in our case of restricted randomness and our considered range of $\tau$. More precisely:


\begin{lemma}[Lemma 3.5 in~\cite{Bar-JosephB98}]
\label{lem:lower-initial}
Any synchronous consensus algorithm has an initial state which by crashing at most one process becomes null-valent or bi-valent.
\end{lemma}

It follows from Lemma~\ref{lem:lower-initial} that it is sufficient to consider executions starting from bi-valent or null-valent executions.

\paragraph{Extending null-valent executions.}
The following lemma provides an argument to extend null-valent execution at round $t<\tau$ to a null-valent execution in round $t+1$. This way, starting from a null-valent execution, the adversary could keep extending it until a null-valent execution at round~$\tau$.

\begin{lemma}
\label{lem:lower-null}
A null-valent execution $\cE_t$ could be extended to a null-valent execution $\cE_{t+1}$ with probability at least $1-1/n$.
\end{lemma}

\begin{proof}
The proof follows directly from Lemma 3.1 in~\cite{Bar-JosephB98}, by replacing their specific Corollary 2.2 by our generalized version applied to $r_t$ processes that use randomness.
\end{proof}

\paragraph{Extending bi-valent executions.}
}

\section{Trading Time for Randomness}\label{sec:tradeoff-upper}
Here, we present our algorithm that can trade random calls to a random source for time complexity. The building blocks of the algorithm are these three ideas: $(a)$ similarly to the optimal algorithm from Section~\ref{sec:main-algo} we use the partition of processes into operative and inoperative, rather than distinguishing between faulty and correct, ultimately we first achieve a common decision among the operative ones and then distribute it to the rest; 
$(b)$ to save on randomness, we partition the set of all processes $\cP$ into $x$ groups of size $\ceil{\frac{n}{x}}$ each. The high-level idea behind the partition is that achieving a consensus decision, by running the optimal algorithm from Section~\ref{sec:main-algo}, on a group of size $\ceil{\frac{n}{x}}$ processes uses less random calls than on the entire network, since the scalability is better than linear we can use this fact to exchange the number of random bits for the running time;  $(c)$ to carry over consensus decisions between groups we use round-robin approach. Each group has its time slot in which members of this group try to solve consensus. Then, the members of the group disseminate the consensus decision (if achieved) to other operative processes, who in their turn use this value as the input for the next consensus calls. Additionally, to achieve consensus conditions with probability $1$, processes employ a safety rule analogous to the one in \texttt{OptimalOmissionsConsensus}.
The pseudocode of the algorithm is given in Algorithm~\ref{alg:tradeoff}. The formal description is given below.

Let $SP_{1}, \ldots, SP_{x}$, for an integer $x \in [n]$, be a partition of the set of processes $\cP$ into $x$ groups (called also super-processes) of size $\ceil{\frac{n}{x}}$ each. Processes compute the partition, based on their identifiers, at the beginning of execution. Also, at the beginning of the execution, the processes compute a graph $G$ of the degree $\Theta(\log{n})$ in the same way as in the algorithm described in Section~\ref{sec:main-algo}, i.e., each process uses the result of Theorem~\ref{thm:random-graph-properties} to compute its set of neighbors in some predetermined graph $G$, which is guaranteed to exists by the theorem.

\begin{algorithm}
\SetAlFnt{\tiny}
\SetAlgoLined
\SetKwInput{Input}{input}
\Input{$\cP$, $p$, $b_p$}
$SP_{1}, \ldots, SP_{x} \leftarrow $ a partition of $\cP$ into 
$x$ disjoint super-processes of size $\frac{n}{x}$ each\;
$\texttt{operative}_{p} \leftarrow true$\;
$V_{p} \leftarrow$ $V_{p} \leftarrow$ a set of neighbors of $p$ in a predetermined graph $G$ guaranteed by Theorem~\ref{thm:random-graph-properties}\;
\For{$i \leftarrow 1$,  $i \le x$}
{
\lIf{$p \in SP_{i}$}
{
$\texttt{b}_{p}, 
\leftarrow \textsc{OptimalOmissionsConsensus}(SP_{i}, p, b_{p})$
}
\lElse{wait for a fixed number of $\Theta\left(\sqrt{\frac{n}{x}}\log^{2}\left(\sqrt{\frac{n}{x}}\right)\right) $ rounds}

\lIf{$p \in SP_{i}$}
{$\texttt{consensus\_decision}_{p} \leftarrow b_{p}$
}
\lElse{$\texttt{consensus\_decision}_{p} \leftarrow \perp$}
\BlankLine
\For{$2\log{n}$ rounds}
{
    \lIf{$\texttt{operative}_{p} = false$}{stay idle until line~\ref{line:final-receiv-param}\label{line:idle-param}}
    \textbf{for} $q \in V_{p}$ : \newline
    \hspace*{2.5mm} \textbf{send} $\texttt{consensus\_decision}_{p}$ if $q$ not disregarded before\newline
    \hspace*{2.5mm} \textbf{receive} $\texttt{consensus\_decision}_{q}$ from $q$ and update:\newline
    \hspace*{5.5mm} $\texttt{consensus\_decision}_{p} \leftarrow \texttt{consensus\_decision}_{p} \cup \texttt{consensus\_decision}_{q}$;\newline
    \hspace*{2.mm} \textbf{if} $q$ has not sent a message, disregard sending to $q$ in any future round;\newline
    \lIf{number of received messages is less than $\Delta / 3$}{\newline \hspace*{2.5mm} $\texttt{operative}_{p} \leftarrow false$}   
}
$b_{p} \leftarrow \texttt{consensus\_decision}_{p}$\;\label{line:param-acq-b}
}
\tcc{implementation of a safety rule}
$\texttt{decide} \leftarrow false$ \label{line:sft-starts-param}\;
\If{$\texttt{operative}_{p} = true$}{
\textbf{send} $b_{p}$ to all processes in $\cP$\label{line:operative-dis-param}\;
\textbf{receive} bits sent in the previous round; let variables $\texttt{ones}_{p}, \texttt{zeros}_{p}$ denote the number of $1$'s and $0$'s received\label{line:count}\;
\lIf{$\texttt{ones}_{p} > \frac{18}{30}(\texttt{ones}_{p} + \texttt{zeros}_{p})$}{$b_p \leftarrow 1$}\label{line:if-1-param} 
\lElseIf{$\texttt{ones}_{p} < \frac{15}{30} (\texttt{ones}_{p} + \texttt{zeros}_{p})$}{$b_p \leftarrow 0$}\label{line:if-0-param}
\lIf{$\texttt{ones}_{p} > \frac{27}{30}(\texttt{ones}_{p} + \texttt{zeros}_{p})$}{$\texttt{decided}_{p} \leftarrow true$}\label{line:if-1-r-param} 
\lElseIf{$\texttt{ones}_{p} < \frac{3}{30} (\texttt{ones}_{p} + \texttt{zeros}_{p})$}{$\texttt{decided}_{p} \leftarrow true$}\label{line:if-0-r-param}
}
\BlankLine
\lIf{$\texttt{operative}_{p} = true$ and $\texttt{decided}_{p} = true$}{\textbf{send} $b_{p}$ to all processes in $\cP$\label{line:good-spread-param}}
\lElseIf{any message $b_{q}$ received from some process $q$}{
$b_{p} \leftarrow b_{q}$\label{line:final-receiv-param}}
\tcc{in the above, $q$ can be chosen arbitrarily from the received messages}
\BlankLine
\lIf{$\texttt{decided}_{p} = true$ or $(\texttt{operative}_{p} = false$ and $p$ received a message in the previous round$)$}{\textbf{decide} $b_{p}$\label{line:if-decided-param}}
\Else{
\lIf{$\texttt{operative}_{p} = true$}{$p$ takes part in any deterministic synchronous consensus algorithm (e.g.,~\cite{DolevR85}); if $p$ reaches agreement in that protocol, it broadcast the decision to all processes in $\cP$ and it decides on the algorithm's decision\label{line:spread-param}}
\lElse{$p$ remains idle until a decision is sent to it; upon receiving a decision it decided on this value\label{line:final-receiv-para-2}}
}\label{line:fixing-und-param}
\caption{\textsc{ParamOmissions}}\label{alg:tradeoff}
\end{algorithm}

Then the round-robin stage of computation begins. Throughout this stage, processes store their candidate decision value in a variable $b$, which is initialized to their input value. Also, initially, each process starts the round-robin stage with the operative status set to $true$. This status is maintained based on the number of received messages (messages are send only via edges of the graph $G$).

The round-robin stage of computation is structured into $x$ phases. In phase $i$, for $1 \le i \le x$, operative processes belonging to the super-process $SP_{i}$ run a variation of the algorithm \textsc{OptimalOmissionsConsensus} from Section~\ref{sec:main-algo} on their values $b$. The difference is that the algorithm \textsc{OptimalOmissionsConsensus} is run for a fixed $\Theta\left(\sqrt{x}\log^{2} x\right)$ number of rounds, i.e. the \textsc{OptimalOmissionsConsensus} algorithm is terminated at line~\ref{line:if-decided}.  Although it might not always achieve a consensus decision, a simple observation from the proof of Theorem~\ref{thm:main-algo} is that it can happen with a polynomially small probability. On the other hand, we have a control over the number of rounds used for a single run. Processes of other super-processes stay idle for the fixed number of rounds. This is where the "trading" part of the algorithm takes place. The randomness needed to calculate the consensus value on the members of the super-processes is $\Theta\left(\left(\frac{n}{x}\right)^{3/2}\right)$, which scales better than linearly and therefore $x$ separate executions can save random bits compared to the baseline algorithm \textsc{OptimalOmissionsConsensus} from Section~\ref{sec:main-algo} (which is a single execution on the set of $n$ processes).

The outcome of the truncated at line~\ref{line:if-decided} execution of the \textsc{OptimalOmissionsConsensus} algorithm to a process can be two-fold. Either the process got a decision or it has received no messages with a decision when running the \textsc{OptimalOmissionsConsensus} internally (c.f. lines~\ref{line:good-spread}-\ref{line:if-decided} in Algorithm~\ref{alg:opt-omissions}). In the first case, the process substitutes its candidate value $b$ with the consensus value, in the second case, it substitutes its value $b$ with a default, null-symbol $\perp$. In the last part of the phase, operative processes of the super-process $SP_{i}$ floods the consensus value (if any) calculated earlier along the graph $G$. 
The flooding proceeds in $\Theta(\log{n})$ rounds. Each operative process sends a message to its neighbors in $G$. The message is either empty or contains the consensus decision from the beginning of the phase if the process has already learned this value (operative processes of $SP_{i}$ know the value from the beginning of the flooding). 
When receiving messages from neighbors, every process updates the current consensus decision upon receiving, and also keeps track of processes that have not sent any message to it. 
First, it excludes such processes from further flooding communication in this and any subsequent round-robin phase of the execution. 
Second, if the number of processes that sent a message drops below $\Delta / 3$, the process becomes inoperative and stops its participation in \textit{any} subsequent actions of the algorithm. 

The rationale behind the round-robin stage is that processes aim for a phase in which the algorithm \textsc{OptimalOmissionsConsensus} is executed on a super-process whose large fraction of processes is operative. In case like this, with high probability, all processes that remain operative till the end of the execution of \textsc{OptimalOmissionsConsensus} have the same consensus value (and this set is non-empty, assuming that the adversary can crash a small fraction of processes). Then, the flooding procedure on the random graph guarantees that the consensus value is propagated to all processes that are operative. At this point, there is only one value in the system, so it can be expected that this value will 
remain the consensus decision regardless of the adversarial strategy. Since in the above scenario there is a $O(1/n)$ chance that the operative processes fail to reach consensus, after the round-robin stage ends, the algorithm deploys a safety rule, cf.~lines~\ref{line:sft-starts-param}-\ref{line:fixing-und-param}. Specifically, the operative processes count the number of $1$'s and $0$'s among themselves and in the rare case that both these numbers are close to each other, they execute the deterministic protocol from~\cite{DolevR85}.

\noindent Below we present the theorem summarizing the \textsc{ParamOmissions} algorithm.

\begin{theorem}\label{thm:alg-tradeoff}
The algorithm $\textsc{ParamOmissionsConsensus}$ solves consensus (with probability $1$) against the adaptive adversary that can control $t < \frac{n}{60}$ processes. 
For any integer parameter $x \in [n]$, it 
achieves the following bounds, with high probability: it terminates in $\logO\left(\sqrt{xn}\right)$ rounds, uses $\logO \left(n\sqrt{\frac{n}{x}}\right)$ random bits and $ \logO\left(n\cdot \frac{n}{x} + xn+ n^2\right)$ bits of communication.
\end{theorem}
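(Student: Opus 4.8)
The plan is to split correctness into three pieces --- the existence of one ``good'' round-robin phase, the preservation of agreement through all later phases, and the upgrade from ``whp'' to ``probability $1$'' by the safety rule --- and to obtain the three complexity bounds by summing per-phase costs, treating Theorem~\ref{thm:main-algo} as a black box applied to groups of size $\Theta(n/x)$. As a preliminary I would fix the global \emph{robust core} $A$: apply Lemma~\ref{lem:compact-comp} to the predetermined graph $G$ (from Theorem~\ref{thm:random-graph-properties}) and to the set $F$ of all processes ever corrupted, noting $|F|\le t<n/60<n/15$; this gives $|A|\ge n-\tfrac43|F|>\tfrac{44}{45}n$ with $A\cap F=\emptyset$ and every node of $A$ having at least $\Delta/3$ neighbours inside $A$. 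Since the flooding loop keeps a process operative only while it receives at least $\Delta/3$ messages and the links internal to $A$ are reliable, every process of $A$ stays operative for the whole execution. A counting argument over the fixed partition $SP_1,\dots,SP_x$ then produces an index $i^\star$ with $|SP_{i^\star}\cap A|\ge\tfrac{29}{30}|SP_{i^\star}|$ (this is where the constant $1/60$, rather than the $1/30$ of Theorem~\ref{thm:main-algo}, buys the slack, since it forces $|A^c|<n/45<\lceil n/x\rceil\cdot\tfrac{1}{30}^{-1}$-many ``bad'' groups, i.e.\ fewer than $x$). In particular $SP_{i^\star}$ has fewer than $\tfrac{1}{30}|SP_{i^\star}|$ faulty processes, so Theorem~\ref{thm:main-algo} applies to the phase-$i^\star$ call with $n':=|SP_{i^\star}|=\Theta(n/x)$: whp it finishes within the allotted $\logO(\sqrt{n/x})$ rounds, and by Lemma~\ref{lem:good-proc-are-large} and Lemma~\ref{lem:deciding-1} a constant fraction of $SP_{i^\star}$, a set $S^\star$ that necessarily overlaps $SP_{i^\star}\cap A$, decides a common value $v$ and returns it.

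Next I would show that after phase $i^\star$ every operative process holds $b_p=v$. The processes in $S^\star$ lie in $A$, hence are operative with $\texttt{consensus\_decision}=v$, and act as flooding sources; by the reachability argument inside the proof of Lemma~\ref{lem:operative-contribution} (operative processes possess $(\Theta(\log n),\Delta/3)$-dense neighbourhoods, which by Lemma~\ref{lemma:sparse-to-dense} grow to linear size and by the expansion of $G$ pairwise connect within $O(\log n)$ rounds, exactly as in Lemma~\ref{lem:spreading-reaching}), every operative process learns $v$ during the $\Theta(\log n)$ flooding rounds of phase $i^\star$; inoperative processes stay silent and the only values ever present are $v$ and $\perp$, so no operative process can adopt a conflicting value. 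This invariant is preserved in each later phase $j$: the inner call on $SP_j$ is invoked with all its non-faulty operative participants holding input $v$, so by the validity that holds with probability $1$ in Theorem~\ref{thm:main-algo} it returns $v$ using no randomness, and the ensuing flooding can only re-spread $v$. Hence after the round-robin stage, whp all operative processes hold $v$; then, by the estimates of Lemma~\ref{lem:deciding-1} (at least $n-3t\ge\tfrac{9}{10}n$ processes are operative and the counts reported by distinct operative processes differ by at most the number of inoperative ones), whp every operative process sets $\texttt{decided}=true$ at line~\ref{line:if-1-r-param}/\ref{line:if-0-r-param}, broadcasts $v$, and every non-faulty process decides $v$ at line~\ref{line:if-decided-param}. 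In the complementary $\cO(1/n)$ event, line~\ref{line:spread-param} runs the deterministic protocol of~\cite{DolevS83} (or~\cite{DolevR85}), which terminates in $\cO(t)$ rounds; since any decided value originates from an operative $b_p=v$ and the deterministic protocol's validity forces its output to $v$ as well, agreement holds with probability $1$, while validity is immediate because a uniform input $b$ keeps $v=b$ and never triggers a random call.

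For the complexity I would sum per phase, using Theorem~\ref{thm:main-algo} with $n'=\Theta(n/x)$, $t'<n'/30$: the inner call costs $\logO(\sqrt{n/x})$ rounds, $\logO((n/x)^{3/2})$ random bits and $\logO((n/x)^2)$ communication bits, and the flooding adds $\Theta(\log n)$ rounds of $\cO(1)$-bit messages along the $\cO(n\log n)$ edges of $G$. Over the $x$ phases this yields $\logO(x\sqrt{n/x})=\logO(\sqrt{xn})$ rounds, $\logO(x(n/x)^{3/2})=\logO(n\sqrt{n/x})$ random bits, and $\logO(x(n/x)^2)+\logO(xn)=\logO(n\cdot n/x+xn)$ communication bits; the remaining $\cO(n^2)$ communication term is the all-to-all broadcasts of the safety rule, and the deterministic fallback (an extra $\cO(t)$ rounds and $\cO(n^3)$ bits) fires only with probability $\cO(1/n)$ and so affects none of the whp bounds; at $x=1$ one recovers Theorem~\ref{thm:omissions-opt-res}. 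The step I expect to be the main obstacle is the invariant-preservation argument: making rigorous that the round-robin re-introduction of groups --- some of whose members may be faulty, or non-faulty but inoperative carrying stale values --- can never drive the operative processes away from $v$, and that in phase $i^\star$ the flooding genuinely reaches every operative process. This demands simultaneous bookkeeping of three overlapping partitions (faulty vs.\ non-faulty, operative vs.\ inoperative, and the groups $SP_i$), and it is exactly where the bound $t<n/60$ must be tuned so that the counting argument delivers a group with a $\tfrac{29}{30}$-core and enough processes remain permanently operative for agreement to stick.
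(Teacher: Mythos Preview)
Your proposal is correct and follows essentially the same route as the paper: locate a ``reliable'' super-process $SP_{i^\star}$ with at most a $\tfrac{1}{30}$-fraction of faulty members and at least one permanently operative member, apply Theorem~\ref{thm:main-algo} to the truncated inner call there, propagate the resulting value $v$ to all operative processes via the $\Theta(\log n)$-round flooding (Lemma~\ref{lem:spreading-reaching}), preserve the invariant through later phases by validity of the inner protocol, finish via the safety rule, and sum per-phase costs. The only notable difference is that you identify $SP_{i^\star}$ by first fixing the robust core $A$ directly from Lemma~\ref{lem:compact-comp} and then pigeonholing over groups, whereas the paper separates this into Lemmas~\ref{lem:operative-param} and~\ref{lem:reliable-sp}; your packaging is slightly cleaner but yields the same object, and the ``main obstacle'' you flag (invariant preservation across phases) is handled in the paper exactly by the monotonicity of the operative status that you already invoke.
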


\noindent 
Note, that if we denote by $R$ the number of random  bits used by the algorithm, then by rearranging the remaining terms in the above theorem accordingly, we get the formulation of Theorem~\ref{thm:trade-off-res}.

\noindent \textbf{Analysis.}
Processes can become inoperative only by receiving less than $\Delta / 3$ messages from their neighbors in the graph $G$. This is a milder requirement compared to the one used in the algorithm \textsc{OptimalOmissionsConsensus}, where not only the process has to receive $\log{n}$ messages in the graph $G$, but it also has to remain operative during other sub-protocols. Also, we allow at most $\frac{n}{40}$ failures which are twice as small as in the proof of Theorem~\ref{thm:main-algo}, thus the following lemma is a consequence of Lemma~\ref{lem:good-proc-are-large} proven in Section~\ref{subsec:analysis-main}.
\begin{lemma}\label{lem:operative-param}
With high probability, at least $n - 3t \ge \frac{57n}{60}$ processes stay operative throughout the execution of the algorithm.
\end{lemma}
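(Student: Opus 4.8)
The plan is to exhibit, for every adversary controlling at most $t<\frac{n}{60}$ processes, an explicit set $A$ of non-faulty processes with $|A|\ge n-3t$ that never turns inoperative, and then to read off the numerical bound. First I would pin down the only mechanism by which $\texttt{operative}_p$ can become $false$ in \textsc{ParamOmissions}: it happens solely inside the $2\log n$-round flooding loop of a round-robin phase, and only when $p$ receives fewer than $\Delta/3$ messages along edges of the predetermined graph $G$. Neither the internal call to \textsc{OptimalOmissionsConsensus} (which returns only the updated candidate $b_p$) nor the dissemination of $b_p$ to all of $\cP$ nor the safety-rule block alters $\texttt{operative}_p$, so it is enough to find many non-faulty processes that are guaranteed to hear from at least $\Delta/3$ of their $G$-neighbours in every such round.

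To construct $A$, I would invoke Lemma~\ref{lem:compact-comp} with $T$ equal to the set $F$ of processes the adversary ever corrupts; since $|F|\le t<\frac{n}{60}\le\frac{n}{15}$, the lemma applies to the fixed graph $G$, which satisfies the hypotheses of Theorem~\ref{thm:random-graph-properties} (this holds with high probability for the random construction underlying $G$, and accounts for the ``with high probability'' in the statement; once such a $G$ is committed to at the start of the execution everything below is deterministic). This gives a set $A$ with $A\cap F=\emptyset$, with $|A|\ge n-\frac{4}{3}|F|\ge n-\frac{4}{3}t$, and with every vertex of $A$ having at least $\frac{\Delta}{3}$ neighbours inside $A$ in $G$. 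Because every edge of $G$ internal to $A$ joins two non-faulty processes, and the adversary may omit messages only on links incident to faulty processes, each $p\in A$ indeed receives at least $\frac{\Delta}{3}$ messages in every flooding round, across all $x$ phases; hence $p$ never executes $\texttt{operative}_p\leftarrow false$ and stays operative for the whole execution. Finally I would compare sizes: $\frac{4}{3}t\le 3t$, so $|A|\ge n-3t$, and $t<\frac{n}{60}$ forces $n-3t>n-\frac{n}{20}=\frac{57n}{60}$, completing the proof. In effect this is the graph-theoretic kernel of Lemma~\ref{lem:good-proc-are-large}, stripped of the intra-group bookkeeping of \textsc{GroupBitsAggregation}, which is irrelevant here since the outer operative status depends only on the $G$-flooding.

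\textbf{Main obstacle.} I do not expect a real difficulty — the statement is essentially a corollary of Lemma~\ref{lem:compact-comp}. The points that must be handled carefully are: confirming that no component of \textsc{ParamOmissions} other than the flooding loop can demote a process (in particular that the black-box inner consensus call and the safety-rule block leave $\texttt{operative}_p$ intact); confirming that the failure budget $t<\frac{n}{60}$ keeps $|F|$ in the admissible range $\le\frac{n}{15}$ required by Lemma~\ref{lem:compact-comp}; and keeping the constants straight so that $A$ genuinely meets the stated size $n-3t\ge\frac{57n}{60}$.
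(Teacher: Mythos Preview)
Your proposal is correct and follows essentially the same route as the paper. The paper does not give a self-contained proof but simply notes that in \textsc{ParamOmissions} the only way to lose operative status is to receive fewer than $\Delta/3$ messages in the $G$-flooding, which is a strictly milder condition than in \textsc{OptimalOmissionsConsensus}, and then cites Lemma~\ref{lem:good-proc-are-large}. Your write-up is precisely the unpacking of that citation: you apply Lemma~\ref{lem:compact-comp} directly with $T=F$, obtain the set $A$, and verify by induction over flooding rounds that every $p\in A$ keeps receiving from its $\ge\Delta/3$ non-faulty $A$-neighbours and hence never flips $\texttt{operative}_p$. Your closing remark that this is ``the graph-theoretic kernel of Lemma~\ref{lem:good-proc-are-large}, stripped of the intra-group bookkeeping'' is exactly right and is also what the paper's one-line justification is saying.
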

We say that a super-process $SP_{i}$ is \textit{reliable} if at least $\frac{29}{30}\frac{n}{x}$ processes belonging to it remain \textit{non-faulty} to the end of the execution and at least one of these process is \textit{operative} throughout the execution.
\begin{lemma}\label{lem:reliable-sp}
In any execution, there is at least one reliable super-process with high probability.
\end{lemma}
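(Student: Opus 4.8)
The plan is a short counting argument layered on top of Lemma~\ref{lem:operative-param}. The first step is to extract from Lemma~\ref{lem:operative-param} — or, more precisely, from the set $X'$ built in the proof of Lemma~\ref{lem:good-proc-are-large}, which is chosen to consist only of \emph{non-faulty} processes whose mutual links are never omitted — a set $Y\subseteq\cP$ with $|Y|\ge n-3t$ such that, with high probability, every process in $Y$ is non-faulty and remains operative throughout the entire run of \textsc{ParamOmissions}. From here on everything is deterministic conditioned on the graph $G$ having the properties of Theorem~\ref{thm:random-graph-properties} and on the event $|Y|\ge n-3t$, so the ``whp'' in the statement is inherited directly.

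Next I would classify the super-processes as bad in two ways. Call $SP_i$ \emph{fault-heavy} if more than $\frac{1}{30}\cdot\frac{n}{x}$ of its members are faulty, and \emph{$Y$-avoiding} if $SP_i\cap Y=\emptyset$. Since at most $t<\frac{n}{60}$ processes are faulty in total, there are fewer than $\frac{t}{(1/30)(n/x)}=\frac{30tx}{n}<\frac{x}{2}$ fault-heavy super-processes. Since $|\cP\setminus Y|\le 3t<\frac{n}{20}$ and each super-process has at least $\floor{n/x}$ members (we may take $x\mid n$, the rounding being absorbed into the constant slack), there are fewer than $\frac{3t}{n/x}=\frac{3tx}{n}<\frac{x}{20}$ $Y$-avoiding super-processes. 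Hence at most $\frac{x}{2}+\frac{x}{20}<x$ super-processes are bad, so some super-process $SP_{i^\star}$ is neither fault-heavy nor $Y$-avoiding.

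Finally I would verify that $SP_{i^\star}$ is reliable. Not being fault-heavy, it has at least $\frac{n}{x}-\frac{1}{30}\cdot\frac{n}{x}=\frac{29}{30}\cdot\frac{n}{x}$ non-faulty members. Not being $Y$-avoiding, it contains some $p^\star\in Y$, which by construction is non-faulty and operative throughout the execution; thus $p^\star$ is one of those non-faulty members and it witnesses the second clause of the definition of a reliable super-process. This closes the proof.

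The only step that is not completely routine is the extraction of $Y$: Lemma~\ref{lem:operative-param} as stated speaks of ``operative'' processes, not ``non-faulty and operative'' ones. The resolution is exactly the device used in the proof of Lemma~\ref{lem:good-proc-are-large} — intersect the surviving-operative subgraph with the set of non-faulty processes, whose pairwise links are never omitted, so each such process retains degree $\ge\Delta/3$ in the surviving subgraph and hence keeps its operative status for the whole execution. Once $Y$ is in hand, the lemma is two applications of pigeonhole, with the constants comfortably met under $t<\frac{n}{60}$, and I anticipate no further obstacle.
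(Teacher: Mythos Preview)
Your proposal is correct and is essentially the paper's own argument: a double pigeonhole on the super-processes, playing the at-most-$t$ faulty processes against the $n-3t$ always-operative ones guaranteed by Lemma~\ref{lem:operative-param}. The only cosmetic difference is that the paper intersects the always-operative set directly with the non-faulty processes lying in fault-light super-processes (using $\tfrac{57}{60}+\tfrac{29}{60}>1$), so it never needs to reach back into the proof of Lemma~\ref{lem:good-proc-are-large} to extract your set $Y$; the effect is identical.
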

\begin{proof}
Since there are at most $\frac{n}{60}$ faulty processes, thus by the pigeonhole principle there are at least $\frac{x}{2}$ super-processes that contain at most $\frac{1}{20}\frac{n}{x}$ faulty processes. Consequently, at least $\frac{1}{2}\frac{29}{30} \cdot n$ processes belonging to these super-processes remain non-faulty throughout the entire execution. By Lemma~\ref{lem:operative-param}, there is another set of at least $n - 3t \ge \frac{57n}{60}$ processes that are always operative. Since $\frac{57}{60} + \frac{1}{2}\frac{29}{30} > 1$, thus these two sets have a non-empty intersection, proving that there is a super-processes that has at most $\frac{1}{30}\frac{n}{x}$ faulty members and among the non-faulty members at least one is operative to the end of the algorithm.
\end{proof}

Next, we recall a property of the flooding algorithm (e.g. Lemma~\ref{lem:operative-contribution}) used in \textsc{ParamOmissions} that was first used to prove Theorem~\ref{thm:main-algo}. 
Note, that that the flooding is designed in the same way as the \textsc{GroupBitsSpreading} algorithm used in \textsc{OptimalOmissionsConsensus} (cf. Algorithm~\ref{alg:bits-spread}), only the content of relayed messages is different. Therefore, by repeating the reasoning used in Lemma~\ref{lem:spreading-reaching}, we obtain the following.
\begin{lemma}\label{lem:spreading-param}
For any two processes, $p, q$ that are operative at the end of a phase $i$, the consensus decision of $p$ (if different than $\perp$) has been relayed to $q$ in the flooding part of the phase $i$ and vice-versa, the consensus decision of $q$ (if different than $\perp$) has been relayed to $p$.
\end{lemma}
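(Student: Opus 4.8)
The plan is to prove this by reduction to the analysis of \textsc{GroupBitsSpreading} carried out in Section~\ref{subsec:analysis-main}, observing that the flooding loop of a phase of \textsc{ParamOmissions} is, up to the payload carried in messages, literally the same protocol as \textsc{GroupBitsSpreading} (Algorithm~\ref{alg:bits-spread}). Indeed, in both protocols: communication occurs only along edges of the predetermined graph $G$ guaranteed by Theorem~\ref{thm:random-graph-properties}; a process permanently disregards any neighbour that once fails to deliver a message; and a process becomes (permanently) inoperative exactly when it receives fewer than $\Delta/3$ messages in a single round. Hence, fixing a phase $i$ and letting $\texttt{OP}_j$ be the set of processes still operative after round $j$ of the flooding and $G^{\texttt{OP}}_j$ the subgraph on $\texttt{OP}_j$ whose edges are the links on which messages were received in round $j$, the family $\{G^{\texttt{OP}}_j\}_j$ is downward monotone in $j$ and any link operative at round $j$ carried messages in both directions in rounds $1,\dots,j-1$. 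This is precisely the combinatorial setup underlying Lemmas~\ref{lem:operative-dense-neigh} and~\ref{lem:spreading-reaching}, so those arguments apply here without change; I would simply invoke them.

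Concretely, I would first apply Lemma~\ref{lem:operative-dense-neigh}: each of the processes $p,q$ that is operative at the end of the $\Theta(\log n)$-round flooding possesses a $(\Theta(\log n),\Delta/3)$-dense-neighborhood inside some $G^{\texttt{OP}}_j$ with $j=\Theta(\log n)$, which is in particular a subgraph of $G$. Since $G$ is $(n/10,\Delta/15)$-edge-sparse by Theorem~\ref{thm:random-graph-properties}, Lemma~\ref{lemma:sparse-to-dense} lower-bounds the size of such a dense-neighborhood by $\min\{2^{\Theta(\log n)},n/10\}=n/10$; call these sets $S_p,S_q\subseteq V$. Because the adversary controls only $t<n/60$ processes, each of $S_p,S_q$ still contains at least $n/10$ non-faulty processes, and the $(n/10)$-expansion property of $G$ yields a non-faulty $p'\in S_p$ adjacent in $G$ to a non-faulty $q'\in S_q$. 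Stitching the three segments together — $p$ reaches $p'$ within its dense-neighborhood in $O(\log n)$ rounds of the flooding, the reliable edge $p'q'$ is used in the next round, and $q'$ reaches $q$ symmetrically in $O(\log n)$ further rounds — shows that $p$ can \emph{reach} $q$ (in the sense of Lemma~\ref{lem:spreading-reaching}) within the number of rounds for which the flooding loop runs, and symmetrically that $q$ can reach $p$.

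Finally, I would convert a reaching path into the asserted propagation of the consensus decision, exactly as in the proof of Lemma~\ref{lem:operative-contribution}: along a witnessing path $p=s_1,\dots,s_k=q$, process $s_{\tau+1}$ receives from $s_\tau$ in round $\tau$ and, by the update rule of the flooding loop, merges the received $\texttt{consensus\_decision}$ set into its own and keeps re-transmitting it in every subsequent round while it remains operative; an induction on $\tau$ then gives that $q$'s final $\texttt{consensus\_decision}$ contains the (non-$\perp$) value $p$ held at the start of the flooding, and symmetrically for $q$'s value reaching $p$. The only point that genuinely requires care — the main obstacle — is verifying that the fixed round budget of the flooding loop in \textsc{ParamOmissions} is large enough to accommodate two dense-neighborhood traversals plus the bridging edge, i.e.\ that the depth parameter fed to Lemmas~\ref{lem:operative-dense-neigh} and~\ref{lemma:sparse-to-dense} can be taken as a constant times the loop length while still keeping each dense-neighborhood of size $\ge n/10$; this holds because $2^{\gamma}$ reaches $n/10$ already for $\gamma=\Theta(\log n)$, so $\Theta(\log n)$ flooding rounds suffice and the statement follows.
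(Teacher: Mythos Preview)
Your proposal is correct and takes essentially the same approach as the paper: the paper's own justification for this lemma is simply the remark preceding it, namely that the flooding loop of \textsc{ParamOmissions} is structurally identical to \textsc{GroupBitsSpreading} (only the payload differs), so ``by repeating the reasoning used in Lemma~\ref{lem:spreading-reaching}'' the claim follows. You carry out exactly this reduction, and in fact you are more explicit than the paper --- you unpack the dense-neighborhood/expansion argument and note the one point that needs checking, that the $\Theta(\log n)$ round budget of the flooding loop is enough for two dense-neighborhood traversals plus the bridging edge.
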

Finally, we observe that the same arguments used for proving the correctness of the stopping rule in the algorithm~\textsc{OptimalOmissionsConsensus} apply to the stopping rule used in line~\ref{line:sft-starts-param}-~\ref{line:fixing-und-param} of the \textsc{ParamOmissions} algorithm. Thus we get the following.

\begin{lemma}\label{lem:deciding-1-param}
With probability $1$, all non-faulty processes decide and the decision is on the same value.
\end{lemma}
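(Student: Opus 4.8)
The plan is to follow the proof of Lemma~\ref{lem:deciding-1} almost verbatim, using that the safety rule in lines~\ref{line:sft-starts-param}--\ref{line:fixing-und-param} of \textsc{ParamOmissions} has the same shape as the one in \textsc{OptimalOmissionsConsensus}. The only substitutions are that the bound ``at least $n-3t$ perpetually operative processes'' now comes from Lemma~\ref{lem:operative-param} under the stronger hypothesis $t<\frac{n}{60}$, and that the deterministic fallback of line~\ref{line:spread-param} is the protocol of~\cite{DolevR85}, whose validity, agreement and termination we invoke as a black box (these apply since the $\ge n-3t$ participating operative processes contain at most $t<\frac{n-3t}{2}$ faulty ones). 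I would prove termination and agreement separately; nothing about the round-robin stage is needed here, it only matters for the high-probability bounds of Theorem~\ref{thm:alg-tradeoff}.

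\emph{Termination.} Let $O$ be the set of processes that are operative when the round-robin stage ends; by Lemma~\ref{lem:operative-param}, $|O|\ge n-3t\ge 2t$. Partition $O$ into the class $D$ with $\texttt{decided}=true$ and the class $U$ with $\texttt{decided}=false$; since $|O|\ge 2t$, at least one of $D,U$ has size greater than $t$ and hence contains a non-faulty process. If $|D|>t$, a non-faulty process in $D$ broadcasts its bit in line~\ref{line:good-spread-param}, so every non-faulty process outside $O$ receives a value and decides in line~\ref{line:if-decided-param}; the non-faulty processes of $O\cap D$ decide in line~\ref{line:if-decided-param}, and those of $O\cap U$ decide via the deterministic protocol of line~\ref{line:spread-param}. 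If instead $|U|>t$, the non-faulty processes of $U$ run the deterministic protocol, reach a decision and propagate it; since at least one propagator is non-faulty, every still-undecided non-faulty process receives a decision in line~\ref{line:final-receiv-para-2}. In every case all non-faulty processes decide.

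\emph{Agreement.} The key claim is that if some operative process $p$ has $\texttt{decided}_p=true$ after the counting step, then every operative process holds the same bit $b$ as $p$. Suppose $p$ set the flag via line~\ref{line:if-1-r-param}, i.e.\ $\texttt{ones}_p>\frac{27}{30}(\texttt{ones}_p+\texttt{zeros}_p)$ (line~\ref{line:if-0-r-param} is symmetric). Because the counts of line~\ref{line:count} are gathered from the all-to-all broadcast of line~\ref{line:operative-dis-param}, the bit of every non-faulty operative process is tallied identically by all operative processes, so the counts at any two operative processes differ by at most the number of faulty operative processes, at most $t$. Together with $\texttt{ones}_p+\texttt{zeros}_p\ge n-4t$ and $t<\frac{n}{60}$, a direct computation gives $\texttt{ones}_q/(\texttt{ones}_q+\texttt{zeros}_q)\ge\frac{18}{30}$ for every operative $q$, so $q$ executes line~\ref{line:if-1-param} and sets $b\leftarrow 1=b_p$. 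Inspecting lines~\ref{line:good-spread-param}--\ref{line:fixing-und-param}, every decision taken by a non-faulty process is either copied from an operative process or output by the deterministic protocol of line~\ref{line:spread-param}; as that protocol is fed the common input $b$ by all operative processes, its validity forces output $b$, so a single value is propagated as the decision. If no operative process has $\texttt{decided}=true$, no process decides before line~\ref{line:spread-param}, and the agreement property of~\cite{DolevR85} yields one common decision value, which is then propagated to everyone.

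The main obstacle is the threshold bookkeeping in the agreement argument: one must check that the slack of at most $t$ between the counts at different operative processes never puts two of them on opposite sides of the thresholds $\frac{15}{30}$ and $\frac{18}{30}$, and that a ratio above $\frac{27}{30}$ at $p$ cannot drop below $\frac{18}{30}$ anywhere else. Reducing to an inequality of the form $\texttt{ones}_p+\texttt{zeros}_p\ge\frac{16}{3}t$ and using $\texttt{ones}_p+\texttt{zeros}_p\ge n-4t$ with $t<\frac{n}{60}$ closes this with room to spare, exactly as in Lemma~\ref{lem:deciding-1}; the remaining steps (pinning down the exact constant multiplying $t$ in the count discrepancy, and the routine case split on which safety-rule line fires) are pure bookkeeping.
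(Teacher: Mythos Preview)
Your approach is exactly the paper's: the paper proves this lemma in one line, observing that ``the same arguments used for proving the correctness of the stopping rule in \textsc{OptimalOmissionsConsensus} apply,'' and you have unpacked those arguments with the natural substitutions (Lemma~\ref{lem:operative-param} in place of Lemma~\ref{lem:good-proc-are-large}, the bound $t<n/60$, and the deterministic fallback of~\cite{DolevR85}).

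There is, however, a step in your agreement argument that does not transfer from Lemma~\ref{lem:deciding-1} as cleanly as you assert. You write that ``the bit of every non-faulty operative process is tallied identically by all operative processes, so the counts at any two operative processes differ by at most~$t$.'' This holds between two \emph{non-faulty} operative processes, but not when one of them, say $q$, is faulty yet operative: the adversary may omit any subset of $q$'s \emph{incoming} messages in the broadcast of line~\ref{line:operative-dis-param}, so $q$'s $\texttt{ones}_q,\texttt{zeros}_q$ can be arbitrarily skewed (for instance, $q$ can be made to receive from a single minority-bit holder and then set $\texttt{decided}_q=true$ on that bit); likewise your lower bound $\texttt{ones}_p+\texttt{zeros}_p\ge n-4t$ need not hold for such $q$. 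In Lemma~\ref{lem:deciding-1} this cannot happen, because the counts there are produced by \textsc{GroupBitsAggregation} and \textsc{GroupBitsSpreading}, and Lemma~\ref{lem:operative-contribution} guarantees that \emph{every} operative process's bit is tallied by \emph{every} other operative process---precisely because retaining operative status in those procedures is itself tied to receiving enough messages. The safety rule of \textsc{ParamOmissions} imposes no receipt threshold on lines~\ref{line:operative-dis-param}--\ref{line:count}, so the analogue of Lemma~\ref{lem:operative-contribution} is not available. Once such a faulty $q$ broadcasts the wrong bit in line~\ref{line:good-spread-param}, a non-faulty inoperative receiver in line~\ref{line:final-receiv-param} may select it (``$q$ can be chosen arbitrarily''), breaking agreement. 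The paper's own one-line deferral does not confront this either, so your reconstruction is faithful to what the paper writes; but the discrepancy bound you state is the one place where the two safety rules genuinely differ and is, as written, unjustified.
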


Next, we proceed to the proof of the main theorem of the section.
\begin{proof}[Proof of Theorem~\ref{thm:alg-tradeoff}]
The 
agreement and
termination follow from Lemma~\ref{lem:deciding-1-param}. The validity follows from the fact that the truncated version of \textsc{OptimalOmissionsConsensus} also satisfies validity, cf. the proof of Theorem~\ref{thm:main-algo}. 

Next, we analyze the complexity measures. Denote $SP_{i}$ the reliable super-process which, by Lemma~\ref{lem:reliable-sp}, exists with high probability. Since at least $\frac{29}{30}\frac{n}{x}$ processes belonging to $SP_{i}$ remain non-faulty, we can apply Theorem~\ref{thm:main-algo} and conclude that the truncated execution of the algorithm \textsc{OptimalOmissionsConsensus} on the group $SP_{i}$ in the appropriate phase of the round-robin scheme results in all non-faulty processes of $SP_{i}$ having the same current consensus decision $b$ with high probability. 
By the reliability of the super-process, among these non-faulty processes there exists at least one operative process. Thus, in the flooding part of the round-robin phase $i$, all other operative processes acquire the output of the \textsc{OptimalOmissionsConsensus} from the beginning of the phase, by Lemma~\ref{lem:spreading-param}. Therefore, after the flooding ends, all operative processes have the same value $b_{p}$, see line~\ref{line:param-acq-b}. 
Since in any phase, only operative processes execute the \textsc{OptimalOmissionsConsensus} algorithm (see line~\ref{line:idle-param}), each subsequent run of the algorithm has the property that all parties begin with the same input bit. 

We observe that in case when all processes start with the same input bit, processes that return a value at the end of the algorithm \textsc{OptimalOmissionsConsensus} can only return the value being the input bit. This follows from analyzing lines~\ref{line:if-1}-\ref{line:main-for-end} in the algorithm \textsc{OptimalOmissionsConsensus} (Algorithm~\ref{alg:opt-omissions}). If there is only one input value in the system, no process will access the random coin and only the input value will be manipulated to the end of the algorithm. It follows that after the round-robin phase performed on the reliable super-process $SP_{i}$, all operative processes have the same value in the variable $b_{p}$. In this case, in line~\ref{line:count} all operative processes receive only one bit and thus they always set the variable $\text{decided}$ to $true$ when executing conditional statements in lines~\ref{line:if-1-r-param}-\ref{line:if-0-r-param}. 
In consequence, all operative processes spread the same decision  in line~\ref{line:good-spread-param}. Since the number of operative processes is at least $\frac{57n}{60} > \frac{n}{60} = t$, every non-faulty process eventually receives a decision and decides on it in line~\ref{line:if-decided-param}. Observe also that in this case no operative process has its variable $\texttt{decided}$ set to false and thus the deterministic protocol in line~\ref{line:spread-param} will not be executed. 

The derivation of the complexity measures comes directly from conditioning on the event that all operative processes end having the variable $\texttt{decided}$ set to $true$, which from now on we assume.

As for the running time, every phase lasts $\logO \left(\sqrt{\frac{n}{x}} \right)$ rounds, because it consists of a single run of the truncated algorithm \textsc{OptimalOmissionsConsensus}, which time complexity can be easily derived from Theorem~\ref{thm:main-algo}, and of $2\log{n}$ rounds of flooding. There are $x$ phases and two additional rounds after the round-robin phases end, thus the bound on the time complexity follows with high probability.

We bound the communication bit complexity analogously. Each phase uses $O\left(\left(\frac{n}{x} \right)^2 \log^4 \left(\frac{n}{x} \right) + n \log^2{n} \right)$ communication bits. The first term corresponds to the execution of the algorithm \textsc{OptimalOmissionsConsensus} on a set of size $\frac{n}{x}$ (c.f., Theorem~\ref{thm:main-algo}), the latter corresponds to $4\log{n}$ rounds of flooding. Observe that in the flooding, in each round every process sends at most $\Theta(\log{n})$ messages of size $O(1)$. Multiplying the phase bit complexity by the number of phases and adding $O(n^2)$ term, corresponding to the last two rounds, explains the bound on the bit complexity of the whole algorithm and that it holds whp.

The random bits are spent only in the runs of \textsc{OptimalOmissionsConsensus}. There are at most $x$ independent runs of the algorithm, each on a set of processes of size $\frac{n}{x}$. Applying Theorem~\ref{thm:main-algo} yields that the total number of used random bits is $O\left(x \cdot \left(\frac{n}{x}\right)^{3/2} \log^2\left(\frac{n}{x}\right)\right) = \logO \left(n\sqrt{\frac{n}{x}}\right)$, which completes the proof of the theorem.
\end{proof}

\end{document}